\documentclass[11pt]{article}
\usepackage{fullpage}
\usepackage{amsfonts}
\usepackage{amssymb}
\usepackage{amstext}
\usepackage{amsmath}
\usepackage{amsthm}
\usepackage{thmtools}
\usepackage{graphicx}
\usepackage{graphics}
\usepackage{colordvi}
\usepackage{color}
\usepackage{paralist}
\usepackage{standalone}
\usepackage{bm}
\usepackage{todonotes}
\usepackage{mathtools}
\usepackage{thm-restate}
\usepackage{tikz}
\usepackage{comment}
\usepackage[ruled,vlined]{algorithm2e}

\usepackage{soul}

\usepackage{hyperref}
\hypersetup{
  pdftitle = {Hardness and Approximation for Coloring Digraphs},
  pdfauthor = {},
  colorlinks = true,
  linkcolor = black!30!red,
  citecolor = black!30!green
}
\usepackage{cleveref}

\newtheorem{theorem}{Theorem}[section]
\newtheorem{corollary}[theorem]{Corollary}
\newtheorem{observation}[theorem]{Observation}
\newtheorem{lemma}[theorem]{Lemma}
\newtheorem{claim}{Claim}
\newtheorem{definition}[theorem]{Definition}

\newtheorem{conjecture}{Conjecture}

\newtheorem{question}{Question}

\newcommand{\eps}{\varepsilon}

\title{Hardness and Approximation for Coloring Digraphs}
\author{}

\date{May 18, 2026}

\author{Parinya Chalermsook\thanks{University of Sheffield, United Kingdom} \and Harmender Gahlawat\thanks{Department of Mathematics, Indian Institute of Technology Dehli, India} \and Felix Klingelhoefer\thanks{G-SCOP, Grenoble-INP, France} \and Alantha Newman\thanks{CNRS and LIP, ENS Lyon, France} \and Chaoliang Tang\thanks{Fudan University, China}}

\newcommand{\ndecomp}{path decomposition\xspace}
\newcommand{\pt}{T}
\newcommand{\NFP}{N^+_{t}}
\newcommand{\NFM}{N^-_{t}}

\newcommand{\fl}{f_{l}}

\newenvironment{cproof}
{\begin{proof}
 [Proof.]
 \vspace{-1.5\parsep}
}
{ \end{proof}}

\newcommand{\eset}{{\mathcal E}}

\usepackage{complexity}

\newcommand{\chiv}{\vec\chi}
\newcommand{\alphav}{\vec\alpha}
\newcommand{\BG}{G_{\prec}}

\begin{document}

\maketitle

\begin{abstract}
The \textit{dichromatic number} $\chiv(D)$ of a digraph is the minimum
number $k$ such that $V(D)$ can be partitioned into $k$ subsets, each
inducing an acyclic digraph. The \textit{acyclic number} $\alphav(D)$
is the cardinality of a largest induced acyclic subdigraph of $D$.

We study these problems from an approximation point of view. We begin
with establishing that even when restricted to tournaments,
approximating $\chiv$ and $\alphav$ remain as challenging as their
undirected counterparts on general graphs. Specifically, we establish
that for every $\epsilon >0$, it is hard to approximate both $\alphav$
and $\chiv$ up to a factor of $n^{1-\epsilon}$ even when restricted to
tournaments.

We next consider approximate coloring of digraphs in special cases. We
begin with establishing that we can color $\ell$-dicolorable digraphs
using at most $\ell \cdot n^{1-\frac{1}{\ell}}$ colors in time
$O(n^{2\ell})$; in particular, we can color $2$-dicolorable digraphs
with $2\sqrt{n}$ colors in polynomial time.  We then focus on bounding
the dichromatic number of dense digraphs as a function of the
independence number $\alpha$ of the underlying graph.  We consider two
special cases in this regard: digraphs with $\chiv(D)\leq 2$ and
digraphs that do not contain any directed triangle.  For these cases,
we present algorithms which generalize and improve existing tools and
results.

\end{abstract}

\section{Introduction}

Let $D=(V,A)$ be a digraph with vertex set $V$ and arc set $A$.  The
minimum number of induced acyclic sets into which the vertices of $D$
can be partitioned is known as its {\em dichromatic
  number}~\cite{neumann1982dichromatic}.  We use $\chiv(D)$ to denote
the dichromatic number of $D$. If $\chiv(D) \leq k$, we say that $D$
is {\em $k$-dicolorable} or simply {\em $k$-colorable}.  The
dichromatic number of a digraph is related to the chromatic number of
a graph.  Consider a simple undirected graph $G$, and replace each
edge with a directed 2-cycle (also known as a {\em digon}) to obtain a
digraph $D$ on the same vertex set.  Now fix a $k$-coloring that is
proper in $G$ (i.e., the endpoints of each edge have different
colors).  Then each color class forms an independent set in $G$ and an
induced acyclic set in $D$.  Hence, the coloring problem in undirected
graphs and the dicoloring problem in directed graphs in which each
pair of vertices is either connected by a digon or not connected at
all are equivalent.  This directly implies that for a digraph $D$, the
problem of computing $\chiv(D)$ and the problem of computing a maximum
induced acyclic set, denoted by $\alphav(D)$, are \NP-hard.

However, if we consider an arbitrary directed graph or even an {\em
  oriented} graph (in which digons are not allowed), then the direct
equivalence between dicoloring a digraph and coloring a graph
described above no longer applies.  A well-studied class of oriented
graphs is {\em tournaments}.  While the underlying undirected graph is
complete and thus has chromatic number equal to the number of
vertices, the dichromatic number can be as low as one if the
orientation of the arcs results in an acyclic or transitive
tournament.

In the graph theory literature, the problem of proving upper bounds on
the dichromatic number of digraphs has been extensively
studied~\cite{mohar2010eigenvalues,berger2013tournaments,aboulker2024heroes,aboulker2024overrightarrow},
motivated by its connection to the Erd\H{o}s-Hajnal
conjecture~\cite{erdos1989ramsey,Chu14}.  This conjecture states that
for any graph $H$, there is an absolute constant $\gamma_H$ such that
an $H$-free graph $G$ has a clique or independent set of size
$n^{\gamma_H}$, where $n$ is the number of vertices in $G$.  For
example, by a famous algorithm of
Wigderson~\cite{wigderson1983improving}, a triangle-free graph has an
independent set of size at least $\sqrt{n}$.

This conjecture has an analogous formulation for tournaments, which
states that for any tournament $H$, there is an absolute constant
$\tau_H$ such that an $H$-free tournament $T$ has an induced acyclic
set of size at least $n^{\tau_H}$, where $n$ is the number of vertices
in $T$~\cite{alon2001ramsey}.  The conjecture has been proved for many
classes of $H$-free tournaments.  For example, {\em heroes} are the
set of all tournaments whose exclusion results in constant dichromatic
number and linear sized induced acyclic sets, and
\cite{berger2013tournaments} gave a complete description of all
heroes.  Another example is tournaments that excludes any particular
six vertex tournament~\cite{chudnovsky2024pure,nguyen2023induced}.

Much of the work in the graph theory arena is based on proving upper
bounds by establishing existential results.  There has also been some
investigation into this topic from the algorithmic or complexity
perspective.  As noted above, computing the value $\chiv(D)$ is
\NP-hard.  Moreover, it is \NP-hard to decide if an oriented graph is
2-colorable~\cite{bokal2004circular}, to decide if a tournament is
2-colorable~\cite{chen2007min} or to decide if a tournament is
$k$-colorable~\cite{fox2019removal}. In fact, it is \NP-hard to decide
if a digraph $D$ is $2$-colorable even if it contains two vertices
$u,v$ such that $D\setminus{\{u,v\}}$ is acyclic (i.e., $\alphav(D)
\geq V(D)-2$)~\cite{harutyunyan2024digraph}.  Recently, the complexity
of approximating the dichromatic number in tournaments was
studied~\cite{klingelhoefer2024coloring}.  For example, it was shown
that one can color a 2-colorable tournament with ten colors and that
it is \NP-hard to color a 2-colorable tournament with three colors.
It was also shown that coloring 3-colorable tournaments and
3-colorable graphs are very close in terms of their complexity.  They
also showed that it is \NP-hard to approximate the dichromatic number
of a tournament within a factor of $n^{1/2-\delta}$ for any $0 <
\delta < 1/2$, which is a weaker state of affairs than what is known
for graph coloring.

\subsection{Our Results}

We address the dicoloring problem in general digraphs from several
different angles.  In Section \ref{S:hardness}, we address the
hardness of approximation of dicoloring an oriented digraph.  We show
strong hardness by proving that even in the ``easiest'' setting of
tournaments, approximate dicoloring is as hard as approximate graph
coloring. Our main result (\Cref{thm: main-hardness}) shows that, for
every $\epsilon >0$, it is hard to distinguish between the tournaments
$T$ with $\alphav(T) \leq |V(T)|^{\epsilon}$ and those with $\chiv(T)
\leq |V(T)|^{\epsilon}$. This implies $n^{1-\epsilon}$ hardness of
approximation for both $\alphav$ (\Cref{C:hardness-large-Digraph}) and
$\chiv$ (Corollary \ref{C:hardness-coloring}).

This motivates the problem of coloring digraphs in special cases,
addressed in Section \ref{sec:algorithms}, where we address
algorithmic approaches.  A well-studied classic problem in graph
coloring is to color a 3-colorable graph with few
colors~\cite{wigderson1983improving,karger1998approximate,kawarabayashi2024better}.
A natural, analogous question in the setting of digraphs, which does
not appear to have been previously addressed, is: How many colors do
we need to color a 2-dicolorable digraph in polynomial time?  We show
that we can color a 2-dicolorable $n$-vertex digraph with at most
$2\sqrt{n}$ colors (Theorem \ref{T:2-approximate}).  We generalize
this, showing how to color an $\ell$-dicolorable digraph with at most
$\ell \cdot n^{1 - \frac{1}{\ell}}$ colors in time $O(n^{2\ell})$
(Theorem \ref{T:l-approximate}).

Next, we study digraphs with bounded independence number, which are
sometimes called ``dense'' digraphs.  The independence number of a
digraph is the size of the maximum independent set in the underlying
undirected graph.  For example, tournaments are a well-studied class
of digraphs and have independence number one.  Digraphs with
independence number at most $\alpha$ are dense since they contain
(roughly) at least $n^2/(2\alpha)$ arcs.  (To see this, apply Turan's
theorem to the complement of the underlying undirected graph, which is
$K_{\alpha}$-free.)

We consider classes with additional (natural) restrictions such as
those promised to be 2-dicolorable or $C_3$-free, where $C_3$ is a
directed triangle.  Notice that a $C_3$-free tournament is acyclic.
For these restricted classes, we can prove upper bounds on the
dichromatic number via polynomial-time algorithms, based on a
non-trivial extension of the path decomposition for tournaments used by \cite{klingelhoefer2024coloring}.  
For example, we show that 2-dicolorable digraphs can be
efficiently colored with $\frac{10}{3}(4^\alpha -1)$ colors (\Cref{thm:2-col_dig}), which in particular generalizes the result of \cite{klingelhoefer2024coloring} for tournaments.\footnote{We remark that these algorithms do not need to know a maximum independent set of the underlying graph nor its size; with a light re-engineering, on any input digraph $D$ and any fixed integer $\alpha$, the algorithm can be made to either return a coloring with the claimed bound or return an independent set of size $\alpha+1$.}\footnote{Some of this material has appeared in \cite{FelixThesis}.}
Then we show that $C_3$-free digraphs can be colored with at most $\frac{(\alpha + 8)!}{9!}$ colors (Theorem \ref{thm:c3-color}), which improves on the upper bound of $35^{\alpha-1}\alpha!$ due to \cite{harutyunyan2019coloring}.  We also pose a conjecture (Conjecture \ref{AlphaConj}) that would imply a positive resolution to a recently posed open problem~\cite{openBarbados}, and prove it for a non-trivial case.

\setcounter{section}{1}

\section{Hardness Results}\label{S:hardness}

Recall that $\vec{\alpha}(D)$ and $\chiv(D)$ denote the size of maximum acyclic set and dichromatic number of $D$, respectively. 
We prove the following result which implies the hardness of approximating both $\vec{\chi}$ and $\vec{\alpha}$. 
\begin{restatable}{theorem}{mainHardness}\label{thm: main-hardness}
For $\epsilon \in (0,1)$, there is no polynomial-time algorithm that, on input tournament $T$, can distinguish between the following two cases: 
\begin{itemize}
    \item (Completeness:) $\vec{\chi}(T) \leq n^{\epsilon}$ (which implies that $\vec{\alpha}(T) \geq n^{1-\epsilon}$)  

    \item (Soundness:) $\vec{\alpha}(T) \leq n^{\epsilon}$ (which implies that $\vec{\chi}(T) \geq n^{1-\epsilon}$).  
\end{itemize}
unless ${\sf NP} = {\sf RP}$. 
\end{restatable}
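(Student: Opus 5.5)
We reduce from the hardness of approximating independent set and chromatic number in graphs, using randomness to orient the non-edges. The starting point is the Feige--Kilian / Zuckerman style gap result: for every $\delta>0$ it is hard (unless ${\sf NP}={\sf RP}$, or even ${\sf NP}={\sf P}$ after Zuckerman's derandomization) to distinguish graphs $G$ on $N$ vertices with $\chi(G)\le N^{\delta}$ from graphs with $\alpha(G)\le N^{\delta}$. Given such a $G$, we build a tournament $T$ on $V(G)$ in two steps. First, fix a uniformly random total order $\pi$ of $V(G)$. Second, orient each non-edge $uv$ of $G$ forward along $\pi$ and each edge $uv$ of $G$ independently and uniformly at random. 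The intuition is that non-edges are ``transitive'' and never create cycles, while edges behave like a random tournament.

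\textbf{Completeness.} If $\chi(G)\le N^{\delta}$, each color class $I$ is independent in $G$. Hence all arcs inside $I$ follow $\pi$, so $T[I]$ is transitive. Therefore $\chiv(T)\le N^{\delta}$ deterministically.

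\textbf{Soundness.} Suppose $\alpha(G)\le N^{\delta}$, and let $S$ be a set of size $k$ inducing an acyclic subtournament. Then $T[S]$ is transitive with a unique order $\sigma$, and every edge of $G[S]$ must have received the orientation dictated by $\sigma$. Since the orientation of non-edges is fixed by $\pi$, the pair $(S,\sigma)$ is forced except on the edges. So for a fixed $S$, the probability that $T[S]$ is acyclic is at most $k!\cdot 2^{-e(G[S])}$. The bound $\alpha(G[S])\le N^\delta$ gives, by Turán, $e(G[S])\ge k^2/(2N^{\delta})-k/2$. A union bound over $\binom{N}{k}\le N^k$ sets yields failure probability at most $N^k k!\, 2^{-k^2/(2N^{\delta})+k/2}$. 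This is $o(1)$ once $k\ge C N^{\delta}\log N$ for a suitable constant $C$. Hence with high probability $\alphav(T)\le CN^{\delta}\log N\le N^{2\delta}$, and thus $\chiv(T)\ge N^{1-2\delta}$.

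Taking $\delta=\epsilon/2$ and $n=N$ gives exactly the two cases of the theorem. A polynomial algorithm distinguishing them, combined with this randomized reduction, yields an RP-type algorithm for the NP-hard gap problem. The one-sided error is on the correct side: completeness holds always, and soundness fails only with small probability. This gives ${\sf NP}={\sf RP}$.

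The main obstacle is the soundness union bound. The edge density inside $S$ must beat both the $\binom{N}{k}$ choices of $S$ and the $k!$ possible transitive orders. That is why the threshold is $N^{\delta}\log N$ rather than $N^\delta$; the slack is absorbed by using $2\delta$ in place of $\epsilon$. If one also wants exact gap parameters with respect to $n$, one can pad or blow up vertices, but the loss is only in the exponent and is absorbed in the same way. Note that the random order $\pi$ is not actually needed, since any fixed order works. Randomness is used only for the edge orientations, which is why the theorem is stated under ${\sf NP}\ne{\sf RP}$.
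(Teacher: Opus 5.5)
Your argument is correct, and it takes a genuinely different and much simpler route than the paper.

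Your tournament is exactly the paper's randomized product $D^{\sigma}_{\widehat G}[H]$ with $H$ a single vertex. You bound $\alphav(T)$ by a \emph{global} union bound over all $k$-sets $S$ and all $k!$ orders. Tur\'an's theorem guarantees that every such $S$ spans at least $k^2/(2\alpha(\widehat G))-k/2$ independently oriented edges. This gives $\alphav(T)=O(\alpha(\widehat G)\log N)$ with probability $1-o(1)$, while $\chiv(T)\le\chi(\widehat G)$ holds deterministically, all with $n=N$.

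The paper instead argues \emph{locally}. It only shows that, for each skeleton edge $uv$, no two subsets of size $|V(H)|^{\eta}$ in the clouds of $u$ and $v$ are jointly acyclic. That needs clouds at least as large as $\widehat G$, so that a union bound over the skeleton edges goes through. It also leaves an additive loss of $|V(G)|\,|V(H)|^{\eta}$ from small clouds in the product inequality. To make this additive $N^2$ term negligible, the paper iterates the product $k=\lceil 1/\gamma\rceil$ times. The result is a tournament on $N^k$ vertices, with success probability only $1-1/k$ and a looser completeness bound $\chi(\widehat G)^k|V(\widehat G)|$.

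What each route buys:
\begin{itemize}
\item Your route avoids both the blow-up and the product machinery. It pays only a $\log N$ factor, which the choice $\delta=\epsilon/2$ absorbs, so your padding remark is unnecessary.
\item The paper's route yields a general product inequality $\vec\alpha(G')\le\alpha(G)\,\vec\alpha(H)+|V(G)|\,|V(H)|^{\eta}$, which the authors suggest may be of independent interest.
\end{itemize}

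One small point of phrasing: your reduction errs only on NO instances, so strictly it places the NP-hard gap problem in $\mathsf{coRP}$ rather than $\mathsf{RP}$. The conclusion still follows, since $\mathsf{NP}\subseteq\mathsf{coRP}$ implies $\mathsf{coNP}\subseteq\mathsf{RP}\subseteq\mathsf{NP}$, hence $\mathsf{NP}=\mathsf{coNP}\subseteq\mathsf{RP}$. The paper's reduction has the same one-sided structure.
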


This implies the following corollaries:

\begin{restatable}{corollary}{hardnessLargeDigraph}\label{C:hardness-large-Digraph}
For $\epsilon \in (0,1)$, it is hard to approximate the maximum acyclic set of a tournament to within a factor of $n^{1-\epsilon}$ unless ${\sf NP} = {\sf RP}$.     
\end{restatable}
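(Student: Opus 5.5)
The corollary follows from \Cref{thm: main-hardness} by the standard gap argument. Fix $\epsilon \in (0,1)$ and suppose, for contradiction, that some polynomial-time algorithm $\mathcal{A}$ approximates $\alphav$ on tournaments within a factor of $n^{1-\epsilon}$. That is, on every $n$-vertex tournament $T$ it outputs a value $A(T)$ with
\[
\alphav(T)/n^{1-\epsilon} \le A(T) \le \alphav(T).
\]
If $\mathcal{A}$ instead returns an acyclic set, we take $A(T)$ to be its size. We will use $\mathcal{A}$ to decide the gap problem of \Cref{thm: main-hardness}, but with a smaller parameter $\delta$ in place of $\epsilon$. This parameter is chosen so that the ratio between the two cases exceeds $n^{1-\epsilon}$.

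\textbf{Choice of $\delta$.} Set $\delta = \epsilon/3$ and apply \Cref{thm: main-hardness} with this $\delta \in (0,1)$. The two cases then satisfy:
\begin{itemize}
\item In the completeness case, $\chiv(T) \le n^{\delta}$. Since some colour class has at least $n/\chiv(T)$ vertices and induces an acyclic set, $\alphav(T) \ge n^{1-\delta}$. Hence $A(T) \ge n^{1-\delta}/n^{1-\epsilon} = n^{\epsilon-\delta} = n^{2\epsilon/3}$.
\item In the soundness case, $A(T) \le \alphav(T) \le n^{\delta} = n^{\epsilon/3}$.
\end{itemize}
For $n$ large enough that $n^{2\epsilon/3} > n^{\epsilon/3}$, which holds for every $n \ge 2$, the test "$A(T) > n^{\epsilon/3}$" correctly separates the two cases. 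Instances of bounded size can be handled by brute force.

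\textbf{Conclusion.} This test gives a polynomial-time distinguisher for the gap problem of \Cref{thm: main-hardness} with parameter $\delta$. By the theorem, such a distinguisher exists only if $\NP = \RP$, so no such approximation algorithm exists unless $\NP = \RP$.

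The only point needing care is the parameter bookkeeping. The gap ratio in \Cref{thm: main-hardness} is $n^{1-2\delta}$, so we must pick $\delta$ small enough that $1-2\delta > 1-\epsilon$; this is why we invoke the theorem at $\delta = \epsilon/3$ rather than at $\epsilon$ itself. The remaining steps are routine.
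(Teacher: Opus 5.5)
Your proposal is correct and follows the same route as the paper, which only says the corollary is implied by the main hardness theorem and leaves the standard gap argument implicit. Invoking the theorem at $\delta=\epsilon/3$ makes the gap between $\vec{\alpha}(T)\ge n^{1-\delta}$ and $\vec{\alpha}(T)\le n^{\delta}$ strictly larger than $n^{1-\epsilon}$, so an $n^{1-\epsilon}$-approximation would decide the promise problem; that is exactly the intended deduction.
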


\begin{restatable}{corollary}{hardnessColoring}\label{C:hardness-coloring}
For $\epsilon \in (0,1)$, it is hard to approximate the dichromatic number of a tournament to within a factor of $n^{1-\epsilon}$ unless ${\sf NP} = {\sf RP}$.     
\end{restatable}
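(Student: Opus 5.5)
We prove the two corollaries together. The plan is to read the gap in \Cref{thm: main-hardness} as a factor of roughly $n^{1-2\epsilon}$ for both quantities, and then rescale $\epsilon$.

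\textbf{Acyclic number.} Fix $\epsilon\in(0,1)$ and set $\epsilon' = \epsilon/3$. Suppose, for contradiction, that some polynomial-time algorithm $\mathcal{A}$ approximates $\alphav$ on tournaments within a factor of $n^{1-\epsilon}$. Concretely, on every tournament $T$ on $n$ vertices, $\mathcal{A}$ outputs a value $a(T)$ with
\[
\alphav(T)/n^{1-\epsilon} \le a(T) \le \alphav(T).
\]
This is equivalent to returning an acyclic set of that size: acyclicity can be checked in polynomial time, so the set can be verified. We use $\mathcal{A}$ to separate the two cases of \Cref{thm: main-hardness} with parameter $\epsilon'$.
\begin{itemize}
\item In the completeness case, $\alphav(T) \ge n^{1-\epsilon'}$, so $a(T) \ge n^{\epsilon-\epsilon'} = n^{2\epsilon/3}$.
\item In the soundness case, $a(T) \le \alphav(T) \le n^{\epsilon'} = n^{\epsilon/3}$.
\end{itemize}
For $n$ larger than a constant depending only on $\epsilon$, we have $n^{2\epsilon/3} > n^{\epsilon/3}$. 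So comparing $a(T)$ with the threshold $n^{\epsilon/3}$ decides which case holds. Instances of bounded size are solved by brute force in constant time. This contradicts \Cref{thm: main-hardness} unless ${\sf NP}={\sf RP}$.

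\textbf{Dichromatic number.} The argument is symmetric. Suppose an algorithm returns $c(T)$ with $\chiv(T) \le c(T) \le n^{1-\epsilon}\chiv(T)$; it might, for instance, output an explicit dicoloring that we can verify. Take the same $\epsilon'=\epsilon/3$.
\begin{itemize}
\item In the completeness case, $\chiv(T) \le n^{\epsilon'}$, so $c(T) \le n^{1-\epsilon+\epsilon/3} = n^{1-2\epsilon/3}$.
\item In the soundness case, $c(T) \ge \chiv(T) \ge n/\alphav(T) \ge n^{1-\epsilon/3}$.
\end{itemize}
Since $n^{1-2\epsilon/3} < n^{1-\epsilon/3}$ for large $n$, thresholding $c(T)$ separates the cases. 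This again contradicts \Cref{thm: main-hardness}.

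\textbf{Remarks.} There is no real technical obstacle here. The only care needed is in two places. First, choose $\epsilon'$ strictly smaller than $\epsilon/2$, so that the approximation factor does not swallow the gap between the two cases. Second, handle small $n$ and the rounding of $n^{\epsilon'}$ by brute force. Randomization only enters through \Cref{thm: main-hardness} itself, so the conclusion is conditional on ${\sf NP}\neq{\sf RP}$.
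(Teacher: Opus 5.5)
Your proof is correct and follows the paper's approach. The paper gives no separate proof and just states that the corollary follows from the gap in \Cref{thm: main-hardness}. You fill in the implicit details: you use $\chiv(T)\ge n/\alphav(T)$ in the soundness case, and you apply the theorem with a parameter $\epsilon'<\epsilon/2$ (such as $\epsilon/3$), so that an $n^{1-\epsilon}$ approximation cannot bridge the gap between $n^{\epsilon'}$ and $n^{1-\epsilon'}$.
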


\paragraph{High-level overview:} Roughly speaking, the best known hardness reduction~\cite{feder2019complexity} (that gives a factor of $n^{1/2-\delta}$ hardness) combines a random bipartite graph with a hard instance of graph coloring, but their reduction blows up the size of the instance quadratically while maintaining the hardness factor. In particular, given a hard instance $G$ for chromatic number, they create a tournament $T$ of size roughly $|V(T)|\approx |V(G)|^2$, while preserving the hardness factor, i.e., $\chiv(T) \approx \chi(G)$. Since the hardness gap for the chromatic number is $|V(G)|^{1-\epsilon}$, we have that the hardness factor for the dichromatic number is $|V(G)|^{1-\epsilon} = |V(T)|^{1/2 -O(\epsilon)}$. Note that the loose factor is caused by the growth of the instance size while the hardness gap does not grow proportionally.  

To prove a tight hardness factor, we need a reduction that establishes a tighter relation. Our reduction takes hard instance $G$ for chromatic number and creates tournament $T'$ such that $|V(T')| \approx |V(G)|^{k}$ for $k = 1/\epsilon$; the instance size is, in fact, much larger than the reduction of~\cite{feder2019complexity}. However, in our case, the hardness factor grows proportionally with the size, i.e., the hardness gap is $\approx |V(G)|^{(1-\epsilon)k}$ which is simply a tight hardness of $|V(T')|^{1-\epsilon}$.    

Our reduction follows the high-level scheme of~\cite{chalermsook2013graph,chalermsook2014pre}, which uses the (lexicographic) graph product inequalities to tightly relate the optimal of the problem at hand to the stability number or the chromatic number. The new idea in this work is to introduce a natural extension of the standard lexicographic product that can be used in the directed setting (all previous works rely on the standard notion of graph products).  Our extension is randomized, allowing us to naturally combine nice properties of a random bipartite graph~\cite{feder2019complexity} with hard instances of graph coloring~\cite{feige1998zero}, in a way that the hardness gap grows proportionally to the size of the reduction. 
Our new graph product might be of independent interest.

\subsection{Main Reduction Lemma}

Key to proving the hardness result is the following lemma that simultaneously relates (i) the chromatic number to the dichromatic number of a tournament and (ii) the stability number to the size of a largest acyclic set.

\begin{lemma}\label{lem:reduction}
For all $k \in {\mathbb N}$, 
there exists a polynomial time randomized reduction that, on (sufficiently large) input undirected graph $\widehat{G}$, produces a tournament $\pt$ such that $|V(\pt)| = |V(\widehat{G})|^k$  and 
\begin{itemize}
    \item $\vec{\chi}(\pt) \leq \chi(\widehat{G})^k |V(\widehat{G})|$ with probability $1$.   

    \item $\vec{\alpha}(\pt) = O(\alpha(\widehat{G})^k |V(\widehat{G})|^2)$ with probability at least $1-1/k$ 
\end{itemize}
\end{lemma}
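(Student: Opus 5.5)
We build $\pt$ as a randomized lexicographic power of $\widehat{G}$. Let $N=|V(\widehat{G})|$ and take $V(\pt)=V(\widehat{G})^k$, the set of strings $x=(x_1,\dots,x_k)$. For two distinct strings $x\neq y$, let $i$ be the first coordinate where they differ. If $x_iy_i\in E(\widehat{G})$, orient the pair $\{x,y\}$ by an independent fair coin. If $x_i$ and $y_i$ are non-adjacent, fix a linear order $\prec$ on $V(\widehat{G})$ and orient the pair as $x\to y$ iff $x_i\prec y_i$. This orientation is ``transitive'' across non-edges.

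To keep the first bullet deterministic, we modify the rule slightly. The random coins are used only for edge-pairs, and among edge-pairs we use a ``random bipartite'' layer in the spirit of \cite{feder2019complexity}: when $x_iy_i\in E$, the pair is oriented according to the order of a fixed reference coordinate, except on a random set. The factor $N$ in the bound is meant to absorb exactly this.

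A simpler choice achieves the completeness bound with probability $1$. Take a proper colouring $c$ of $\widehat{G}$ with $\chi$ colours and colour $x$ by $(c(x_1),\dots,c(x_k), x_k)$, which gives $\chi^k N$ colours. Consider a class with fixed colour vector, and any two of its strings $x,y$ with first differing coordinate $i$. Then $x_i,y_i$ have the same colour, so they are non-adjacent, so the pair is oriented by $\prec$ on coordinate $i$. I claim this is the lexicographic order on strings. Indeed, for strings in the class, $x\to y$ iff $x$ precedes $y$ lexicographically with respect to $\prec$, which is a linear order, so the class is acyclic. In fact $\chi^k$ colours already suffice, and the extra factor $N$ is slack for the $N^2$ bound below.

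For soundness, let $S$ be an acyclic set. I would prove by induction on $k$ that $|S|\le C\,\alpha^{k}N^2$ whp. Group $S$ by first coordinate: $S_v=\{x\in S: x_1=v\}$. Let $W=\{v: |S_v|\ge 1\}$. The plan is to show that the "heavy" part of $W$ must be essentially independent in $\widehat{G}$. For $u,v\in W$ adjacent, all $|S_u||S_v|$ arcs between $S_u$ and $S_v$ are independent coins. The union bound then says: any two sets $A,B$ of strings of size at least $m\approx c\log(N^k)$ each have, whp, arcs in both directions forming a directed cycle with internal structure. The key lemma here is from random bipartite tournaments: whp, for every pair $A,B$ with $|A|,|B|\ge m$ and $m=O(k\log N)$, the random bipartite orientation between them combined with any acyclic orders inside $A$ and $B$ creates a cycle. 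Equivalently, an acyclic set of the random bipartite tournament with sides $A,B$ cannot have $\ge m$ vertices on both sides. Hence the vertices $v$ with $|S_v|\ge m$ form an independent set of $\widehat{G}$, of size at most $\alpha$. Thus
$|S|\le \alpha\cdot \max_v|S_v| + N m$,
and recursing on the suffix (where $S_v$ projects to an acyclic set of the $(k-1)$-level product) gives $|S|\le \alpha^k\cdot N + \sum_j \alpha^{j} N m = O(\alpha^k N^2)$ once $m\le N$. The failure probability is summed across the $k$ levels and all pairs, and is at most $1/k$ for $N$ large.

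The main obstacle is the random bipartite lemma uniformly over all $(A,B)$, including the fact that inside $A$ and inside $B$ the orientation is an arbitrary acyclic one rather than random. I would handle it by counting: fix topological orders on $A$ and $B$. Avoiding a 4-cycle $a\to b\to a'\to b'\to a$ forces the bipartite orientation to be a "threshold" (monotone staircase) pattern, of which there are at most $\binom{2m}{m}$. Each pattern has probability $2^{-m^2}$, against $N^{2km}\cdot (m!)^2$ choices. With $m=\Theta(k\log N)$ this is $o(1/k)$.
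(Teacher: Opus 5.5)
Your proposal is correct and is essentially the paper's proof. It uses the same iterated randomized lexicographic product, the same completeness argument (products of colour classes are ordered lexicographically, hence acyclic), and the same soundness argument: clouds carrying at least $m$ vertices of an acyclic set must form an independent set of $\widehat{G}$, shown by a union bound over sets, topological orders and merges, followed by the recursion $f(k)\le\alpha f(k-1)+Nm$. Your count $(m!)^2\binom{2m}{m}=(2m)!$ is the paper's count, but the staircase shape comes from acyclicity of $A\cup B$, not from excluding alternating $4$-cycles alone. The deviations---fresh coins in every cloud rather than identical copies of one sample $G_{i-1}$, and the threshold $m=\Theta(k\log N)$ instead of $|V(G_{i-1})|^{1/k}$---are harmless, and the latter even sharpens the bound. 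You should delete the second paragraph: the plain fair-coin construction of your first paragraph already gives the first bullet with probability $1$, and it is exactly the construction your soundness argument uses.
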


Before proving this lemma, we first show how this implies the hardness result in Theorem~\ref{thm: main-hardness}. We start with the following hardness result of Feige and Kilian~\cite{feige1998zero} (and derandomized by~\cite{zuckerman2006linear}).  

\begin{theorem}
\label{feige}
Let $\gamma \in (0,1)$. Given an undirected $N$-vertex graph $\widehat{G}$, it is NP-hard to distinguish between the following two cases: 
\begin{itemize}
    \item (Completeness:) $\chi(\widehat{G}) \leq N^{\gamma}$. 
    \item (Soundness:) 
    $\alpha(\widehat{G}) \leq N^{\gamma}$.  
\end{itemize}
\end{theorem}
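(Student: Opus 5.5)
Theorem~\ref{feige} is imported from~\cite{feige1998zero} (derandomized in~\cite{zuckerman2006linear}), so the paper will not reprove it. If I had to reconstruct it, I would follow the FGLSS route from probabilistically checkable proofs, with the Feige--Kilian covering twist. The plan is to build an \NP-hard family of graphs in which, in the yes case, a small number of independent sets covers \emph{all} vertices, and in the no case every independent set is a tiny fraction of the vertices.

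\textbf{Step 1 (PCP).} Start from a PCP for an \NP-complete language with the following parameters:
\begin{itemize}
\item randomness $r$ and $f$ free bits;
\item perfect completeness and soundness $s$;
\item arbitrarily small amortized free-bit complexity $f/\log(1/s)\le\delta$.
\end{itemize}
Such PCPs exist by H{\aa}stad's long-code tests (or the Samorodnitsky--Trevisan tests), amplified by sequential repetition with randomness recycled along expander walks.

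\textbf{Step 2 (FGLSS graph and soundness).} Form the FGLSS graph $\widehat G$. Its vertices are the accepting pairs (random string, answer pattern), so $N=2^{r+f}$. Two vertices are adjacent when their answer patterns are inconsistent. An independent set is a set of configurations consistent with a single proof $\pi$ and at most one configuration per random string. Hence in the no case $\alpha(\widehat G)\le s\,2^r$.

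\textbf{Step 3 (completeness as a colouring bound; the key Feige--Kilian step).} Perfect completeness alone only gives a large independent set. We additionally need the PCP to be ``linear'' or folded, so that if $\pi$ is a correct proof then so is every shift $\pi\oplus\rho$ for $\rho$ in a suitable subspace. For a fixed random string $R$, each of the $2^f$ accepting views of $R$ is then hit by $\pi\oplus\rho$ with probability $2^{-f}$ over random $\rho$.

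Each shifted proof gives an independent set $I_{\pi\oplus\rho}$. Taking $O(2^f\cdot r)$ independent random shifts, a union bound over all $2^{r+f}$ vertices shows every vertex is covered, so $\chi(\widehat G)=O(r\,2^f)$.

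\textbf{Step 4 (parameters).} With $r$ polylogarithmic in the blow-up and $f\le\delta\log(1/s)$, both $O(r2^f)$ and $s2^r$ are at most $N^{\gamma}$ once $\delta$ is small relative to $\gamma$. Repetition drives $s$ down so that $s\,2^r$ is small relative to $N=2^{r+f}$.

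\textbf{Main obstacles.} I expect two places to be hardest.
\begin{itemize}
\item \emph{Step 3.} The covering property has to survive the amplification that makes $f/\log(1/s)$ tiny. Repetition must be done so that shifts of correct proofs remain correct proofs; the natural choice is direct-product structure with shifts applied coordinatewise.
\item \emph{Deterministic guarantee.} The random choice of shifts, and the randomness-efficient repetition, must hold deterministically to get \NP-hardness rather than hardness under randomized reductions. For this I would first try replacing independent sampling by dispersers or extractors, following~\cite{zuckerman2006linear}.
\end{itemize}
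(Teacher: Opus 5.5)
You are right that the paper gives no proof here: it imports the theorem from \cite{feige1998zero} and \cite{zuckerman2006linear}. Your reconstruction, however, has a genuine quantitative gap in Steps~1 and~4, and it matters exactly in the regime the paper uses ($\gamma=\epsilon/5$).

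The target ``$s\,2^r$ small relative to $N$'' holds automatically, since $s2^r/N=s2^{-f}$. What you actually need is $s\,2^r\le N^{\gamma}=2^{\gamma(r+f)}$. Writing $L=\log(1/s)$ and $f=\bar f L$, this becomes $r\le\frac{1+\gamma\bar f}{1-\gamma}L$. So the amplified verifier may use only a $(1+O(\gamma))$ factor more random bits than $\log(1/s)$.

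Expander-walk recycling cannot deliver this. Each step costs $\log d$ random bits. But the walk hitting bound shrinks the acceptance probability per step only by a factor $\mu+\lambda\ge\lambda\ge 2\sqrt{d-1}/d$ (Alon--Boppana). That buys at most about $\frac12\log d$ bits of soundness per step, so $r\ge(2-o(1))L$. The best no-case bound you can then derive is $\alpha(\widehat G)\le 2^{r-L}\approx N^{1/(2+\bar f)}$. Your argument therefore yields the theorem only for $\gamma$ above roughly $1/(2+\bar f)\approx 1/2$.

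This factor-$2$ loss is why the $n^{1-\epsilon}$ bounds of H{\aa}stad and of Feige--Kilian were first obtained only under ${\sf NP}\neq{\sf ZPP}$: they reach $r\approx(1+o(1))L$ only through randomized amplification (random sampling). Zuckerman's contribution is a disperser with near-optimal randomness efficiency that achieves this deterministically. Your closing remark mentions dispersers, but you present expander walks as already sufficient and never isolate the efficiency requirement that forces them.

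Two smaller points. First, the random shifts in Step~3 need no derandomization. They only show that a small colouring \emph{exists} in the yes case, and the reduction never has to output it; the only randomness to remove is in building the instance.

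Second, the XOR-shift invariance you postulate does not fit the long-code PCPs you cite. Honest proofs there are long codes, and the sum of the long codes of three distinct labels is not a long code. Hence a subspace of correctness-preserving shifts lets each block take at most two labels. That is far too few to make the $2^f$ accepting views of a random string equally likely. The uniform-view (covering) property must come from a different, zero-knowledge-style randomization of the honest proof, which is the idea behind Feige--Kilian's technique. Whether it survives amplification, your own first obstacle, remains open in your sketch.
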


To prove Theorem~\ref{thm: main-hardness}, consider $\epsilon < \epsilon_0$ for $\epsilon_0$ that will be chosen later. 
Our reduction takes an input graph $\widehat{G}$ from Theorem~\ref{feige} with parameter $\gamma = \epsilon/5$ and invokes Lemma~\ref{lem:reduction} with parameter $k = \lceil 1/\gamma \rceil$ to get  tournament $\pt$ with $n = |V(\pt)| = N^k$ vertices. 
In the completeness case, we know that $\vec{\chi}(\pt)  \leq \chi(\widehat{G})^k |V(\widehat{G})| \leq N^{k\gamma+1} =n^{\gamma+1/k} \leq n^{\epsilon}$. This happens with probability $1$.  
In the soundness case, we have that $\vec{\alpha}(\pt) \leq O(\alpha(\widehat{G})^k N^2) \leq O(N^{\gamma k +2})= O(N^{k(\gamma+2/k)}) = O(n^{\gamma+2/k})$; notice that $\gamma+2/k \leq \epsilon/5+\frac{2}{\lceil 5/\epsilon \rceil} \leq \epsilon$ for sufficiently small $\epsilon$ (we choose threshold $\epsilon_0$ such that it holds). This happens with probability at least $1- 1/k$. 
This reduction implies that any algorithm that distinguishes between $\vec{\chi}(\pt) \leq n^{\epsilon}$ and $\vec{\alpha}(\pt) \leq n^{\epsilon}$ would give us an RP algorithm that distinguishes between $\chi(\widehat{G}) \leq N^{\gamma}$ and $\alpha(\widehat{G}) \leq N^{\gamma}$, hence implying that ${\sf NP} = {\sf RP}$. This concludes the proof of Theorem~\ref{thm: main-hardness}.

\subsection{Proof of Lemma~\ref{lem:reduction}}

\subsubsection{Graph Products}

We first recall the graph product operation in undirected graphs. 
Let $G$ and $H$ be undirected graphs. Denote by $G \cdot H$ the \textit{lexicographic product} of $G$ and $H$, defined as follows: 
\[V(G \cdot H) = V(G) \times V(H) \] 
We write vertices of $G \cdot H$ in the coordinate form $(u,a)$ where $u \in V(G)$ and $a \in V(H)$. Edges of the product graph are defined as: 
\[E(G\cdot H) = \{\{(u,a),(v,b)\}: \{u,v\} \in E(G) \vee (u=v)\wedge \{a,b\} \in E(H) \} \]
In other words, there is an edge between a pair $(u,a)$ and $(v,b)$ iff there is an edge in the first coordinate ($\{u,v\} \in E(G)$), or if $u$ and $v$ are the same, $\{a,b\} \in E(H)$. 
It would be useful to think of the product as, first, replacing each vertex $u$ by a copy $H_u$ of graph $H$, and then connect $H_u$ and $H_v$ via a complete bipartite graph for those $\{u,v\} \in E(G)$.  
We will use the following well-known fact about chromatic and stability numbers of product graphs. 

\begin{theorem}
For any graphs $G$ and $H$, we have 
\begin{itemize}
    \item $\chi(G \cdot H) \leq \chi(G)  \chi(H)$, and 

    \item $\alpha(G \cdot H) = \alpha(G) \alpha(H)$. 
\end{itemize}
\end{theorem}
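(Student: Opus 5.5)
We prove the two items separately, using the description of $G\cdot H$ as copies $H_u$ ($u\in V(G)$), with $H_u$ and $H_v$ joined completely when $\{u,v\}\in E(G)$ and not at all otherwise.

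\emph{Chromatic number.} Fix a proper coloring $c_G : V(G)\to[\chi(G)]$ and a proper coloring $c_H : V(H)\to[\chi(H)]$. Color $(u,a)$ with the pair $(c_G(u),c_H(a))$, which uses at most $\chi(G)\chi(H)$ colors. Take an edge $\{(u,a),(v,b)\}$. If $\{u,v\}\in E(G)$, then $c_G(u)\neq c_G(v)$. Otherwise $u=v$ and $\{a,b\}\in E(H)$, so $c_H(a)\neq c_H(b)$. In both cases the endpoints receive different pairs, so the coloring is proper.

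\emph{Stability number, lower bound.} Let $I$ be a maximum independent set of $G$ and $J$ a maximum independent set of $H$. Consider $I\times J$. Two distinct vertices $(u,a),(v,b)$ of it either have $u\neq v$ with $u,v\in I$, so $\{u,v\}\notin E(G)$, or have $u=v$ and $a\neq b$ with $a,b\in J$, so $\{a,b\}\notin E(H)$. Hence $I\times J$ is independent, giving $\alpha(G\cdot H)\ge\alpha(G)\alpha(H)$.

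\emph{Stability number, upper bound.} This is the only step that needs an idea. Let $S$ be independent in $G\cdot H$. Let $P=\{u : S\cap (\{u\}\times V(H))\neq\emptyset\}$ be the projection of $S$ to $V(G)$.
\begin{itemize}
\item Suppose $u\neq v$ both lie in $P$ and $\{u,v\}\in E(G)$. Then any $(u,a),(v,b)\in S$ would be adjacent, which is impossible. So $P$ is independent in $G$ and $|P|\le\alpha(G)$.
\item For each $u\in P$, the fibre $S_u=\{a : (u,a)\in S\}$ must be independent in $H$, because $(u,a)$ and $(u,b)$ are adjacent exactly when $\{a,b\}\in E(H)$. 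So $|S_u|\le\alpha(H)$.
\end{itemize}
Therefore $|S|=\sum_{u\in P}|S_u|\le\alpha(G)\alpha(H)$, and equality holds.
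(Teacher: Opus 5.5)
Your proof is correct and complete: the product coloring $(c_G(u),c_H(a))$, the independent set $I\times J$, and the projection/fibre argument for the upper bound form the standard proof. The paper cites this theorem as a well-known fact without proof, but it uses exactly these templates for its directed analogues, namely Lemma~\ref{lem: product chrom} (color classes $X_i\times Y_j$) and Lemma~\ref{lem:new product} (split $S$ into cloud parts $S_v$ and show the clouds carrying large parts form an independent set in $G$).
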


\subsubsection{Directed Randomized Lexicographic Product}

Let $H$ be a digraph and $G$ be an undirected graph (that we call a \textbf{skeleton}). Moreover, we work with a fixed ordering $\sigma: [|V(G)|] \rightarrow V(G)$ of vertices in $G$, i.e., $(\sigma(1), \sigma(2),\ldots, \sigma(|V(G)|))$ is a permutation of $V(G)$. 
The random process $D^{\sigma}_G[H]$ generates a product graph $G'$
via the following steps: (i) First, start with the vertex set
$V(G)$. Replace each vertex $v \in V(G)$ with a copy $H_v$ of digraph
$H$. For each $v$, we denote by ${\sf cloud}(v)$ the vertex set in the
copy $H_v$, that is, ${\sf cloud}(v) = \{(v,a): a \in V(H)\}$, (ii)
Next, for each $\{u,v\} \in E(G)$, and for all pairs of 
vertices $a,b \in V(H)$ add either an arc $(u,a)(v,b)$ or $(v,b)(u,a)$
with equal probability (i.e., we pick a random orientation for this
pair), and (iii) For each edge $\{u,v\} \not \in E(G)$ such that $u$
appears before $v$ in permutation $\sigma$ and for all pairs of 
vertices $a,b \in V(H)$, add arc $(u,a)(v,b)$. Another way to describe this random process is to first perform an undirected lexicographic product and then orient the $G$-edges between the clouds at random, while the edges inside the clouds are oriented according to $H$. See~\Cref{fig:lex1} and~\Cref{fig:lex2} for illustration of Steps (i) and (ii).   

\begin{figure}
    \centering   \includegraphics[width=0.8\linewidth]{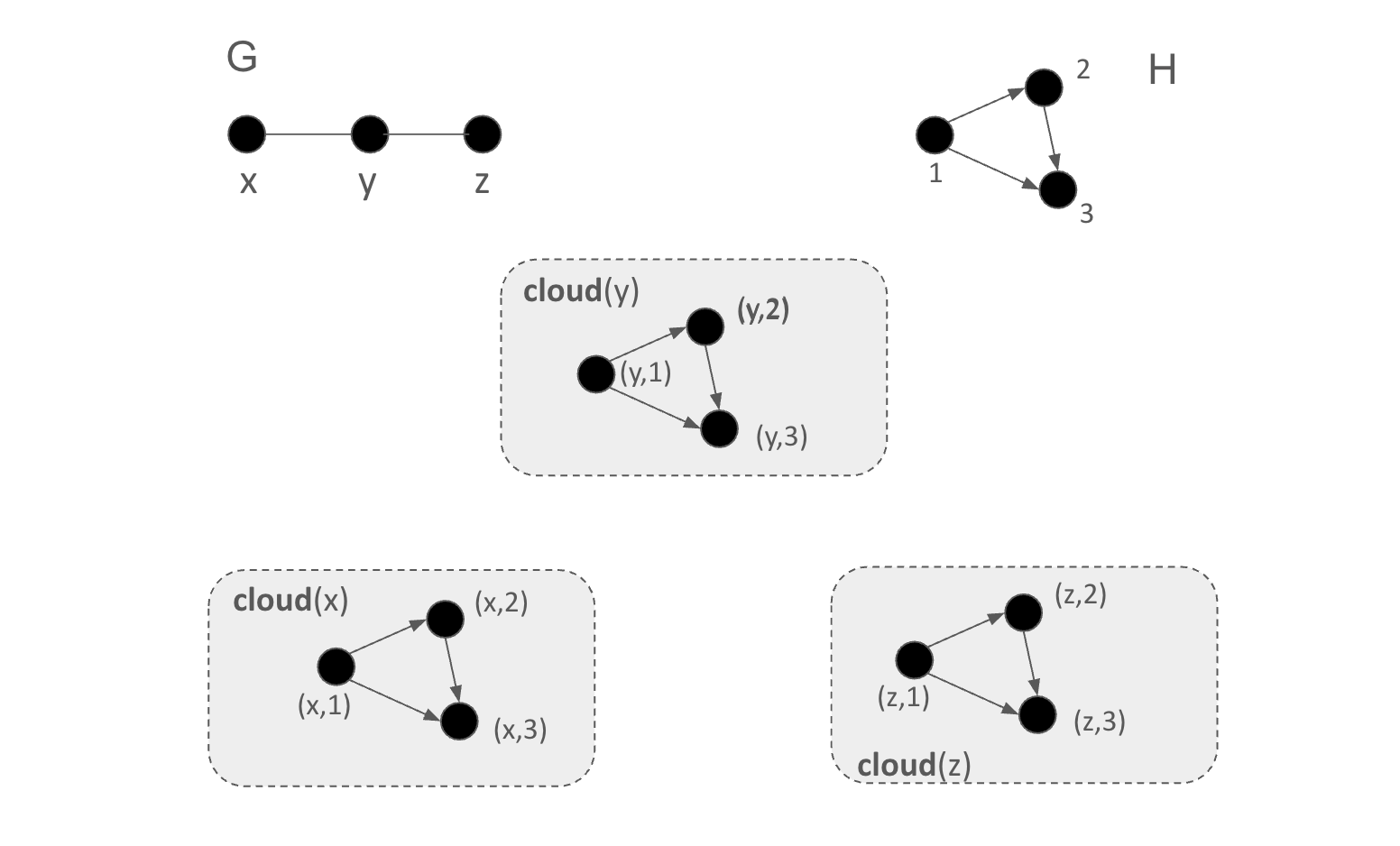}
    \caption{An illustration of $D^{\sigma}_G(H)$ after Step 1. Each vertex of $G$ is replaced by a copy of $H$. }
    \label{fig:lex1}
\end{figure}

\begin{figure}
    \centering
    \includegraphics[width=0.9\linewidth]{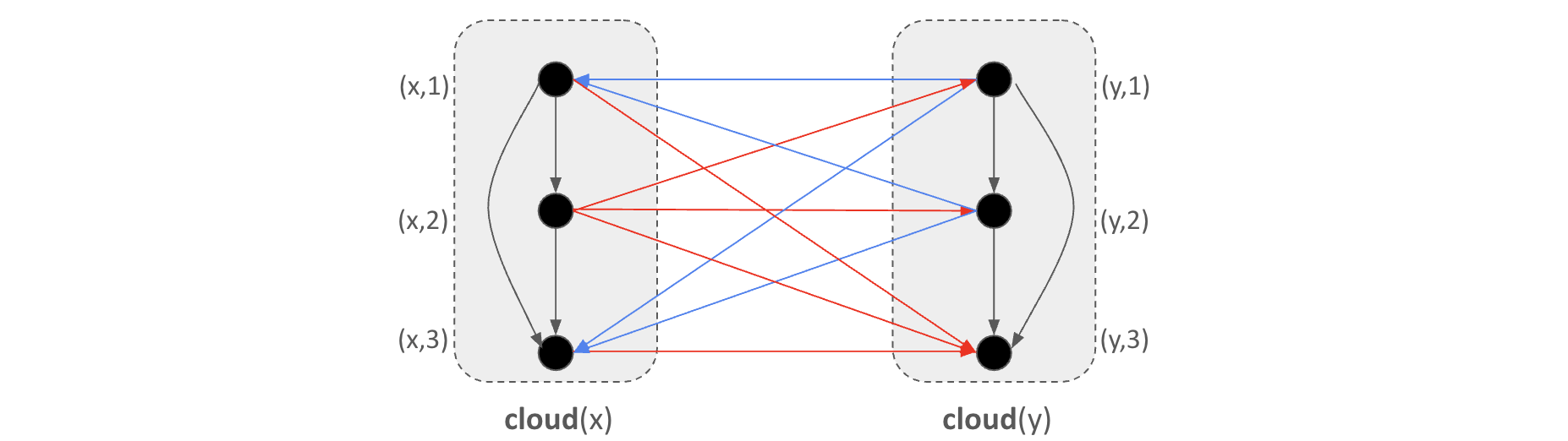}
    \caption{A (local) view of $D^{\sigma}_G(H)$ after Step 2. Arcs are added between every pair of vertices in ${\sf cloud}(x)$ and ${\sf cloud}(y)$. The directions of these arcs are chosen independently at random where blue and red arcs are oriented to the right and left respectively.  }
    \label{fig:lex2}
\end{figure}

The following property is easy to see but will be crucial for us. 

\begin{observation}
\label{obs:tournament}
The graph $D^{\sigma}_G[H]$ is a tournament if and only if $H$ is. 
\end{observation}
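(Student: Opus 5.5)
We check directly from the three steps of the construction that every pair of distinct vertices of $D^{\sigma}_G[H]$ is joined by exactly one arc exactly when $H$ is a tournament. Take two distinct vertices $(u,a)$ and $(v,b)$. There are two cases: $u=v$ (same cloud) and $u \neq v$ (different clouds).

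\emph{Different clouds.} Suppose $u \neq v$. If $\{u,v\} \in E(G)$, Step (ii) adds exactly one of $(u,a)(v,b)$ or $(v,b)(u,a)$. If $\{u,v\} \notin E(G)$, Step (iii) adds the single arc directed from the cloud of the $\sigma$-earlier vertex to the cloud of the later one. No other step touches pairs in distinct clouds: Step (i) only creates arcs inside clouds, and Steps (ii) and (iii) handle disjoint sets of pairs $\{u,v\}$. Hence every inter-cloud pair carries exactly one arc, regardless of $H$ and of the random choices.

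\emph{Same cloud.} Suppose $u=v$ and $a \neq b$. Only Step (i) places arcs between $(u,a)$ and $(u,b)$, and it copies $H$ exactly: there is an arc $(u,a)(u,b)$ if and only if $ab \in A(H)$. So the subdigraph induced on ${\sf cloud}(u)$ is isomorphic to $H$.

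\emph{Conclusion.} If $H$ is a tournament, every intra-cloud pair also has exactly one arc, so $D^{\sigma}_G[H]$ is a tournament. Conversely, if $D^{\sigma}_G[H]$ is a tournament, then its induced subdigraph on any single cloud, which is a copy of $H$, is a tournament, since induced subdigraphs of tournaments are tournaments. This holds for every outcome of the random process, so the equivalence is deterministic. There is no real obstacle; the only point needing care is that no pair receives arcs from two different steps, which holds because the steps act on disjoint sets of vertex pairs.
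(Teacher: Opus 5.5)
Your proof is correct and is the direct verification the paper intends; the paper gives no written proof and simply calls the observation easy to see. Your case split is the whole content: inter-cloud pairs get exactly one arc from Step (ii) or Step (iii), regardless of $H$ and the random choices, while each cloud induces a copy of $H$. The only implicit assumption, which the paper's statement shares, is that $V(G)\neq\emptyset$ in the converse direction, since otherwise $D^{\sigma}_G[H]$ is the empty digraph.
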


The following lemma is easy to see. 

\begin{lemma}
\label{lem: product chrom}
Let $H$ be a digraph and $G$ an undirected graph. Then, $\vec{\chi}(D^{\sigma}_G[H]) \leq \chi(G) \vec{\chi}(H)$ with probability one.     
\end{lemma}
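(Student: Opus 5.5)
We build the coloring explicitly as a product coloring, and the bound then holds for every outcome of the random orientations. Fix a proper coloring $c_G : V(G) \to [\chi(G)]$ of the skeleton and an optimal dicoloring $c_H : V(H) \to [\vec{\chi}(H)]$ of $H$, so each class of $c_H$ induces an acyclic subdigraph of $H$. Color each vertex $(v,a)$ of $D^{\sigma}_G[H]$ with the pair $(c_G(v), c_H(a))$. This uses at most $\chi(G)\vec{\chi}(H)$ colors. It remains to show that every color class induces an acyclic subdigraph, whatever orientations were chosen in Step (ii).

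Fix a color $(i,j)$ and let $S$ be its class. Put $U = c_G^{-1}(i)$, which is an independent set of $G$. Then $S = \{(v,a) : v \in U,\ c_H(a) = j\}$. Consider any two vertices $(u,a),(v,b) \in S$ with $u \neq v$. Since $\{u,v\} \notin E(G)$, Step (ii) never touched this pair; it was oriented in Step (iii) according to $\sigma$, from the endpoint whose first coordinate comes earlier in $\sigma$. Two vertices of $S$ in the same cloud are joined exactly by the arcs of $H$ between $a$ and $b$. Hence every arc of $D^{\sigma}_G[H][S]$ either stays inside one cloud or goes from a cloud earlier in $\sigma$ to a cloud later in $\sigma$.

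Now suppose $D^{\sigma}_G[H][S]$ contains a directed cycle $C$. Each arc of $C$ keeps the $\sigma$-position of the first coordinate the same or increases it. Going around the cycle, these positions must return to their starting value, so no arc of $C$ can strictly increase them. Therefore $C$ lies entirely inside a single cloud ${\sf cloud}(v)$ with $v \in U$. Its second coordinates then form a directed cycle in $H[c_H^{-1}(j)]$, because the arcs inside a cloud are exactly those of $H$. This contradicts the choice of $c_H$. Hence each class $S$ is acyclic, and $\vec{\chi}(D^{\sigma}_G[H]) \le \chi(G)\vec{\chi}(H)$ deterministically.

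The only point needing care is the observation that random arcs never appear inside a color class. This holds because random orientations are placed only on edges of $G$, and each class lives over an independent set of $G$. No probabilistic estimate is needed.
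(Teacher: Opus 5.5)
Your proof is correct and is the same product-coloring argument the paper gives: color classes $X_i \times Y_j$, with $X_i$ a color class of $G$ and $Y_j$ an acyclic class of $H$. You also spell out why each such class is acyclic: all arcs between different clouds within a class follow $\sigma$, so any directed cycle must lie inside one cloud and hence in $H[Y_j]$. The paper leaves this step implicit.
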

\begin{proof}
Let $X_1, X_2,\ldots, X_g$ be  color classes of $G$ where $g = \chi(G)$ and $Y_1,\ldots, Y_h$ be acyclic color classes of $H$ where $h = \vec{\chi}(H)$. 
Since each $X_i$ is independent in $G$ and $Y_j$ is acyclic in $H$, we have that $X_i \times Y_j$ is also acyclic in $D^{\sigma}_G[H]$ (regardless of the orientation $\sigma$). This gives us $g \cdot h$ acyclic color classes. 
\end{proof}

Consider a sample $G' \sim D^{\sigma}_G[H]$. 
We say that edge $\{u,v\} \in E(G)$ is $\eta$-\textbf{consistent} if for every $A_u \subseteq {\sf cloud}(u)$ and $A_v \subseteq {\sf cloud}(v)$ such that $|A_u|, |A_v| \geq |V(H)|^{\eta}$, we have that $G'[A_u \cup A_v]$ contains a cycle. 
Next, we say that the sample $G' \sim D^{\sigma}_G[H]$ is $\eta$-\textbf{good} for $(G,H)$ if every edge $\{u,v\} \in E(G)$  is $\eta$-consistent. 

The following lemma argues that a random product graph is $\eta$-good with high probability. The proof closely follows~\cite{feder2019complexity} (Lemma 4.1). 

\begin{lemma}
\label{lem: good prob}
For all $\eta \in (0,1/2]$, there exists a constant $n_0$ (depending on $\eta$) such that for all $G$ and $H$ such that $|V(H)| \geq |V(G)| \geq n_0$, the probability that the graph $D^{\sigma}_G[H]$ is $\eta$-good is at least $1-\eta^2$. 
\end{lemma}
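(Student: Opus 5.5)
We use a union bound: first over the edges of $G$, then, for a fixed edge, over all pairs of large subsets of the two clouds. Write $m = |V(H)|$ and $s = \lceil m^{\eta}\rceil$.

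\textbf{Reduction to minimal sets.} It suffices to consider $A_u, A_v$ of size exactly $s$. If $G'[A_u\cup A_v]$ contains a cycle for all such sets, it contains a cycle for all supersets, since a cycle in an induced subgraph persists in any larger induced subgraph.

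\textbf{Fixing the arcs between the two sets.} Fix an edge $\{u,v\}\in E(G)$ and sets $A_u\subseteq{\sf cloud}(u)$, $A_v\subseteq{\sf cloud}(v)$ with $|A_u|=|A_v|=s$. Suppose $G'[A_u\cup A_v]$ is acyclic. Then it has a topological order, and the $s^2$ arcs between $A_u$ and $A_v$ must all agree with that order restricted to the two sides. Once the relative interleaving of $A_u$ and $A_v$ in the order is fixed, every cross arc is determined. Hence the set of cross-arc patterns that are consistent with acyclicity has size at most the number of interleavings of two sequences of length $s$, namely $\binom{2s}{s}\cdot (s!)^2 \le (2s)!$. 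This is a crude bound; we could even sharpen it to $\binom{2s}{s}$ after fixing internal orders, but $(2s)!$ suffices. The $s^2$ cross arcs are oriented independently and uniformly at random (Step (ii) of the construction), and arcs inside the clouds do not matter here. Therefore
\[
\Pr\bigl[G'[A_u\cup A_v]\text{ acyclic}\bigr]\le (2s)!\,2^{-s^2}\le 2^{2s\log_2(2s)-s^2}.
\]

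\textbf{Union bound.} The number of choices of $(A_u,A_v)$ is at most $\binom{m}{s}^2\le m^{2s}=2^{2s\log_2 m}$, and there are at most $|V(G)|^2\le m^2$ edges of $G$, using $|V(H)|\ge|V(G)|$. So the failure probability is at most
\[
2^{\,2\log_2 m+2s\log_2 m+2s\log_2(2s)-s^2}.
\]
Since $s\ge m^{\eta}$ and $\log_2 m = O(m^{\eta}/\text{large})$ for $m$ large, we have $s^2 \ge s\,m^{\eta}\gg 4s\log_2 m+2\log_2 m$. The exponent is therefore at most $-s^2/2\le -m^{2\eta}/2$, which is below $\log_2(\eta^2)$ once $m\ge |V(G)|\ge n_0(\eta)$.

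\textbf{Main obstacle.} The only delicate step is the counting of orientations of the cross arcs that are compatible with acyclicity. This is the analogue of the argument in Lemma~4.1 of~\cite{feder2019complexity}, namely that a random bipartite tournament has no large acyclic bipartite sub-piece. The key fact used is that an acyclic orientation of a complete bipartite graph $K_{s,s}$ is determined by a linear order of its $2s$ vertices, giving at most $(2s)!$ acyclic orientations out of $2^{s^2}$. Everything else is a routine union bound with $s^2$ dominating $s\log m$, because $\eta>0$ is fixed.
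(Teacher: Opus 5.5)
Your proof is correct and is essentially the paper's argument: a union bound over the at most $|V(H)|^2$ edges of $G$, over the at most $|V(H)|^{2s}$ pairs of size-$s$ subsets of the two clouds, and over the at most $(2s)!$ linear orders of $A_u \cup A_v$. Each such order forces all $s^2$ independently oriented cross arcs, so each holds with probability $2^{-s^2}$.

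Your use of $s=\lceil |V(H)|^{\eta}\rceil$ is slightly cleaner than the paper's treatment of the set sizes. The parenthetical sharpening to $\binom{2s}{s}$ is valid only when the internal orders are forced (e.g.\ when $H$ is a tournament), but nothing in your proof depends on it.
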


\begin{proof}
We need a simple observation: When $G'[X]$ is acyclic for subset $X \subseteq V(G')$, some topological ordering $\tau: X \rightarrow [|X|]$ is valid for $X$ (i.e., there is no edge going back in the ordering).

Fix an edge $\{u,v\} \in E(G)$. We first analyze the probability that $\{u,v\}$ is not $\eta$-consistent (denote by event $\eset_{uv}$), i.e., there exists $A_u, A_v$ such that $G'[A_u \cup A_v]$ is acyclic. 
This can be upper bounded by the probability that some topological ordering is valid for $A_u \cup A_v$. 
\begin{eqnarray*}
{\mathbb P}[\eset_{uv}] &\leq &  {\mathbb P}[(\exists A_u, A_v: |A_u|,|A_v| = |V(H)|^{\eta}) (\exists \tau)\mbox{ $\tau$ is valid for $A_u \cup A_v$}] \\ 
 &\leq & |V(H)|^{2|V(H)|^{\eta}} \cdot {\mathbb P}[(\exists \tau)\mbox{ $\tau$ is valid for $A_u \cup A_v$}] \\
 &\leq &  |V(H)|^{2|V(H)|^{\eta}}  \cdot (2|V(H)|^{\eta})! \cdot {\mathbb P}[\tau \mbox{ is valid for $A_u \cup A_v$}]
\end{eqnarray*}
Now consider a fixed $w \in A_u \cup A_v$ (say $w \in A_u$ without loss of generality). Since $w$ is adjacent to all vertices in $A_v$, the ordering $\tau$ necessitates the orientation of all such edges between $w$ and $A_v$. This means that the probability is at most $(1/2)^{|A_v|}$ for a fixed $w$, and therefore 
\[{\mathbb P}[\tau \mbox{ is valid for $A_u \cup A_v$}] \leq (1/2)^{|A_v||A_u|} \leq (1/2)^{|V(H)|^{2\eta}}\]
Plugging this term back, we get: 
\[{\mathbb P}[\eset_{uv}] \leq |V(H)|^{4|V(H)|^{\eta}} \cdot (1/2)^{|V(H)|^{2\eta}} \leq 2^{-|V(H)|^{2\eta} + 4 |V(H)|^{\eta} \lg_2 |V(H)|} \]
When $|V(H)|$ is sufficiently large, we can make ${\mathbb P}[\eset_{uv}]$ smaller than $\eta^2/|V(H)|^2$, and therefore (by the union bound over the edges of $G$) the probability that $G'$ is not $\eta$-good is at most $\eta^2$ as desired. 
\end{proof}

\subsubsection{New Graph Product Inequality} 

This subsection presents an analogue of the graph product inequality for our randomized lexicographic product.

\begin{lemma}[New inequality]
Let $G$ and $H$ be arbitrary undirected and directed graphs. Let $G' \sim D^{\sigma}_G[H]$ be such that $G'$ is  $\eta$-good for $(G,H)$. Then, we have that 
\[\vec{\alpha}(G') \leq \alpha(G) \cdot \vec{\alpha}(H) + |V(G)| |V(H)|^{\eta}.\]
\label{lem:new product}
\end{lemma}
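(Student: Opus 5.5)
Take a maximum acyclic set $X \subseteq V(G')$ and split it according to clouds: for each $v \in V(G)$ let $X_v = X \cap {\sf cloud}(v)$. The plan is to separate the clouds into ``heavy'' ones, with $|X_v| \geq |V(H)|^{\eta}$, and ``light'' ones, with $|X_v| < |V(H)|^{\eta}$, and to bound each group separately.

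The light clouds are easy. There are at most $|V(G)|$ of them, and each contributes fewer than $|V(H)|^{\eta}$ vertices. So together they contribute at most $|V(G)|\,|V(H)|^{\eta}$, which is exactly the additive error term in the statement.

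For the heavy clouds, let $S = \{v \in V(G) : |X_v| \geq |V(H)|^{\eta}\}$. We claim $S$ is independent in $G$. Suppose $\{u,v\} \in E(G)$ with $u,v \in S$. Since $G'$ is $\eta$-good, the edge $\{u,v\}$ is $\eta$-consistent. Applying this with $A_u = X_u$ and $A_v = X_v$, both of size at least $|V(H)|^{\eta}$, shows that $G'[X_u \cup X_v]$ contains a cycle. This contradicts the acyclicity of $X$. Hence $|S| \leq \alpha(G)$.

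Next, for each $v \in S$, the set $X_v$ is acyclic inside ${\sf cloud}(v)$. The subgraph induced on ${\sf cloud}(v)$ is a copy of $H$, since step (i) places $H_v$ there and steps (ii)--(iii) only add arcs between different clouds. Therefore $|X_v| \leq \vec{\alpha}(H)$, and the heavy clouds contribute at most $\alpha(G)\,\vec{\alpha}(H)$.

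Summing the two groups gives the inequality in the statement. The only step needing care is the independence of $S$. The thresholds match exactly: the definition of consistency uses ``$\geq |V(H)|^{\eta}$'', and heavy clouds are defined by the same inequality. Pairs $u,v$ that are non-adjacent in $G$ cause no trouble, because we only need $S$ to be independent, not to understand arcs along non-edges. No real obstacle is expected.
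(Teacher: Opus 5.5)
Your proof is correct and follows the paper's argument essentially line for line. You split a maximum acyclic set by clouds and bound the small pieces by $|V(G)|\,|V(H)|^{\eta}$. You then use $\eta$-consistency to show the large clouds form an independent set in $G$, and bound each large piece by $\vec{\alpha}(H)$.
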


\begin{proof}
Let $S \subseteq V(G')$ be such that $|S| = \vec{\alpha}(G')$ and $G'[S]$ is acyclic. We partition $S$ into $\bigcup_{v} S_v$ where $S_v = S \cap {\sf cloud}(v)$. We say that $S_v$ is large if $|S_v| \geq |V(H)|^{\eta}$ and small otherwise. 
Clearly, the total size of the small sets $S_v$ is at most $|V(G)| \cdot |V(H)|^{\eta}$. 

Now, for each large set $S_v$, we have that $S_v$ induces an acyclic subgraph in $H$, so $|S_v| \leq \vec{\alpha}(H)$.
Define the large clouds $L \subseteq V(G)$ where $L = \{v: S_v \mbox{ is large}\}$. It is clear that $L$ must be an independent set in $G$: Otherwise, if $u,v \in L$ has an edge $\{u,v\} \in E(G)$, this would imply that $S_u \cup S_v$ induces a cycle. Therefore, we have that the total size of large sets is at most $\alpha(G) \cdot \vec{\alpha}(H)$.  
\end{proof}

\subsubsection{Completing the Proof: The $k$-Fold Lexicographic Product}

Now we complete the proof of Lemma~\ref{lem:reduction}.  Given
undirected graph $\widehat{G}$ and a fixed ordering $\sigma$ of $V(\widehat{G})$, define a sequence of (random) 
tournaments $G_1,\ldots, G_k$ as follows: 

\begin{itemize}
    \item We construct tournament $G_1$ by orienting edges according to the ordering $\sigma$. That is, $(u,v) \in E(G_1)$ if and only if $\sigma(u) < \sigma(v)$. 
    We have that $\vec{\chi}(G_1), \vec{\alpha}(G_1) \leq |V(\widehat{G})|$. 

    \item For all $i \geq 1$, $G_{i+1} = D^{\sigma}_{\widehat{G}}[G_i]$. Since $G_1$ is a tournament, from~\Cref{obs:tournament}, graph $G_i$ must be a tournament for all $i$.    
\end{itemize}

Our final graph is $\pt= G_k$. 
Clearly, we have $|V(G_i)|= |V(\widehat{G})|^i$ and in particular $|V(\pt)| = |V(\widehat{G})|^k$. 
The following is a trivial consequence of Lemma~\ref{lem: product chrom} (applying it iteratively). 

\begin{observation}
With probability one, $\vec{\chi}(G)\leq \chi(\widehat{G})^{k-1} |V(\widehat{G})|$.    
\end{observation}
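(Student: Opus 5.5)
We prove this by induction on $i$, establishing $\vec{\chi}(G_i) \leq \chi(\widehat{G})^{i-1}|V(\widehat{G})|$ for every $i \in \{1,\ldots,k\}$. Here the $G$ in the statement is the final tournament $\pt = G_k$; setting $i=k$ then gives the observation.

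For the base case $i=1$, the trivial bound $\vec{\chi}(G_1) \leq |V(G_1)| = |V(\widehat{G})|$ holds: each singleton is an acyclic set. (In fact $G_1$ is transitive, so $\vec{\chi}(G_1)=1$, but the weaker bound is all we need.)

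For the inductive step, recall that $G_{i+1} = D^{\sigma}_{\widehat{G}}[G_i]$. Apply Lemma~\ref{lem: product chrom} with skeleton $\widehat{G}$ and digraph $H = G_i$. This gives, with probability one over the randomness of this step,
\[
\vec{\chi}(G_{i+1}) \leq \chi(\widehat{G})\,\vec{\chi}(G_i) \leq \chi(\widehat{G})^{i}|V(\widehat{G})|,
\]
using the inductive hypothesis in the second inequality.

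The only point needing care is that $G_i$ is itself random, so the lemma is being applied to a random $H$. This is harmless. Lemma~\ref{lem: product chrom} holds deterministically for every realization of $H$ and of the random orientations, because its proof uses only the fact that $X \times Y$ is acyclic whenever $X$ is independent in $\widehat{G}$ and $Y$ is acyclic in $H$, and this holds for every outcome. Concretely, the color classes are obtained by taking products $X_j \times Y$ of the color classes $X_j$ of a proper $\chi(\widehat{G})$-coloring of $\widehat{G}$ with the color classes $Y$ of $G_i$. So the bound holds surely at each of the $k-1$ steps, and therefore holds with probability one for $\pt$. There is no real obstacle here; the observation is just bookkeeping.

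The resulting bound $\chi(\widehat{G})^{k-1}|V(\widehat{G})|$ is stronger than the bound $\chi(\widehat{G})^{k}|V(\widehat{G})|$ claimed in Lemma~\ref{lem:reduction}, since $\chi(\widehat{G}) \geq 1$.
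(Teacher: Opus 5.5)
Your proof is correct and matches the paper's argument. The paper obtains the bound the same way: it starts from $\vec{\chi}(G_1)\le |V(\widehat{G})|$ and applies Lemma~\ref{lem: product chrom} iteratively along $G_{i+1}=D^{\sigma}_{\widehat{G}}[G_i]$. Your remark that the lemma holds for every realization of $H$ and of the random orientations is exactly why the bound holds with probability one.
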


\begin{lemma}
For sufficiently large $|V(\widehat{G})|$, the probability that, for all $i  \in \{2,\ldots, k\}$, the graph $G_i$ is $(1/k)$-good for $(\widehat{G}, G_{i-1})$ is at least $(1- 1/k)$
\end{lemma}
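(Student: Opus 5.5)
The plan is to apply Lemma~\ref{lem: good prob} once for each step of the product and then take a union bound over the $k-1$ steps. Set $\eta = 1/k$, and note that $\eta \in (0,1/2]$ holds as soon as $k \ge 2$. For $k=1$ the statement is vacuous, since there are no indices $i \in \{2,\ldots,k\}$.

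Fix $i \in \{2,\ldots,k\}$. The graph $G_i$ is sampled as $D^{\sigma}_{\widehat{G}}[G_{i-1}]$, where the orientations of the skeleton edges are chosen fresh at step $i$. I will condition on any realization of $G_{i-1}$. Conditioned on this realization, $G_{i-1}$ is just a fixed tournament $H$. Its size satisfies $|V(H)| = |V(\widehat{G})|^{i-1} \ge |V(\widehat{G})|$. Let $n_0 = n_0(1/k)$ be the constant given by Lemma~\ref{lem: good prob}, and assume $|V(\widehat{G})| \ge n_0$. Then the hypotheses of Lemma~\ref{lem: good prob} hold with $G = \widehat{G}$ and this $H$. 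So, conditioned on $G_{i-1}$, the probability that $G_i$ fails to be $(1/k)$-good for $(\widehat{G}, G_{i-1})$ is at most $\eta^2 = 1/k^2$. Since this bound holds for every realization of $G_{i-1}$, averaging over $G_{i-1}$ gives the same bound unconditionally.

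A union bound over the $k-1$ values of $i$ then shows that the probability that some $G_i$ is not $(1/k)$-good is at most $(k-1)/k^2 \le 1/k$. The desired probability is therefore at least $1 - 1/k$.

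The step needing the most care is the conditioning. Lemma~\ref{lem: good prob} is stated for a fixed digraph $H$, whereas here $H = G_{i-1}$ is itself random. The argument works because the fresh random orientations used at step $i$ are independent of everything that determined $G_{i-1}$. I also need to check one uniformity point: the threshold $n_0$ depends only on $\eta = 1/k$ and not on $H$. This is why a single condition, "$|V(\widehat{G})|$ sufficiently large", covers all $k-1$ steps at once.
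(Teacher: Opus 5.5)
Your proof is correct and follows the paper's approach: apply Lemma~\ref{lem: good prob} with $\eta = 1/k$ at each step, which is valid because $|V(G_{i-1})| \ge |V(\widehat{G})| \ge n_0$, and then combine over the steps. The only difference is the combining step: you use a union bound, giving $(k-1)/k^2 \le 1/k$, while the paper multiplies conditional probabilities to get $(1-1/k^2)^k \ge 1-1/k$. Your explicit conditioning on the realization of $G_{i-1}$, using the fresh orientations at step $i$, is exactly what justifies the paper's conditional bounds.
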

\begin{proof}
Applying Lemma~\ref{lem: good prob} (notice that the precondition $|V(G_i)| \geq |V(\widehat{G})|$ is always satisfied), the probability that a fixed $G_i$ is not $(1/k)$-good is at most $1/k^2$. Denote by $\eset_i$ the event that $G_i$ is $(1/k)$-good for $(\widehat{G}, G_{i-1})$. 
\[{\mathbb P}[\wedge_i \eset_i]= \prod_{i}{\mathbb P}[\eset_i \mid \wedge_{j<i} \eset_j] \geq (1-1/k^2)^k \geq 1-1/k\]
\end{proof}

Now assume that all $G_i$ are $(1/k)$-good. 
From Lemma~\ref{lem:new product}, we have that, for all $i\leq k$, 
\[\vec{\alpha}(G_i) \leq \alpha(\widehat{G}) \cdot \vec{\alpha}(G_{i-1}) + |V(\widehat{G})| |V(G_{i-1})|^{1/k} \leq \alpha(\widehat{G})\cdot \vec{\alpha}(G_{i-1}) + |V(\widehat{G})|^2 \]
It is easy to derive that $\vec{\alpha}(G_k) \leq O(\alpha(\widehat{G})^k |V(\widehat{G})|^2)$, thus completing the proof of Lemma~\ref{lem:reduction}.

\setcounter{section}{2}

\section{Approximation Algorithms}\label{sec:algorithms}

In this section, we provide approximation algorithms to properly color
directed graphs promised, directly or indirectly, to have ``small''
dichromatic number.  First, we give an algorithm to color a
2-dicolorable digraph with few colors.  Then we extend this to
$\ell$-dicolorable digraphs.  Next, we consider digraphs with bounded
independence number, a direct generalization of tournaments (which
have independence number one).  Here, we can obtain bounds when we are
given additional promises such as that the input digraph is
2-dicolorable or that it is $C_3$-free.  We present some useful
observations after presenting some necessary notation.

We define $uv \in A$ to be an arc directed from $u$ to $v$.  For $v\in
V$, let $N^+(v) = \{u~|~ vu\in A\}$ and $N^-(v) = \{u~|~uv\in A\}$.
For $S \subset V$, we define $N^+(S) = \bigcup_{v \in S} N^+(v)$, and
we define $N^-(S)$ analogously.  For $S \subseteq V$, we use $D[S]$ to
denote the digraph induced on the vertex set $S$, although we
frequently abuse notation and refer to the induced subdigraph itself
as $S$.  Sometimes, we use $V(D)$ and $A(D)$ to refer to the vertex
and arc set, respectively.  We use $n$ to denote the number of
vertices in $D$, when it is clear to which digraph $D$ we are
referring.

\begin{observation}\label{O:lastVertex}
Let $D$ be a digraph such that $\chiv(D)\leq \ell$, for some
$1\leq \ell\leq n$. Then for every (induced) subgraph $D'$ of $D$,
there exists a vertex $v\in V(D')$ such that $\chiv(D'[N^+(v)]) \leq
\ell-1$.
\end{observation}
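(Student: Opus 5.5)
Fix an $\ell$-dicoloring $V(D)=C_1\cup\cdots\cup C_\ell$, where each $C_i$ induces an acyclic digraph. Since induced subdigraphs of acyclic digraphs are acyclic, the sets $C_i\cap V(D')$ form an $\ell$-dicoloring of $D'$. So it suffices to prove the statement for $D'$ itself, and we may assume $V(D')\neq\emptyset$. The plan is to take one color class and pick inside it a vertex with no out-neighbor in that class, using acyclicity.

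Choose any index $i$ with $C_i\cap V(D')\neq\emptyset$. The induced subdigraph $D'[C_i\cap V(D')]$ is acyclic and nonempty, so it has a sink. Concretely, following the last vertex of a topological ordering (or walking along out-arcs inside the class, which must terminate since there is no cycle) yields a vertex $v\in C_i\cap V(D')$ with no out-neighbor in $C_i\cap V(D')$. This is the vertex the name of the observation refers to.

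Now $N^+(v)\cap V(D')$ is disjoint from $C_i$. Therefore $N^+(v)\cap V(D')$ is covered by the sets $C_j\cap N^+(v)\cap V(D')$ for $j\neq i$, and each of these is acyclic as an induced subdigraph of $C_j$. Hence $\chiv(D'[N^+(v)])\le \ell-1$. If $\ell=1$, the same argument shows $N^+(v)\cap V(D')=\emptyset$, whose dichromatic number is $0$, so this case needs no separate treatment.

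There is no real obstacle here. The only points needing care are that $N^+(v)$ should be read within $D'$, and that a nonempty color class must be chosen so that a sink exists.
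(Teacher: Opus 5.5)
Your proof is correct and is essentially the paper's argument: the paper concatenates acyclic orderings of the color classes and takes the last vertex, which is a sink of the last (nonempty) class, so its out-neighborhood is covered by the remaining $\ell-1$ classes. Your version also makes explicit two points the paper leaves implicit, namely restricting the coloring to $D'$ and choosing a nonempty class.
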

 \begin{proof}
Since $\chiv(D)\leq \ell$, there is a partition of the vertices into
$V(D) = \bigcup_{1\leq i \leq \ell} S_i$ such that each set $S_i$ is
acyclic.  Now consider an ordering of $V(D)$ which consists of an
acyclic ordering of $S_1$ (i.e., with no backward arcs from $A(S_1)$)
followed by an acyclic ordering of $S_2$, and so on.  Notice that the
last vertex in this ordering $v$ has $N^+(v) \subseteq \bigcup_{1\leq
  j\leq \ell-1} S_j$, which is $(\ell-1)$-colorable.
\end{proof}

Given a digraph $D$ and a total order $\prec$ on $V(D)$, let $\BG$
denote the undirected graph whose edges correspond to arcs of $D$ that
are backwards with respect to $\prec$ (i.e., by the arcs $ vu \in
A(D)$ such that $u\prec v$).  In other words, we consider all
backwards arcs with respect to the ordering $\prec$ and then we ignore
the orientation of the arcs, resulting in an undirected {\em backedge
  graph}.

\begin{observation}[\cite{nguyen2025some}]\label{O:backedge}
Let $\BG$ be a backedge graph associated with $D$ and $\prec$. Then
$\chiv(D) \leq \chi(\BG)$. Moreover, any proper coloring of vertices
of $\BG$ is also a proper coloring for $D$.
\end{observation}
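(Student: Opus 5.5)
The statement to prove is Observation~\ref{O:backedge}: $\chiv(D) \leq \chi(\BG)$, and every proper coloring of $\BG$ is a valid dicoloring of $D$. The second claim implies the first, because an optimal proper coloring of $\BG$ then gives a dicoloring of $D$ with $\chi(\BG)$ colors. So I will only prove the ``moreover'' part: each color class $C$ of a proper coloring of $\BG$ induces an acyclic subdigraph $D[C]$.

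Fix a proper coloring of $\BG$ and a color class $C$. Since the coloring is proper, no edge of $\BG$ has both endpoints in $C$. Hence no arc $vu \in A(D)$ with $u,v\in C$ satisfies $u \prec v$. Every arc of $D[C]$ therefore goes forward with respect to $\prec$ restricted to $C$: for each arc $xy$ with $x,y \in C$ we have $x \prec y$. Digons need no separate treatment. A digon between $u$ and $v$ always contains one backward arc, so $uv$ is an edge of $\BG$, and $u$ and $v$ cannot share a color.

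It remains to show that $D[C]$ has no directed cycle. Suppose $x_1x_2\cdots x_mx_1$ is a directed cycle in $D[C]$. Every arc goes forward, so $x_1 \prec x_2 \prec \cdots \prec x_m \prec x_1$. This contradicts the fact that $\prec$ is a strict total order, hence irreflexive and transitive. Equivalently, $\prec$ restricted to $C$ is a topological ordering of $D[C]$. So each color class is acyclic, and the coloring is a dicoloring of $D$ with the same number of colors.

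No step here is a real obstacle. The only point needing care is the direction convention: ``backward'' means an arc $vu$ with $u \prec v$, so any arc missing from $\BG$ goes forward. Digons and loops need no special handling, since a digon always creates an edge of $\BG$ and loops do not arise in $D$.
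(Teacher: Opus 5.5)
Your proof is correct and follows the same route as the paper. The paper justifies the observation in one line, by noting that every stable set of $G_{\prec}$ is acyclic in $D$; you spell out why: all arcs inside a color class go forward in $\prec$, so $\prec$ restricted to the class is a topological order.
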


This follows from the fact that every stable set in $\BG$ is acyclic
in $D$. Recall that {\em $d$-degenerate} (undirected) graph is one in
which there is a vertex ordering such that each vertex has at most $d$
neighbors to the left.

\begin{observation}[Folklore]\label{P:degenerate}
A $d$-degenerate graph is $(d+1)$-colorable and such a coloring can be computed in polynomial time.
\end{observation}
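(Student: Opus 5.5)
The plan is to prove the two parts of Observation~\ref{P:degenerate} separately, both by the standard greedy argument. Let $G$ be $d$-degenerate and fix an ordering $v_1, v_2, \ldots, v_n$ of $V(G)$ in which every $v_i$ has at most $d$ neighbors $v_j$ with $j<i$.

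\textbf{Existence of the coloring.} Color the vertices in the order $v_1,\ldots,v_n$. When we reach $v_i$, give it the smallest color in $\{1,\ldots,d+1\}$ that is not used by any of its already-colored neighbors. Only the neighbors to the left of $v_i$ are colored at this moment, and there are at most $d$ of them. So at most $d$ colors are forbidden and a free color always exists. Each edge $v_jv_i$ with $j<i$ gets distinct colors at its endpoints, because $v_i$ avoided the color of $v_j$. Hence the coloring is proper and uses at most $d+1$ colors.

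\textbf{Finding the ordering in polynomial time.} The one point that needs an argument is that the ordering can be computed efficiently even though only its existence is given. We build it from the right end. Repeatedly delete a vertex of minimum degree in the current remaining graph, and place it in the rightmost unfilled position.

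This works because $d$-degeneracy is hereditary. Any induced subgraph $G'$ inherits the restricted ordering, and in that ordering the last vertex of $G'$ has all its $G'$-neighbors to its left. So its degree in $G'$ is at most $d$, and therefore every nonempty induced subgraph has a vertex of degree at most $d$. The minimum-degree vertex we delete at each step thus has at most $d$ neighbors among the vertices still remaining. Those remaining vertices are exactly the vertices placed to its left, so the constructed ordering witnesses $d$-degeneracy.

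\textbf{Running time.} Each step takes polynomial time, and there are $n$ steps. The greedy coloring is also polynomial, so the whole procedure runs in polynomial time.
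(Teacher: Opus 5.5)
Your proof is correct and is the standard greedy argument that the paper has in mind; the paper gives no proof and simply labels the observation as folklore. You color greedily along the degeneracy ordering, and when no ordering is supplied you recover one by repeatedly removing a minimum-degree vertex. In the paper's own uses (the backedge graph $\BG$ in Algorithms~\ref{alg:2coloring} and~\ref{alg:lcoloring}) the algorithm already constructs the witnessing order $\prec$, so only your greedy-coloring step is needed there.
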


A digraph $D$ is $d$-\textit{out-degenerate} if every induced subgraph of $D$ contains a vertex with out-degree at most $d$. An implication of Observation \ref{O:backedge} and \ref{P:degenerate} is the following.

\begin{observation}
    A $d$-out-degenerate digraph is $(d+1)$-colorable and such a coloring can be computed in polynomial time.
\end{observation}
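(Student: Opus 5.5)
The plan is to build a vertex ordering $\prec$ of $V(D)$ greedily from the right, in the style of a degeneracy ordering, and then apply Observation~\ref{O:backedge} together with Observation~\ref{P:degenerate} to the resulting backedge graph $\BG$.

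\textbf{Step 1 (the ordering).} Set $D_n = D$. For $i = n, n-1, \ldots, 1$:
\begin{itemize}
\item pick a vertex $v_i$ of minimum out-degree in $D_i$;
\item place $v_i$ in position $i$;
\item set $D_{i-1} = D_i - v_i$.
\end{itemize}
Since $D$ is $d$-out-degenerate, the induced subdigraph $D_i$ has a vertex of out-degree at most $d$. Hence $v_i$ has at most $d$ out-neighbours among $v_1,\ldots,v_{i-1}$. This takes polynomial time.

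\textbf{Step 2 (the backedge graph).} Let $\prec$ be the order $v_1 \prec \cdots \prec v_n$. A backward arc has the form $v_iv_j$ with $j<i$, so it is an out-arc of $v_i$ into $D_{i-1}$ (equivalently, into $D_i$). Therefore each vertex $v_i$ has at most $d$ neighbours in $\BG$ that precede it in $\prec$, namely its backward out-neighbours. So the ordering $v_1,\ldots,v_n$ witnesses that $\BG$ is $d$-degenerate.

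Note the direction of the bound. A later vertex $v_j$ might also be joined to $v_i$ in $\BG$ through an arc $v_jv_i$ with $i<j$. That edge, however, is counted only when we look leftward from $v_j$, where it is one of $v_j$'s backward out-arcs. Left-degree is what degeneracy requires, so this does not hurt the bound.

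\textbf{Step 3 (coloring).} By Observation~\ref{P:degenerate}, $\BG$ has a proper $(d+1)$-coloring computable in polynomial time. By Observation~\ref{O:backedge}, this is also a proper dicoloring of $D$, since every stable set of $\BG$ is acyclic in $D$.

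There is no real obstacle here. The only point needing care is orienting the ordering correctly: vertices removed first go last, so that out-arcs become backward arcs and are counted on the left.
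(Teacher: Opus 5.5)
Your proof is correct and takes the same route the paper indicates: build a reverse min-out-degree ordering so that the backedge graph $\BG$ is $d$-degenerate, then combine Observation~\ref{P:degenerate} with Observation~\ref{O:backedge}. The paper states the result only as an immediate consequence of those two observations, and your Steps 1 and 2 supply the ordering construction it leaves implicit.
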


We remark that we will always assume that a digraph $D$ that we want to color is strongly connected; if this were not the case, we can color each strongly connected component separately, using the same color palette.

\subsection{Coloring $\ell$-Dicolorable Digraphs}

We now consider the case in which we are promised an upper bound on
the dichromatic number of an input digraph and the goal is to color it
with few colors.  We first consider the case in which the input digraph $D$ is promised to be 2-colorable.  Recall that coloring such a digraph with two colors, even when it is oriented, is \NP-hard~\cite{bokal2004circular}.  
We present Algorithm \ref{alg:2coloring} and
use it to prove Theorem \ref{T:2-approximate}.

\begin{restatable}{theorem}{twoApproximate}\label{T:2-approximate}
Given a digraph $D$ with $\vec{\chi}(D) \leq 2$, there exists a
polynomial-time algorithm to properly color $D$ with at most
$2\sqrt{n}$ colors.
\end{restatable}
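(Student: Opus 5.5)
We follow a Wigderson-style scheme driven by Observation~\ref{O:lastVertex}. The basic fact is this: if $\chiv(D)\le 2$, then for every induced subdigraph $D'$ there is a vertex $v$ whose out-neighbourhood $N^+(v)$ in $D'$ is acyclic (take $\ell=2$). Such a vertex is easy to find in polynomial time, since we can test each vertex and check acyclicity of $D'[N^+(v)]$ by a topological sort.

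The algorithm maintains a current induced subdigraph $D'$, initially $D$, and repeats two phases. In the first phase, as long as some vertex $v$ of $D'$ has out-degree at least $\sqrt n$ in $D'$ and $D'[N^+_{D'}(v)]$ is acyclic, we give $N^+_{D'}(v)$ a fresh colour and delete it from $D'$. Each such step removes at least $\sqrt n$ vertices, so this happens at most $\sqrt n$ times and uses at most $\sqrt n$ colours. Each colour class is acyclic, because it equals $D'[N^+(v)]$ at the moment it is removed.

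When the first phase stops, every vertex $v$ of the remaining digraph $D'$ with acyclic out-neighbourhood has out-degree less than $\sqrt n$. Observation~\ref{O:lastVertex} applies to every induced subgraph of $D'$, since these are also induced subgraphs of $D$. So every induced subgraph $D''$ of $D'$ contains a vertex whose out-neighbourhood within $D''$ is acyclic. The step I expect to be the main obstacle is converting this into out-degeneracy. A vertex $v$ that is good in $D''$ need not have acyclic out-neighbourhood in $D'$, so the phase-one stopping condition says nothing about its degree in $D''$.

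To get around this, I would restructure the algorithm as a single peeling loop on the current digraph $D'$. At each step we pick a vertex $v$ with $D'[N^+_{D'}(v)]$ acyclic; one exists by Observation~\ref{O:lastVertex}.
\begin{itemize}
\item If $|N^+_{D'}(v)|\ge\sqrt n$, we colour $N^+_{D'}(v)$ with a fresh colour and delete it.
\item Otherwise, we put $v$ at the end of an ordering $\prec$ being built and delete $v$ alone.
\end{itemize}
Now every peeled vertex, at the time it was deleted, had fewer than $\sqrt n$ out-neighbours among the vertices still present, and these are exactly the vertices placed earlier in $\prec$. Hence in $\BG$ each peeled vertex has at most $\sqrt n-1$ neighbours that come before it in the removal order. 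So $\BG$ restricted to the peeled vertices is $(\sqrt n-1)$-degenerate, and by Observations~\ref{O:backedge} and~\ref{P:degenerate} these vertices can be properly dicoloured with at most $\sqrt n$ colours in polynomial time.

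Altogether the algorithm uses at most $\sqrt n+\sqrt n=2\sqrt n$ colours. Each step takes polynomial time and there are at most $n$ steps, so the running time is polynomial.
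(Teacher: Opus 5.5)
Your final single peeling loop is exactly the paper's Algorithm~\ref{alg:2coloring}, and you were right that the two-phase version does not give out-degeneracy. However, the degeneracy step fails as written, because your ordering $\prec$ runs in the wrong direction.

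You append $v$ at the \emph{end} of $\prec$. The vertices still present when $v$ is deleted are therefore peeled later, so they come \emph{after} $v$ in $\prec$, not ``earlier'' as you claim. An arc from $v$ to such a vertex is a forward arc and does not appear in $\BG$ at all. The arcs that do appear in $\BG$ go from a later-peeled vertex to an earlier-peeled one, and nothing in the algorithm bounds them.

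Concretely, take a transitive tournament with arcs $p_jp_i$ for all $j>i$. Let the algorithm peel the current sink each time; this is a legal choice, since the sink has out-degree $0$ and an acyclic out-neighbourhood. All $n$ vertices are then peeled in the order $p_1,\dots,p_n$. Every arc is backward with respect to your $\prec$, so $\BG=K_n$, which needs $n$ colours. Your sentence ``at most $\sqrt n-1$ neighbours that come before it in the removal order'' has the same reversal: the neighbours you control are those removed \emph{after} it.

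The fix is the paper's choice: insert $v$ at the \emph{front}, i.e.\ set $v\prec u$ for every $u$ already in $S$, so that $\prec$ is the reverse of the removal order. Now let $y$ be a $\BG$-neighbour of $v$ with $y\prec v$. The edge must come from the backward arc $vy$, and $y$ was still present when $v$ was deleted. Hence $v$ has fewer than $\sqrt n$ such neighbours, and $\BG$ is $(\sqrt n-1)$-degenerate along $\prec$. Observations~\ref{P:degenerate} and~\ref{O:backedge} then give the $\sqrt n$ colours for the peeled vertices, and the rest of your colour count and running-time argument goes through unchanged.
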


Algorithm \ref{alg:2coloring} has a similarity to Wigderson's algorithm for coloring 3-colorable graphs~\cite{wigderson1983improving}.  However, one key difference between coloring 3-colorable graphs and coloring 2-dicolorable digraphs is that in the case of the former, each vertex has a neighborhood that is easy to color, since each vertex neighborhood is bipartite.  In the latter case, we are only guaranteed the existence of two vertices, whose out-neighborhoods are acyclic and are therefore easy to color, and both of these vertices might have small out-degree, preventing us from making progress.  We circumvent this issue by finding another way to remove small-degree vertices, using Observation \ref{O:backedge}.

\vspace{3mm}

\begin{algorithm}[H]
\caption{Dicoloring($D$)}
\label{alg:2coloring}
Input: {A digraph $D$ on $n$ vertices with $\chiv(D)\leq 2$.}

Output: A coloring of $V(D)$ using at most $2\sqrt{n}$ colors.
\begin{enumerate}
\item Initialize $i=1$, set $S=\emptyset$, and an order $\prec$ on $S$.
\item While $(V(D))$:
\begin{enumerate}
\item Let $v\in V(D)$ be a vertex such that $D[N^+(v)]$ is acyclic. (Observation~\ref{O:lastVertex})

\item If $|N^+(v)|\geq \sqrt{n}$, then:
\begin{enumerate}
    \item Assign color $i$ to each vertex in $N^+(v)$.
    \item Set $i := i+1$ and $D := D[V(D)\setminus N^+(v)]$.
\end{enumerate}

 \item Else if $|N^+(v)|< \sqrt{n}$:
 \begin{enumerate}
     \item Add $v$ to $S$ and update $\prec$ such that $v\prec u$ for each $u\in S\setminus \{v\}$.
     \item  $D := D[V(D)\setminus \{v\}]$.
 \end{enumerate}

\end{enumerate}
\item Let $\BG$ be the backedge graph corresponding to $D[S]$ and the order $\prec$.
\item Color $\BG$ using $\sqrt{n}$ new colors from $\{\sqrt{n}, \ldots, 2\sqrt{n}\}$. (Observation~\ref{P:degenerate}) 
\item If a vertex $v\in V(\BG)$ receives color $j$, then assign color $j$ to $v$ in $D$. (Observation~\ref{O:backedge})
\end{enumerate}

\end{algorithm}

\begin{proof}[Proof of Theorem \ref{T:2-approximate}]
We use the following claims to prove the correctness of Algorithm \ref{alg:2coloring}.
\begin{claim}\label{O:trivial}
    In Step~2 of Algorithm \ref{alg:2coloring}, we assign at most $\sqrt{n}$ colors from colors $\{1,\ldots,\sqrt{n}\}$.
\end{claim}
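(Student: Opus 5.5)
The claim follows from a counting argument on the number of vertices deleted in Step~2(b). I will argue that each time the branch of Step~2(b) is executed, the set $N^+(v)$ that receives the current color $i$ has at least $\sqrt{n}$ vertices, where $n$ is the number of vertices of the original input digraph; the threshold $\sqrt{n}$ in the algorithm is fixed at the start and is not recomputed as $D$ shrinks. The sets colored in different executions of Step~2(b) are pairwise disjoint. This is because immediately after coloring $N^+(v)$ we replace $D$ by $D[V(D)\setminus N^+(v)]$, so these vertices never belong to the current digraph again. In particular, they never reappear in any later out-neighborhood, which is always computed inside the current $D$.

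Let $t$ be the number of executions of Step~2(b), so that the colors used in Step~2 are exactly $1,\ldots,t$. Since the colored sets are disjoint subsets of the original vertex set, each of size at least $\sqrt{n}$, we get $t\sqrt{n}\leq n$, and hence $t\leq \sqrt{n}$. So every color assigned in Step~2 lies in $\{1,\ldots,\sqrt{n}\}$, up to the routine rounding of $\sqrt{n}$ to an integer, which I would handle by reading $\sqrt{n}$ as $\lceil\sqrt{n}\rceil$ throughout.

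I should also check that each colored set is genuinely a valid color class, though this is not strictly part of the counting. By Observation~\ref{O:lastVertex}, the current $D$ is an induced subgraph of a $2$-dicolorable digraph, so a vertex $v$ with $D[N^+(v)]$ acyclic exists. The algorithm can find it in polynomial time by testing each vertex for acyclicity of its out-neighborhood. Hence Step~2(a) is well defined, and each color class $N^+(v)$ is acyclic in the current digraph. It then remains acyclic in the final coloring, since later steps use disjoint color sets.

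The only point needing care is that the threshold uses the original $n$ rather than the current size of $D$. With that in place, the argument is a one-line disjointness count, and I do not expect any real obstacle.
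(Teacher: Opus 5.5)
Your proof is correct and is essentially the paper's argument. Each new color in Step~2 removes a fresh, disjoint set of at least $\sqrt{n}$ vertices, where $n$ is the original vertex count, so at most $n/\sqrt{n}=\sqrt{n}$ colors are used. Your additional remarks on rounding and on acyclicity of the color classes are sound but go beyond what the paper verifies in this claim.
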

\begin{cproof}
    Since each time we use a new color we remove at least $\sqrt{n}$ vertices, we assign at most $\sqrt{n}$ colors in Step~2.
\end{cproof}

\begin{claim}
    The coloring assigned by Algorithm~\ref{alg:2coloring} is a proper coloring.
\end{claim}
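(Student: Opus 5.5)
We need to show that every color class produced by Algorithm~\ref{alg:2coloring} induces an acyclic subdigraph of the original input $D$. Call the input $D_0$ to distinguish it from the shrinking digraph $D$. The colors split into two disjoint palettes: colors $\{1,\ldots,\sqrt{n}\}$ are used in Step~2, and the new colors from Step~4 are used on $S$. It therefore suffices to check the two palettes separately. No vertex gets two colors, since each vertex is deleted from $D$ as soon as it is either colored or placed into $S$.

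\emph{Step~2 classes.} Color $i$ is assigned exactly once, to the set $N^+(v)$ computed in the current digraph $D$. By the choice of $v$ via Observation~\ref{O:lastVertex}, $D[N^+(v)]$ is acyclic. The current $D$ is an induced subdigraph of $D_0$, so $D_0[N^+(v)] = D[N^+(v)]$ and this class is acyclic in $D_0$. Observation~\ref{O:lastVertex} applies at every iteration because each current $D$ is an induced subdigraph of $D_0$ and $\chiv(D_0)\le 2$.

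\emph{Step~4 classes.} We have to check that the coloring of $\BG$ with at most $\sqrt n$ colors exists and then invoke Observation~\ref{O:backedge} for $D_0[S]$ with the order $\prec$. Consider a vertex $v$ at the moment it is added to $S$. It is placed before every vertex already in $S$, so the vertices that follow $v$ in $\prec$ are exactly those added to $S$ earlier. Those earlier vertices were deleted before $v$ was processed. Every vertex added to $S$ later is still present in $D$ at that moment, so any such $u$ with $vu\in A(D_0)$ lies in $N^+(v)$ of the current $D$.

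A back arc at $v$ toward a vertex preceding it has the form $vu$ with $u\prec v$, meaning $u$ was added after $v$. By the previous paragraph each such $u$ is an out-neighbour of $v$ in the current $D$. Since $v$ entered $S$ in Step~2(c), $|N^+(v)|<\sqrt n$. Hence each vertex has fewer than $\sqrt n$ back-arc neighbours that come earlier in $\prec$. Listing the vertices of $S$ in the order $\prec$, every vertex therefore has at most $\sqrt n - 1$ neighbours in $\BG$ to its left, so $\BG$ is $(\sqrt n-1)$-degenerate. By Observation~\ref{P:degenerate} it has a proper coloring with at most $\sqrt n$ colors, computable in polynomial time. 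By Observation~\ref{O:backedge} this is a proper dicoloring of $D_0[S]$.

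The main obstacle is this direction bookkeeping: confirming that the arcs which are backward for $\prec$ are exactly the out-arcs from $v$ to vertices removed after it, so that they are controlled by the small out-degree. Both palettes then give acyclic classes, so the whole coloring is proper.
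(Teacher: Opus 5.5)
Your proof is correct and follows the same route as the paper: the Step~2 classes are acyclic by Observation~\ref{O:lastVertex}, and the classes on $S$ are proper because $\BG$ is $(\sqrt{n}-1)$-degenerate, via Observations~\ref{P:degenerate} and~\ref{O:backedge}. The one difference is that you verify the degeneracy, while the paper only asserts it: a back arc $vu$ with $u\prec v$ goes to a vertex added to $S$ after $v$, so $u$ lies in $v$'s out-neighbourhood when $v$ is removed, which has size less than $\sqrt{n}$.
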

\begin{cproof}
First, observe that the backedge graph $\BG$ corresponding to $D[S]$ and $\prec$ is ($\sqrt{n}-1$)-degenerate. Hence we can properly color $G$ using $\sqrt{n}$ colors from $\{\sqrt{n}, \ldots, 2\sqrt{n}\}$ 
\footnote{Notice that if we actually assign $\sqrt{n}$ colors in Step~2, then we have colored the whole graph.  So if we reach Step~4, we can begin with color $\sqrt{n}$.}
due to Observation~\ref{P:degenerate}. Moreover, due to Observation~\ref{O:backedge}, this coloring is a proper coloring for $D[S]$. Finally, since  we assign a unique color from $\{1,\ldots,\sqrt{n}\}$ (distinct from colors assigned to vertices of $D[S]$) to an acyclic subdigraph of $D$ in each iteration of Step~2, this directly extends to a proper coloring of $D$. 
\end{cproof}

Since we assign at most $\sqrt{n}$ colors in Step~2 and at most $\sqrt{n}$ colors in Step~3--5 (Observation~\ref{P:degenerate}), we use a total of at most $2\sqrt{n}$ many colors to color $D$.
\end{proof}

We remark that Algorithm \ref{alg:2coloring} can easily be modified so that, on an input digraph $D$ that is not 2-dicolorable, it either outputs a coloring of $D$ with at most $2\sqrt{n}$ colors or outputs an induced subgraph of $D$ in which no vertex has an acyclic out-neighborhood, which is a proof that $D$ is not 2-dicolorable.

\vspace{5mm}

A positive answer to the following question would immediately yield an improved bound on the number of colors needed to color a 2-colorable digraph.

\begin{restatable}{question}{boundedMaxDegree}\label{KMSQuestion}
Given a 2-dicolorable $d$-out-degenerate digraph $D$, can we find a coloring of $D$ using at most $d^{1-\epsilon}$ colors in polynomial time?
   \end{restatable}

Next, we provide a $O(n^{2\ell})$-time algorithm to color a digraph $D$
with $\chiv(D) \leq \ell$ using at most $O(n^{1-\frac{1}{\ell}})$
colors, which we use to prove Theorem \ref{T:l-approximate}, of which
Theorem \ref{T:2-approximate} is a special case.

\vspace{3mm}

\begin{algorithm}[H]
\caption{DicoloringGeneral($D$, $\ell+1$)}
\label{alg:lcoloring}
Input: {A digraph $D$ on $n$ vertices with $\chiv(D)\leq \ell+1$.}

Output: A coloring of $V(D)$ using $O(n^{1-\frac{1}{\ell+1}})$ colors.
\begin{enumerate}
\item Initialize $i=0$, set $S=\emptyset$, and an order $\prec$ on $S$.
\item While $(V(D))$:
\begin{enumerate}

\item While there is a vertex $v\in V(D)$ such that $|N^+(v)| < n^{1-\frac{1}{\ell+1}}$, then:
\begin{enumerate}
     \item Add $v$ to $S$ and update $\prec$ such that $v\prec u$ for each $u\in S\setminus \{v\}$.
     \item  $D := D[V(D)\setminus \{v\}]$.
 \end{enumerate}
\item For vertex $v\in V(D)$, run
  DicoloringGeneral($D[N^+(v)]$, $\ell$) until a successful vertex $v$ is
  found (i.e., $D[N^+(v)]$ is colored with $\ell\cdot
  n^{1-\frac{1}{\ell}}$ colors). (Observation~\ref{O:lastVertex})

\item If no ``successful'' vertex is found, return {\bf FAILURE}.

\item If the algorithm colors $N^+(v)$ in Step (b) using $p$
  colors, where $v$ is a ``successful'' vertex, then:
\begin{enumerate}
    \item If a vertex $u\in N^+(v)$ is assigned color $b$, then  assign color $i+b$ to $u$ in $D$.
    \item Set $i := i+p$ and $D := D[V(D)\setminus N^+(v)]$.
\end{enumerate}

\end{enumerate}
\item Let $\BG$ be the backedge graph corresponding to $D[S]$ and the order $\prec$.
\item Color $\BG$ using $n^{1-\frac{1}{\ell+1}}$ new colors from $\{i+1, \ldots, i+n^{1-\frac{1}{\ell+1}} \}$. (Observation~\ref{P:degenerate})
\item If a vertex $v\in V(\BG)$ receives color $j$, then assign color $j$ to $v$ in $D$. (Observation~\ref{O:backedge})
\end{enumerate}

\end{algorithm}

\begin{restatable}{theorem}{lApproximate}\label{T:l-approximate}
Given a digraph $D$ with $\chiv(D) \leq \ell$, there exists an
algorithm to properly color $D$ with at most $\ell\cdot
n^{1-\frac{1}{\ell}}$ colors in time $O(n^{2\ell})$.
\end{restatable}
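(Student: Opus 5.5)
We argue by induction on $\ell$, analysing Algorithm~\ref{alg:lcoloring} (renaming $\ell+1$ to $\ell$). We prove that on a digraph $D$ with $n$ vertices and $\chiv(D)\le \ell$, the algorithm never returns FAILURE. It outputs a proper dicoloring with at most $\ell\, n^{1-1/\ell}$ colors, in time $O(n^{2\ell})$.

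For the base case $\ell=1$, $D$ is acyclic, and one color (namely $1\cdot n^{0}$) suffices; finding it takes linear time. Theorem~\ref{T:2-approximate} is the case $\ell=2$ and follows the same pattern.

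For the inductive step, assume the statement for $\ell$ and let $\chiv(D)\le \ell+1$. The colors come from two sources, and we bound each by $n^{1-\frac{1}{\ell+1}}$ or a multiple of it.

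\emph{Greedy phase (Step (b)).} Whenever Step (b) is reached, every remaining vertex has out-degree at least $n^{1-\frac{1}{\ell+1}}$ in the current digraph. The current digraph is an induced subgraph of $D$, so by Observation~\ref{O:lastVertex} some vertex $v$ has $\chiv(D[N^+(v)])\le \ell$. By induction, the recursive call on $D[N^+(v)]$ succeeds and uses at most $\ell\,|N^+(v)|^{1-1/\ell}$ colors. Hence a successful vertex always exists, and FAILURE never occurs. Each such round removes $m=|N^+(v)|\ge n^{1-\frac{1}{\ell+1}}$ vertices at a cost of at most $\ell m^{1-1/\ell}=\ell m\cdot m^{-1/\ell}\le \ell m\, n^{-\frac{1}{\ell+1}}$ colors, using $m^{-1/\ell}\le n^{-\frac{1}{\ell+1}}$, which holds since $m^{1/\ell}\ge n^{\frac{1}{\ell+1}}$ follows from $m\ge n^{\ell/(\ell+1)}$. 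The removed sets are disjoint, so the $m$'s sum to at most $n$, and the greedy phase uses at most $\ell\, n^{1-\frac{1}{\ell+1}}$ colors in total. Different rounds use disjoint color ranges, and each round's coloring is proper inside its own part. Therefore any cycle meeting two rounds sees two colors.

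\emph{Backedge phase (Steps 3--5).} The vertices in $S$ were removed one at a time. Each had out-degree less than $n^{1-\frac{1}{\ell+1}}$ in the digraph remaining at its removal, which contains all later-removed vertices of $S$. With $\prec$ placing later-removed vertices first, each vertex of $S$ has fewer than $n^{1-\frac{1}{\ell+1}}$ backward out-arcs. I have to double-check that the in-arc backedges are also controlled; otherwise I would apply degeneracy directly to $D[S]$. The removal order shows that $D[S]$ is $(n^{1-\frac{1}{\ell+1}}-1)$-out-degenerate in the needed sense, since each vertex has few out-neighbours among vertices removed after it. The out-degenerate observation then gives a coloring of $D[S]$ with $n^{1-\frac{1}{\ell+1}}$ fresh colors.

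Adding the two phases gives at most $(\ell+1)n^{1-\frac{1}{\ell+1}}$ colors. Properness holds because every color class lies entirely inside one greedy round's class or inside one class of the $S$-coloring, and each of these is acyclic.

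For the running time, let $T_\ell(n)$ be the time for bound $\ell$. Each greedy round tries at most $n$ vertices, each costing $T_\ell(n)$, and there are at most $n^{\frac{1}{\ell+1}}\le n$ rounds. This gives a crude bound $T_{\ell+1}(n)\le n^2T_\ell(n)+\mathrm{poly}$, hence $T_\ell(n)=O(n^{2\ell})$. The main obstacle I expect is the careful accounting in the greedy phase, namely the bound $m^{-1/\ell}\le n^{-1/(\ell+1)}$. A subtlety there is that $n$ refers to the original size, while recursive thresholds use the sizes of subproblems. To handle this, I would fix the threshold using the top-level $n$ throughout each level.
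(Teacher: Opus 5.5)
Your proposal is correct and follows the paper's proof. It uses the same Algorithm~\ref{alg:lcoloring}, the same induction on $\ell$, a degeneracy bound for $D[S]$, and the same recurrence $T_{\ell+1}(n)\le n^2 T_\ell(n)+\mathrm{poly}(n)$ for the running time. Your worry about in-arc backedges is unnecessary: degeneracy only counts the neighbours that precede a vertex in the order, and those are exactly its out-arcs to later-removed vertices. The one real difference is the greedy-phase count. The paper bounds the number of rounds by $q\le n^{1/(\ell+1)}$ and applies Jensen's inequality to $\sum_j |S_j|^{(\ell-1)/\ell}$. You instead charge each round's cost $\ell m^{1-1/\ell}\le \ell m\, n^{-1/(\ell+1)}$ to its $m$ removed vertices, which reaches the same bound $\ell\, n^{1-1/(\ell+1)}$ more directly.
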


\begin{proof}
We will prove Theorem \ref{T:l-approximate} using
Algorithm~\ref{alg:lcoloring}.  Let $D$ be a digraph such that
$\vec{\chi}(D) \leq \ell+1$.  Due to \Cref{O:lastVertex}, there exists
at least one vertex $v\in V(D)$ such that $\vec{\chi}(N^+(v))\leq
\ell$.  Thus, each time we run Step 2, Step (b) will find a successful
vertex.  Let us assume by induction that DicoloringGeneral($D'$,
$\ell$) returns a coloring of $D'$ using at most $\ell \cdot
V(D')^{1-\frac{1}{\ell}}$ colors in $O(V(D')^{2\ell})$ time when
$\chiv(D') \leq \ell$.  Notice that this statement is true in the base
case when $\ell=2$ by Theorem \ref{T:2-approximate}.  We use the
following claim.

\begin{claim}\label{C:lcolor}
The maximum value of $i$ in an execution of Algorithm~\ref{alg:lcoloring} is at most $\ell\cdot n^{1-\frac{1}{\ell+1}}$.
    \end{claim}
    \begin{cproof}
Observe that the value of $i$ increases only in Step~2(d) of Algorithm
\ref{alg:lcoloring}, and this occurs when we color some (at least
$n^{1-\frac{1}{\ell+1}}$) vertices of $D$.  Suppose this process occurs $q$ times; that is, there are $q$ disjoint sets of vertices
$S_1,\ldots, S_q$ such that $S_j$ is colored in the $j$'th execution of
Step~2(b) of Algorithm \ref{alg:lcoloring} using at most $\ell\cdot
|S_j|^{1-\frac{1}{\ell}} = \ell\cdot|S_j|^{\frac{\ell-1}{\ell}}$
colors.  Then we have
\begin{eqnarray*}
q \cdot n^{1-\frac{1}{\ell+1}} \leq_{j \in [q]} |S_j| \leq n \quad
\Rightarrow \quad q \leq n^{\frac{1}{\ell+1}}.
\end{eqnarray*}

At the end of the algorithm, $i = \sum_{j\in [q]} \ell\cdot
|S_j|^{\frac{\ell-1}{\ell}}$. 
To complete the proof, we need to establish that $\sum_{j\in [q]}
\ell\cdot |S_j|^{\frac{\ell-1}{\ell}} \leq \ell\cdot
n^{1-\frac{1}{\ell+1}}$.  
Letting $n_j = |S_j|$ and noting that 
the function $f(x) = x^{\frac{\ell-1}{\ell}}$ is concave,
we have by Jensen's inequality,
\begin{eqnarray*}
\sum_{j\in [q]} n_j^{\left( \frac{\ell-1}{\ell} \right)} \leq q \bigg(
\frac 1 q\sum_{j\in [q]} n_j \bigg)^{\frac{\ell-1}{\ell}} \leq 
q \left( \frac{n}{q}\right)^{\frac{\ell-1}{\ell}} =
q^{\frac{1}{\ell}} n^{\frac{\ell-1}{\ell}} \leq 
(n^{\frac{1}{\ell+1}})^{\frac{1}{\ell}} n^{\frac{\ell-1}{\ell}} =
n^{\frac{\ell}{\ell+1}}
= n^{1-\frac{1}{\ell+1}}.
\end{eqnarray*}
This completes the proof of the claim.
    \end{cproof}

Thus, due to \Cref{C:lcolor}, we have that the total number of colors
used in Step~2 of Algorithm \ref{alg:lcoloring} is at most $\ell\cdot
n^{1-\frac{1}{\ell+1}}$. Since we use at most $n^{1-\frac{1}{\ell+1}}$
colors in Step~4 and 5 of the algorithm, we have that the total number
of colors used by Algorithm \ref{alg:lcoloring} is at most
$(\ell+1)\cdot n^{1-\frac{1}{\ell+1}}$. Finally, the running time
follows from the fact that after at most $n$ invocations of the
subroutine to color $\ell$-colorable digraphs using at most $\ell\cdot
n^{1-\frac{1}{\ell}}$ colors, we remove the (non-empty) out-neighborhood of at
least one vertex from the digraph $D$.  Thus, we call the subroutine
DicolorGeneral($D'$, $\ell$) at most $n^2$ times for a total running
time of $O(n^2 n^{2\ell}) = O(n^{2(\ell+1)})$.
\end{proof}

\subsection{Coloring Digraphs with Bounded Independence Number}

We now consider digraphs which have bounded independence number.  The
{\em independence number} of a digraph is the size of the maximum
independent set of its underlying undirected graph.  Tournaments are a
well-studied class of digraphs with independence number one.  Digraphs
with bounded independence number are sometimes referred to as {\em
  dense digraphs}, since for fixed independence number $\alpha$, they
have a quadratic number of edges. 
One can hope that some of the algorithmic results for tournaments can
be extended to general digraphs parametrized by independence number.
In this section, we show that this is the case (i) when we consider
digraphs promised to be 2-dicolorable, and (ii) when we consider
oriented digraphs with no $C_3$, where $C_k$ is a cyclically oriented
cycle on $k$ vertices.
The latter result improves upon a bound given
by \cite{harutyunyan2019coloring} via a simpler algorithm.

We refer to the set of vertices that have no arc from or towards $v$
as its \emph{non-neighborhood} $N^o(v)$ (i.e., $N^o(v) = V\setminus
\{N^+(v) \cup N^-(v) \cup v\}$).  We say there is a \emph{non-edge}
between $u$ and $v$ if there is no arc in $A$ from $u$ to $v$ or from
$v$ to $u$.  If we are referring to an non-edge with a fixed direction
(i.e., a non-edge $e = uv$), then we refer to it as a \emph{non-arc}.
Notice that if $uv$ is a non-arc, then there is a non-edge between $u$
and $v$ (i.e., if $uv$ is an arc, then we do not say that $vu$ is a
non-arc).  For an arc or non-arc $e=uv$, we use $N(e)$ to denote the
vertices in $N^+(v) \cap N^-(u)$.

The following observation is useful in digraphs with bounded independence number.  

\begin{observation}\label{obs:alphaMinusOne}
Let $D$ be a digraph with independence number at most $\alpha$. Then for
every vertex $v \in V(D)$, the digraph $D[N^o(v)]$ has independence number at most $\alpha -1$.
\end{observation}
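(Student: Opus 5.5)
The argument is a one-step contradiction: an independent set in $D[N^o(v)]$ extends by $v$ itself. First we must fix what ``independence number of $D[N^o(v)]$'' means. It is the maximum size of a set $I \subseteq N^o(v)$ such that no two vertices of $I$ are joined by an arc of $D$ in either direction. Because $D[N^o(v)]$ is an induced subdigraph, its arcs are exactly the arcs of $D$ with both endpoints in $N^o(v)$. So $I$ is independent in $D[N^o(v)]$ if and only if it is independent in the underlying undirected graph of $D$.

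Fix $v \in V(D)$ and let $I \subseteq N^o(v)$ be an independent set of $D[N^o(v)]$. We claim that $I \cup \{v\}$ is independent in the underlying graph of $D$. Pairs inside $I$ are non-adjacent by the choice of $I$. For a pair $\{u, v\}$ with $u \in I$, the definition $N^o(v) = V \setminus (N^+(v) \cup N^-(v) \cup \{v\})$ gives $u \notin N^+(v)$ and $u \notin N^-(v)$. Hence neither $vu$ nor $uv$ is an arc, and $u$ and $v$ are non-adjacent. Also $v \notin I$, since $v \notin N^o(v)$, so $|I \cup \{v\}| = |I| + 1$.

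Since $D$ has independence number at most $\alpha$, we get $|I| + 1 \le \alpha$, that is, $|I| \le \alpha - 1$. As $I$ was an arbitrary independent set of $D[N^o(v)]$, the independence number of $D[N^o(v)]$ is at most $\alpha - 1$.

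There is essentially no obstacle. The only points needing care are that independence is taken in the underlying graph, so both arc directions must be excluded, and that $v \notin N^o(v)$, so the added vertex really increases the size. If $N^o(v) = \emptyset$, the independence number is $0 \le \alpha - 1$ because $\alpha \ge 1$ for any nonempty $D$.
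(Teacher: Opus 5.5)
Your proof is correct: any independent set $I \subseteq N^o(v)$ extends to the independent set $I \cup \{v\}$ in the underlying graph of $D$, which forces $|I| \le \alpha - 1$. The paper states this observation without proof, and your argument is exactly the one-line reasoning it takes for granted, with the points about both arc directions and $v \notin N^o(v)$ handled properly.
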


One of the main tools presented in \cite{klingelhoefer2024coloring} for coloring tournaments is a {\em path decomposition}, in which a tournament is decomposed into vertex sets, which are essentially the neighborhoods of arcs on a shortest path between two vertices $s$ and $t$, where $N(e)$ (defined above) is the neighborhood of arc $e$.  Moreover, if $s$ and $t$ are chosen such that $N^+(s)$ and $N^-(t)$ each have small dichromatic number and $N(e)$ has small dichromatic number for each arc $e$ on the shortest path, then this decomposition can be used to show that the whole tournament has small dichromatic number.  The main idea is that we can reuse color palettes for neighborhoods of arcs that are far apart on the shortest path, because arcs between these neighborhoods must go backwards, otherwise it would contradict the minimality of the shortest path.  

In this section, we adapt this path decomposition to oriented digraphs by taking a shortest path---not just of arcs---but of arcs and non-arcs (in other words, non-edges can be included in the shortest path to be traversed in
either direction).  Thus, we can fix some pair of vertices $s$ and $t$, and then we can partition the vertex set of a digraph into neighborhoods of arcs on the shortest path and non-neighborhoods of vertices on the shortest path.  We can still ensure that arcs between neighborhoods of arcs and arcs between non-neighborhoods of vertices that are far apart on the shortest path must go backwards, allowing us to reuse a relatively small color palette to color all the vertices of the digraph, except $N^+(s)$ and $N^-(t)$, which must therefore be chosen carefully.

Hence, to apply the path decomposition, the main challenges are finding a good vertex pair $s,t$ such that $N^+(s)$ and $N^-(t)$ each have small dichromatic number, and showing that the non-neighborhoods of the vertices and the neighborhoods of the arcs or non-arcs on a shortest path from $s$ to $t$ each have small dichromatic number.  However, we can define the decomposition for any $s,t$ vertex pair.

\subsubsection{Decomposition for Digraphs}

In this section, we define a decomposition for digraphs based on a shortest path from $s$ to $t$ for any vertex pair $s$ and $t$.  In the next definition, we have $v_0 := s$ and $v_k := t$.

\begin{definition}\label{def:vc_gen_dig}
We define a \emph{vertex chain} $(v_i)_{0\leq i \leq k}$ of a digraph
$D$ as follows: Let $v_0$ and $v_k$ be a pair of vertices and let
$(v_i)_{0\leq i \leq k}$ be the vertices in the shortest path from
$v_0$ to $v_k$, where the path may consist of both forward arcs and
non-arcs (but no backward arcs).  If in addition, $\chiv(N^+(v_0))
\leq b$ and $\chiv(N^-(v_k)) \leq b$, then we call it a
\emph{$b$-vertex chain}.
\end{definition}

We remark that in the above definition, $v_0$ and $v_k$ are not
necessarily distinct, in which case the vertex chain consists of
a single vertex and the shortest path has length zero.

Additionally, we define an \emph{edge chain} $(e_i)_{1 \leq i \leq k}$
corresponding to a vertex chain, where $e_i$ is the arc or non-arc
from $v_{i-1}$ to $v_i$.  We build zones that can be efficiently
colored, and such that arcs between zones at distance more than four
(i.e., \textit{long} arcs) go backwards.

\begin{definition}\label{def:edge_dig} Given a vertex chain $(v_i)_{0\leq i \leq k}$, a \ndecomp of a digraph $D$ is defined as:
	\begin{itemize}
	\item $D_1 = N(e_1)$.
		\item For $2 \leq i \leq k$, $D_i = N(e_i) \setminus (\cup_{1\leq j \leq i-1} D_j)$.
		\item For $0 \leq i \leq k$, $N_i = N^o(v_i)\setminus{(\cup_{1\leq j \leq k} D_j \cup_{0 \leq j \leq i-1} N_j)}$. 
		\item $D_0 = N^+(v_0)\setminus (\cup_{1\leq j \leq k} (D_j \cup N_j)\cup N_0 )$.
		\item $D_{k+1} = N^-(v_k) \setminus \cup_{0\leq j \leq
                  k} (D_j \cup N_j)$.
\item $Z = V \setminus (\cup_{0\leq j \leq
                  k} (D_j \cup N_j) \cup D_{k+1})$.
	\end{itemize}
\end{definition}

First we prove that this is indeed a decomposition of $D$.

\begin{lemma}\label{clm:decomp_dig}
Let $D=(V,A)$ be a digraph, and $(D_0, \ldots, D_{k+1}, N_0, \ldots,N_k, Z)$ be a \ndecomp of $D$.  Then $Z \subseteq \{v_0,v_k\}$.
\end{lemma}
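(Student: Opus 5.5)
Fix $u \in V \setminus \{v_0,v_k\}$; we show that $u$ lies in some $D_i$ ($0\le i\le k+1$) or some $N_i$ ($0\le i\le k$). Since the sets in Definition~\ref{def:edge_dig} are obtained greedily by removing everything already placed, it is enough to show that $u$ belongs to at least one of the unrestricted sets $N(e_i)$ ($1\le i\le k$), $N^o(v_i)$ ($0\le i\le k$), $N^+(v_0)$, or $N^-(v_k)$. The first such set in the order of the definition then receives $u$, so $u\notin Z$.

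Assume for contradiction that $u$ lies in none of these sets; we derive a contradiction with the minimality of the path.

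\emph{Step 1.} Since $u \notin N^o(v_i)$ and $u\ne v_i$ (the path vertices other than $v_0,v_k$ will be handled separately), $u$ is adjacent to every $v_i$. So for each $i$ exactly one of $v_iu$ and $uv_i$ is an arc, because $D$ is oriented in this section. If $D$ could have digons, we would say instead that at least one of them is an arc. A digon at $v_0$ would put $u\in N^+(v_0)$, so this case causes no trouble.

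\emph{Step 2.} Since $u\notin N^+(v_0)$, we have $uv_0 \in A$. Since $u \notin N^-(v_k)$, we have $v_k u\in A$. Let $j$ be the smallest index with $v_ju\in A$; then $j\ge 1$ and $uv_{j-1}\in A$. This gives $u \in N^+(v_{j-1})^{c}$. More precisely, $u\in N^-(v_{j-1})$ and $u\in N^+(v_j)$, so $u \in N^+(v_j)\cap N^-(v_{j-1}) = N(e_j)$. This contradicts the assumption.

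\emph{Step 3.} The main subtlety is the vertices $v_i$ themselves with $0<i<k$, since $v_i\notin N^o(v_i)$. If the path has length at least $2$, then $v_{i-1}$ and $v_{i+1}$ are not joined by a forward arc or a non-edge, by minimality of the path; otherwise the path could be shortcut. Hence $v_{i+1}v_{i-1}\in A$. This is not yet what we need. Instead, look at $v_i$ relative to $e_1=v_0v_1$ and to $v_0$ directly. If $i\ge 2$, then $v_i$ is not a non-neighbor of $v_0$ and $v_0v_i \notin A$, again by minimality. So $v_iv_0\in A$. Now repeat the Step~2 argument with $u=v_i$, scanning indices $0,\dots,i-1$ and using $v_{i-1}v_i$ (forward arc or non-arc). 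Either some $v_j$ with $j<i-1$ is a non-neighbor, in which case $v_i\in N^o(v_j)$, or we find a switch at some $j\le i-1$ that places $v_i\in N(e_j)$. The remaining case, where $e_i$ is a non-arc and the scan needs care, is settled in the same way by $v_i\in N^o(v_{i-1})$.

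\emph{Step 4.} Finally, if $i=1$, then $v_1\in N^+(v_0)$ when $e_1$ is an arc, and $v_1\in N^o(v_0)$ when $e_1$ is a non-arc.

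The step I expect to be the main obstacle is Step~3, together with the degenerate cases $v_0=v_k$ and $k=1$. Step~2 is the core discrete intermediate-value argument.
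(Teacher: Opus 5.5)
Your proof is correct and follows the paper's approach: Step 2 is exactly the paper's argument for vertices off the chain, taking the smallest $j$ with $v_ju\in A$ to get $u\in N^+(v_j)\cap N^-(v_{j-1})=N(e_j)$. Your Steps 3--4 go further than the paper, which only asserts without proof that the internal chain vertices $v_i$ ($0<i<k$) are covered. Your scan, which starts from $v_iv_0\in A$ for $i\ge 2$ (by minimality of the path), fills in that case correctly. The degenerate cases $k\in\{0,1\}$ you flagged need nothing extra: they have no internal chain vertices, and Step 2 already handles them (for $k=0$ it gives an immediate contradiction, since $N^-(v_k)=N^-(v_0)$).
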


\begin{proof}
We will prove this by contradiction. Suppose there is a vertex $w \in
V \setminus \{v_i\}_{0 \le i \le k}$ that does not belong to any $D_i$ or $N_i$.  Since $w$ does not
belong to $D_0$ or to $D_{k+1}$ (nor to $N_0$ or $N_k$), then $w \in
N^-(v_0)$ and $w \in N^+(v_k)$.  Take the smallest integer $i$ such
that $w \in N^+(v_i)$.  There must be one since $w \in N^+(v_k)$.
Notice that $i \geq 1$ since $w \in N^-(v_0)$.  Since $w \notin
N^o(v_{i-1})$, then $w \in N(e_i)$.  Therefore, $w \in D_i$, which is
a contradiction.

We note that each vertex in the vertex chain $(v_i)_{0 \leq i \leq
  k}$ belongs to a $D_i$ or $N_i$, except in the case where $k=0$ or
$k=1$.  In this case, we add this set of at most two remaining vertices to $Z$.  Finally, we remark that the decomposition is actually a partition of $V$.
\end{proof}

We now show that long arcs between $N_i$'s or between $D_i$'s go
backwards.  Since the $N_i$s and $D_i$s will be colored with different
color palettes, we do not need to worry about arcs between an $N_i$
and a $D_j$.

\begin{lemma}\label{clm:long_arc_dig}
Let $0 \leq i,j \leq k+1$ such that $j \geq i+5$.  For $u\in D_i$ and
$w\in D_j$, we have $u\in N^+(w)$.  Furthermore, if $u \in N_i$ and $w
\in N_{j-1}$, we have $u\in N^+(w)$.
\end{lemma}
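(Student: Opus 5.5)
The plan is to argue by contradiction, using the minimality of the vertex chain $(v_i)_{0\le i\le k}$. The chain is a shortest path from $v_0$ to $v_k$ in the auxiliary digraph whose moves are forward arcs and non-arcs, traversable in either direction. Since $D$ is oriented, every pair $u,w$ is either joined by exactly one arc or is a non-edge. So if $u\notin N^+(w)$, then $uw$ is either an arc or a non-arc, and in both cases $uw$ is a legal step in that auxiliary digraph. The goal in each case is to exhibit a walk from $v_0$ to $v_k$ that uses the step $uw$ and is strictly shorter than $k$.

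\emph{Case $u\in D_i$, $w\in D_j$ with $j\ge i+5$.} First record where each zone attaches to the chain. For $1\le i\le k$, $u\in D_i\subseteq N(e_i)$ gives arcs $v_i u$ and $u v_{i-1}$; in particular $v_{i-1}$ is reachable from $u$ in one step. For $i=0$, $u\in N^+(v_0)$, so $v_0u$ is an arc. For $j=k+1$, $w\in N^-(v_k)$ gives the arc $wv_k$. For $1\le j\le k$, $w\in N(e_j)$ gives the arc $v_j w$ into $w$ and the arc $w v_{j-1}$ out of $w$.

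If $uw$ is a legal step, I would use the walk
\[
v_0\to\cdots\to v_{i}\to u\to w\to v_{j-1}\to\cdots\to v_k ,
\]
with the obvious modifications at the ends. Its length is $i+1+1+1+(k-j+1)=k-(j-i)+4$ when $1\le i$ and $j\le k$, which is less than $k$ exactly when $j-i\ge 5$; this is where the threshold $5$ comes from. When $i=0$ we start with $v_0\to u$, and when $j=k+1$ we end with $w\to v_k$; each boundary case only shortens the walk, so the same bound holds. The one point to check is that every step is legal: $v_i u$ is an arc since $u\in N^+(v_i)$, $uw$ is legal by assumption, and $w v_{j-1}$ is an arc since $w\in N^-(v_{j-1})$. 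This gives a path shorter than $k$, contradicting minimality, so $wu$ must be an arc, i.e. $u\in N^+(w)$.

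\emph{Case $u\in N_i$, $w\in N_{j-1}$ with $j\ge i+5$.} Here $u$ is a non-neighbour of $v_i$ and $w$ is a non-neighbour of $v_{j-1}$, so $v_i u$ and $w v_{j-1}$ are non-arcs and hence legal steps. If $uw$ were legal, the walk
\[
v_0\to\cdots\to v_i\to u\to w\to v_{j-1}\to\cdots\to v_k
\]
has length $i+3+(k-j+1)=k-(j-i)+4<k$, again a contradiction. So $u\in N^+(w)$.

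The main obstacle is bookkeeping rather than any real difficulty. I need to confirm that these "shortcuts" are genuinely shorter for all allowed indices, including the boundary zones $D_0$ and $D_{k+1}$, whose attachments are $N^+(v_0)$ and $N^-(v_k)$ rather than $N(e)$. I also need to handle degenerate situations: a walk repeating a vertex still yields a shorter path, and $u$ or $w$ could coincide with chain vertices, but this cannot happen since $D_i$ and $N_i$ avoid the chain vertices.
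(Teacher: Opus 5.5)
Your proof is correct and is essentially the paper's argument. The detour $v_i\to u\to w\to v_{j-1}$ has length three, using $u\in N^+(v_i)$ (resp.\ $N^o(v_i)$) and $w\in N^-(v_{j-1})$ (resp.\ $N^o(v_{j-1})$), so it would beat the chain segment of length $j-1-i\geq 4$.

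One correction: your closing claim that the $D_i$ and $N_i$ avoid the chain vertices is false. For example, if $e_1$ is an arc and $k\geq 2$, then $v_0\in D_2$, and the paper itself remarks that for $k\geq 2$ every chain vertex lies in some zone. The claim is not needed, though: your walk-to-path observation already handles such coincidences, and $u\neq v_i$, $w\neq v_{j-1}$, $u\neq w$ hold automatically.
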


\begin{proof}
We will prove this by contradiction. Suppose $j \geq i+5$ and $u \in
N^-(w)$ (respectively, $u\in N^o(w)$).  Then there is a path of three
arcs (respectively, of two arcs and one non-arc) from $v_i$ to
$v_{j-1}$, namely $(v_i, u, w, v_{j-1})$.  (By definition of the
decomposition, $u \in D_i$ implies $u \in N^+(v_i)$ and $w \in D_j$
implies $w \in N^-(v_{j-1})$.) This is not possible since by the
definition of the vertex chain as the shortest path, there can be no
path between $v_i$ and $v_{j-1}$ with fewer than four arcs (since
$(j-1)-i \geq (i+5 -1)-i = 4$).  Finally, if $u \in N_i$ and $w \in
N_{j-1}$, $(v_i, u, w, v_{j-1})$ is also a path (of forward arcs and
non-arcs) of length three from $v_i$ to $v_{j-1}$, which as previously
is a contradiction.
\end{proof}

The following lemma demonstrates how the vertex chain and path decomposition
can be used to color digraphs.

\begin{lemma}\label{lem:eff_loc_to_glob_dig}
If $D$ has a $b$-vertex chain that can be found in polynomial time and
if $\chiv(N(e)) \leq c$ for each arc and non-arc $e$, and
$\chiv(N^o(v)) \leq d$ for every vertex $v$, then
\[
\chiv(D) \leq 
\begin{cases}
5c + 4d + 2(b - c) & \text{if } b > c, \\
5c + 4d & \text{if } b \leq c.
\end{cases}
\]

\noindent Furthermore, if there are no arcs from $D_0$ to $D_{k+1}$ and $\max\{b,c,d\} = b$, then $\chiv(D) \leq 4c + 4d + b$.
\end{lemma}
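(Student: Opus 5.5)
The plan is to color the parts of the \ndecomp of Definition~\ref{def:edge_dig} with a bounded number of disjoint palettes, reusing palettes cyclically along the chain, and to use Lemma~\ref{clm:long_arc_dig} to show that no monochromatic cycle appears. By Lemma~\ref{clm:decomp_dig} the parts $D_0,\dots,D_{k+1},N_0,\dots,N_k,Z$ partition $V$ with $Z\subseteq\{v_0,v_k\}$. The bounds on the parts are immediate: $D_i\subseteq N(e_i)$ for $1\le i\le k$ gives $\chiv(D_i)\le c$; $N_i\subseteq N^o(v_i)$ gives $\chiv(N_i)\le d$; $D_0\subseteq N^+(v_0)$ and $D_{k+1}\subseteq N^-(v_k)$ give $\chiv(D_0),\chiv(D_{k+1})\le b$.

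\textbf{Palette assignment.} Fix five pairwise disjoint ``$D$-palettes'' $P_0,\dots,P_4$ of $c$ colors each, and four ``$N$-palettes'' $Q_0,\dots,Q_3$ of $d$ colors each, disjoint from the $P_r$.
\begin{itemize}
\item Color $D_i$ ($0\le i\le k+1$) optimally with colors from $P_{i \bmod 5}$.
\item Color $N_i$ with colors from $Q_{i \bmod 4}$.
\item If $b>c$, then $D_0$ and $D_{k+1}$ may need up to $b$ colors. Enlarge their palettes by $b-c$ fresh colors each (separately for the two sets), which accounts for the $2(b-c)$ term. If $b\le c$, no extra colors are needed.
\item For the at most two vertices of $Z$: $v_0$ has no arcs to $N^o(v_0)\supseteq N_0$, so put $v_0$ into any color class of $N_0$. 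There it is isolated, and acyclicity is preserved. Treat $v_k$ with $N_k$ in the same way, or with $N_0$ when $k=0$, where the two vertices coincide.
\end{itemize}

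\textbf{Correctness.} Take a monochromatic cycle $C$. Since the $P$- and $Q$-palettes are disjoint, $C$ lies entirely in $\bigcup_i D_i$ or entirely in $\bigcup_i N_i\cup Z$.
\begin{itemize}
\item In the first case, all its vertices lie in parts $D_i$ with indices in a single residue class mod $5$, so any two distinct indices differ by at least $5$. By Lemma~\ref{clm:long_arc_dig}, every arc between such parts goes from the larger index to the smaller one. Hence the index never increases along $C$, so $C$ stays inside a single $D_i$, contradicting that color class being acyclic there.
\item In the second case, indices in one residue class mod $4$ differ by at least $4$. Lemma~\ref{clm:long_arc_dig} (the statement for $N_i$ and $N_{j-1}$ with $j\ge i+5$) again forces arcs to go backwards, and the same argument applies. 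The $Z$-vertices are isolated in their class and cannot lie on $C$.
\end{itemize}
Counting colors gives $5c+4d+2(b-c)$ when $b>c$ and $5c+4d$ when $b\le c$.

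\textbf{The ``furthermore'' part.} Here I would color $D_0\cup D_{k+1}$ with a single palette of $b$ colors. This set is acyclic per class because all arcs between $D_0$ and $D_{k+1}$ go from $D_{k+1}$ to $D_0$, so a cycle would lie in one side. The target count $4c+b$ for the $D$-parts then suggests making this $b$-palette play the role of one of the five cyclic palettes, namely residue $0$, with $D_0$ and $D_{k+1}$ both in it.

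\textbf{Main obstacle.} This last step is where I expect the difficulty. $D_{k+1}$ then shares colors with the parts $D_i$, $i\equiv 0 \pmod 5$, and some such $i$ may satisfy $i\ge k-3$, where Lemma~\ref{clm:long_arc_dig} does not apply. My first attempt would be to choose the residue offset of the cyclic assignment on $D_1,\dots,D_k$ depending on $k$, so that the parts near both ends avoid the shared palette. If that fails, I would re-examine the shortest-path argument directly for $D_{k+1}\subseteq N^-(v_k)$ versus $D_i\subseteq N^+(v_{i-1})$ to gain one extra unit of distance at that end.
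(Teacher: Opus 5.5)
Your argument for the main bound is correct and is the paper's proof. You use five cyclic $c$-palettes on the $D_i$, four cyclic $d$-palettes on the $N_i$, $b-c$ private surplus colors for each of $D_0$ and $D_{k+1}$, and Lemma~\ref{clm:long_arc_dig} to force arcs between same-palette parts backwards. There is one small slip: $v_0$ is not in general isolated in its color class, because $Q_0$ is also used on $N_4,N_8,\dots$. This is harmless. For $k\ge 2$ every chain vertex already lies in some $D_i$ or $N_i$, so $Z\neq\emptyset$ forces $k\le 1$, and then no other $N_j$ shares $Q_0$.

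The genuine gap is the ``furthermore'' part. You leave it open, and the route you sketch cannot be completed.

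\begin{itemize}
\item Placing all of $D_{k+1}$ in the residue-$0$ palette breaks down in general. For $k=6$, $D_5$ and $D_7$ would share colors at index distance $2$. Lemma~\ref{clm:long_arc_dig} says nothing there, so forward arcs from $D_5$ to $D_7$, together with backward ones, can close monochromatic cycles.
\item Shifting the offset does not help. With five palettes, any five consecutive $D_i$ need five different palettes, so the assignment on $D_1,\dots,D_k$ is $5$-periodic. Once $k\ge 5$, no offset keeps both ends safe unless $k+1\equiv 0\pmod 5$.
\item The hoped-for extra unit of distance rests on $D_i\subseteq N^+(v_{i-1})$, which is false. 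In fact $D_i\subseteq N(e_i)=N^+(v_i)\cap N^-(v_{i-1})$, so the only available path is $v_i,u,w,v_k$. That gives exactly the threshold $j\ge i+5$ already in the lemma.
\end{itemize}

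The missing idea is to share only the surplus colors, not a whole palette. Keep $D_0$ on $P_0$ and $D_{k+1}$ on $P_{(k+1)\bmod 5}$ as in the first part. Let both draw their remaining at most $b-c$ colors from one common set $E$ of $b-c$ fresh colors; note $b\ge c$ because $\max\{b,c,d\}=b$. Colors of $E$ occur only on $D_0\cup D_{k+1}$. There your own observation, that every arc between the two sets goes from $D_{k+1}$ to $D_0$, makes each such class acyclic. All other classes are exactly as before. This uses $5c+(b-c)+4d=4c+4d+b$ colors, which is how the paper obtains the bound.
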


\begin{proof}
Given a $b$-vertex chain, we construct a \ndecomp as per Definition
\ref{def:vc_gen_dig}.  We make five palettes of $c$ colors each with
labels from $0$ to $4$.  We color each $D_i$ using the color palette
with label $i \bmod 5$.  Note that the sets $D_0$ and $D_{k+1}$ each
require $b-c$ extra colors in their palettes if $b > c$.

Next, we make four different palettes of $d$ colors each with labels
from $5$ to $8$.  We color each $N_i$ using the color palette with
label $5+ (i \bmod 4)$.  The set of colors used is of size $d$ for
every $N_i$, thus we can efficiently color each set by the assumption
of the lemma.

Our goal is now to prove that this is a proper coloring of the digraph
$D$.  We will do this by showing that all forward arcs between
different $D_i$ or $N_i$ are bicolored.  By Lemma
\ref{clm:long_arc_dig}, there are no forward arcs between $D_i$ and
$D_j$ when $j\geq i+5$, or $N_i$ and $N_j$ with $j \geq i+4$.
Furthermore, by the definition of the coloring, no vertex in $D_i$ and
$D_j$ can share a color for $i+1 \leq j \leq i+4$, and the same goes
for vertices in $N_i$ and $N_j$ with $i+1 \leq j \leq i+3$.  Thus all
forward arcs from $D_i$ to $D_j$ or $N_i$ to $N_j$ will be bicolored.
Since every $D_i$ and $N_i$ is properly colored, and all forward arcs
between different $D_i$ are bicolored, the unions of $D_i$'s and the
union of $N_i$'s are both properly colored.  Since these two sets use
disjoint color palettes, and because every vertex is in some $D_i$ or
$N_i$ (by Lemma \ref{clm:decomp_dig}), the whole digraph $D$ is
properly colored.  Furthermore, we note that if there are no arcs from
$D_0$ to $D_{k+1}$, then we can save $b-c$ colors resulting in the
claimed bound.

Finally, we remark that if the vertex chain consists of at most two
vertices, then we have $\chiv(D) \leq 2d + c + 2b + 2$, since we can
decompose $D$ into (possibly some subset of) $Z, N_0, N_1, D_0, D_k$
and $D_{k+1}$.  In the case where there are no arcs from $D_0$ to
$D_{k+1}$, we have $\chiv(D) \leq 2d +c +2+ b$.
\end{proof}

\subsubsection{Coloring 2-Dicolorable Dense Digraphs}

In this section, our goal is to prove Theorem \ref{thm:2-col_dig}.

\begin{theorem}\label{thm:2-col_dig}
Let $D$ be a 2-dicolorable digraph with independence number
$\alpha$. Then a coloring with at most $\frac{10}{3}(4^\alpha-1)$
colors can be found in polynomial time.
\end{theorem}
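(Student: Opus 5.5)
We argue by induction on $\alpha$, writing $f(\alpha)$ for the number of colors the algorithm uses on a 2-dicolorable digraph with independence number at most $\alpha$. The target $f(\alpha)=\frac{10}{3}(4^\alpha-1)$ satisfies $f(0)=0$ and $f(\alpha)=4f(\alpha-1)+10$. So the goal is to color $D$ with $4f(\alpha-1)+10$ colors using \Cref{lem:eff_loc_to_glob_dig}. By Observation~\ref{obs:alphaMinusOne}, each $N^o(v)$ is a 2-dicolorable digraph with independence number at most $\alpha-1$. Hence induction colors it with $d=f(\alpha-1)$ colors in polynomial time, and the $4d$ term is accounted for. The remaining budget of $10$ colors has to cover the $D_i$'s together with $D_0$ and $D_{k+1}$. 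This suggests aiming for $c=2$ with $b\le 2$ together with the improved bound $4c+4d+b$, which gives $4d+10$. Equivalently, aim for $5c+4d$ with $c=2$ and $b\le c$. That is exactly $4f(\alpha-1)+10$, so either variant suffices.

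\textbf{Step 1: arc and non-arc neighborhoods.} We need $\chiv(N(e))\le 2$ for every arc or non-arc $e=uv$, and in fact we must also be able to find such a coloring efficiently. Since $D$ itself is 2-dicolorable, every induced subgraph is 2-dicolorable. That alone gives only existence, because 2-dicoloring is \NP-hard. So I would follow the tournament argument of \cite{klingelhoefer2024coloring} and not color $N(e)$ directly. Instead, I would choose the vertex chain so that the relevant pieces are acyclic or split into two explicitly computable acyclic sets, with the splitting done relative to $u$ and $v$.

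The idea is as follows. Fix a 2-dicoloring $(S_1,S_2)$. For a vertex $v$ whose out-neighborhood is acyclic (Observation~\ref{O:lastVertex}), we can test acyclicity directly, and we can find such a $v$ by brute force over vertices. The same holds for in-neighborhoods by symmetry. I would then take $s$ to be a vertex with $N^+(s)$ acyclic and $t$ a vertex with $N^-(t)$ acyclic. This yields a $1$-vertex chain, so $b=1$.

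For $N(e)$, I would split further. The part of $N(e)$ lying in $N^+(v)$ and the part lying in $N^-(u)$ are handled by refining the decomposition, restricting to strongly connected pieces and recursing. If this is done carefully, one expects $c=2$ with explicit colorings.

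\textbf{Main obstacle.} The hard step is guaranteeing that each $D_i$ can be \emph{efficiently} colored with $2$ colors, or with a constant number fitting the budget $10$. If only a weaker constant is achievable, say $c=2$ with $b=1$ but no savings, then $5c+4d=4d+10$ still works. So what truly matters is an efficient $2$-coloring of the $D_i$'s, and this is where I would spend the effort. My plan is to refine the chain by choosing $s,t$ and a shortest path as in \cite{klingelhoefer2024coloring}, where the ten-color bound for tournaments arises precisely as $5\cdot 2$. That suggests the tournament argument for $N(e)$ can be reused, with the non-neighborhoods now placed separately into the $N_i$'s.

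Finally, polynomial running time follows because the recursion depth is $\alpha$, which is a constant, and each level makes polynomially many calls, one per vertex $v$ when coloring $N^o(v)$.
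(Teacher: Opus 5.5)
Your recursion is numerically right, since $\frac{10}{3}(4^\alpha-1)$ satisfies $f(\alpha)=4f(\alpha-1)+10$. The gap is at the step you yourself call the main obstacle. You never give an efficient way to color the $D_i$'s with $c=2$ colors, and asking for this directly is hopeless. For an arc $e=uv$, the set $N(e)=N^+(v)\cap N^-(u)$ can be an arbitrary 2-dicolorable digraph. Take any 2-dicolorable tournament $H$ and add $u,v$ with $u\to v$, $v\to H$, $H\to u$. The result is a 2-dicolorable tournament (color $u$ and $v$ differently), and $N(uv)=H$. Nothing in your choice of $s,t$ prevents $s=u$, $t=v$, in which case $D_1=H$. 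Since even 3-coloring a 2-dicolorable tournament is \NP-hard, no general procedure can 2-color the $D_i$'s.

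Your suggested refinement, splitting $N(e)$ into its part in $N^+(v)$ and its part in $N^-(u)$, is vacuous, because $N(e)$ is their intersection. You also misread the tournament result. In \cite{klingelhoefer2024coloring}, $10=2\cdot 5$ comes from two light tournaments, each colored with five one-color palettes, not from 2-coloring arc neighborhoods. So there is no ``tournament argument for $N(e)$'' to reuse. Finally, non-arcs on the chain whose $N(e)$ contains a cycle are never addressed.

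The missing idea is \emph{lightness}: the factor $2$ should be paid once globally, not on every $D_i$.
\begin{enumerate}
\item Add arcs to destroy heavy non-arcs (Lemma~\ref{lem:heavy_non_arcs}). This keeps 2-dicolorability and does not increase the independence number.
\item The endpoints of a heavy arc $uv$ must get different colors in every 2-dicoloring. Otherwise the cycle in $N(uv)$ contains a vertex of their color, which closes a monochromatic triangle.
\item Hence the heavy arcs form a bipartite graph. Splitting along its bipartition gives two light 2-dicolorable parts (Corollary~\ref{cor:dig_split}).
\item Lightness is hereditary, so inside each part every $N(e_i)$ is acyclic. Taking $s,t$ from Observation~\ref{clm:min_max_vertex-k_dig} gives $c=b=1$.
\item Lemma~\ref{lem:eff_loc_to_glob_dig} then yields $\fl(\alpha)=5+4\fl(\alpha-1)$ with $\fl(1)=5$, so $\fl(\alpha)=\frac{5}{3}(4^\alpha-1)$. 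The recursion on the $N_i$'s stays within light digraphs.
\item The total is $2\fl(\alpha)=\frac{10}{3}(4^\alpha-1)$, which is why your recursion matched.
\end{enumerate}

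For the running time, do not assume $\alpha$ is constant. Use instead that the sets $N_i$ at each recursion level are disjoint, so there are at most $\alpha n\le n^2$ calls in total.
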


We say an arc $e$ in $D$ is \emph{heavy} if $N(e)$ contains a directed
cycle.  Moreover, we say a non-edge between $u$ and $v$ is
\emph{heavy} if $N^+(u) \cap N^-(v)$ or if $N^+(v) \cap N^-(u)$
contains a directed cycle.  If a digraph contains no heavy arcs and no
heavy non-edges, then it is {\em light}.  In the case of 2-colorable
tournaments, we can partition the vertex set into two tournaments such
that no arc is heavy~\cite{klingelhoefer2024coloring}, but in the case
of 2-colorable digraphs we still need to account for the non-arcs.  To
do this, we will add a step before this partition.

\begin{lemma}\label{lem:heavy_non_arcs}
A 2-colorable digraph $D$ can be transformed into a 2-colorable
digraph $D'$ with no heavy non-arc by adding arcs.
\end{lemma}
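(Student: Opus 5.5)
We fix a 2-dicoloring $(S_1,S_2)$ of $D$ (it is used only in the analysis) and then add arcs so that heavy non-edges disappear while this coloring stays valid. Take any non-edge $\{u,v\}$. If $u$ and $v$ lie in different classes, say $u\in S_1$ and $v\in S_2$, we add an arc between them; no arc joining two different color classes can lie on a cycle inside a single class, so $(S_1,S_2)$ remains a valid dicoloring. If $u$ and $v$ lie in the same class $S_i$, we fix an acyclic ordering $\prec_i$ of $D[S_i]$ and add the arc from the $\prec_i$-smaller vertex to the larger one. Such an arc is forward in $\prec_i$, so $D[S_i]$ stays acyclic. Doing this for all non-edges (simultaneously, with the orderings fixed in advance) yields a semicomplete digraph $D'\supseteq D$ with $\chiv(D')\le 2$. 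It has no non-edges at all, so in particular it has no heavy non-arc.

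If the lemma is read in a weaker sense, where only heavy non-edges must be eliminated and some non-edges should survive (to keep the independence-number structure), we refine the argument. We only add arcs on non-edges $\{u,v\}$ that are heavy in the current digraph, and we choose the orientation as above. Adding arcs never destroys 2-dicolorability with respect to $(S_1,S_2)$, by the same two observations. We then need to check that the process terminates with no heavy non-arc. It does: each step removes one non-edge, so after at most $\binom{n}{2}$ steps no heavy non-edge remains. Heaviness is monotone, since $N^+(u)\cap N^-(v)$ only grows when arcs are added. This means a non-edge that was not heavy may become heavy later, but it is then handled in a later iteration. So the final digraph has no heavy non-arc.

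The main obstacle is algorithmic: the paper wants to use this in a polynomial-time algorithm, but the coloring $(S_1,S_2)$ is unknown. To handle this, I would orient a heavy non-edge $\{u,v\}$ using structure alone. If $N^+(u)\cap N^-(v)$ contains a directed cycle $C$, then $C$ contains vertices of both classes. I would argue that adding the arc $vu$ creates no monochromatic cycle in some valid 2-coloring; that is, the arc $vu$ is ``forced'' to be safe. The intended justification is that any 2-coloring of $D$ in which $u$ and $v$ share a class and appear in the order $u\prec v$ would, together with the paths $u\to x\to v$ for $x\in C$, force all of $C$ into the other class, contradicting that $C$ is a cycle. Making this step rigorous, and handling the case where both $N^+(u)\cap N^-(v)$ and $N^+(v)\cap N^-(u)$ contain cycles (I expect this case to show that $u$ and $v$ must get different colors, so either orientation is safe), is the delicate part. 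I would try it first via this ``both classes hit $C$'' argument.
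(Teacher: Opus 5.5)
Your first two paragraphs are correct only for the purely existential reading, where you complete $D$ along a fixed 2-dicoloring. That reading misses how the lemma is used. In the proof of Theorem~\ref{thm:2-col_dig} the transformation is a polynomial-time preprocessing step on an input whose 2-dicoloring is unknown, so what is needed is a rule that does not depend on the coloring. (If your completion could be computed, it would reduce every 2-dicolorable digraph to independence number $1$.)

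Your coloring-independent rule orients the new arc the wrong way. Let $C\subseteq N^+(u)\cap N^-(v)$ be a directed cycle. Adding $vu$ closes a directed triangle $u\,x\,v$ for every $x\in C$. Since $C$ meets both color classes, $D+vu$ is 2-dicolorable exactly when $D$ has a 2-dicoloring that separates $u$ and $v$, and such a coloring need not exist.

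Here is a concrete counterexample:
\begin{itemize}
\item take digons $u\leftrightarrow w\leftrightarrow v$, which force $u$ and $v$ to share a color;
\item take a directed triangle $x_1x_2x_3$ and arcs $ux_i$, $x_iv$ for each $i$;
\item the result is 2-dicolorable via $\{u,v,x_1\},\{w,x_2,x_3\}$;
\item the non-edge $\{u,v\}$ is heavy only in this direction;
\item $D+vu$ is not 2-dicolorable.
\end{itemize}
Your intended justification also fails as stated. In $D$ itself, $u\to x\to v$ with $x$ in the class of $u$ and $v$ is perfectly compatible with an acyclic ordering. Nothing is forced until $vu$ has already been added.

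The missing idea is to use that same observation the other way round, as the paper does: add $uv$. Take any 2-dicoloring of $D$. If $u$ and $v$ lie in a common class $S_c$, then some $x\in C$ also lies in $S_c$, because a cycle is never monochromatic. A monochromatic cycle through the new arc would need a path $v\leadsto u$ in $D[S_c]$. Together with $u\to x\to v$, that path already gives a closed walk, hence a cycle, in $D[S_c]$. Equivalently, every acyclic ordering of $S_c$ has $u\prec x\prec v$, so $uv$ is a forward arc. If $u$ and $v$ lie in different classes, the new arc is harmless.

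Thus $D$ and $D+uv$ have exactly the same 2-dicolorings. Testing whether $N^+(u)\cap N^-(v)$ contains a cycle takes polynomial time, and the rule is applied at most $\binom{n}{2}$ times. With this orientation, your monotonicity and termination argument still works. So does your treatment of the case where both $N^+(u)\cap N^-(v)$ and $N^+(v)\cap N^-(u)$ contain cycles.
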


\begin{proof}
Let $e$ be a heavy non-arc between $u$ and $v$.  Without loss of
generality, suppose that there is a directed cycle in $N^+(u) \cap
N^-(v)$. This cycle must be colored with two colors, thus $u$, $v$ and
any $z \in N^-(u) \cap N^+(v)$ cannot all be colored the same color,
else there would be a monochromatic 4-cycle. Therefore, we can add the
arc from $u$ to $v$, since it will not lead to any monochromatic cycle
in a 2-coloring of $D$.  So we can construct $D'$ by starting from
$D$, and while there are still heavy non-arcs, we add an arc between
two vertices in the way we just described.
\end{proof}

After applying Lemma \ref{lem:heavy_non_arcs}, we obtain a digraph in
which every non-arc is light.  (Of course, we have not increased the
independence number, since we have not removed any arcs.) If a digraph
contains a digon (i.e., a directed 2-cycle), then both arcs can be
considered to be heavy, since the endpoints must have different
colors.  Since the endpoints of heavy arcs must have different colors,
they cannot contain any odd cycles in a 2-coloring.
In other words, the set of heavy arcs forms a
bipartite graph and we can therefore partition the set of vertices in
order to cut all heavy arcs. This results in the following corollary.

\begin{corollary}\label{cor:dig_split}
Let $D$ be a 2-colorable digraph.  Then $D$ can be partitioned into
two light oriented 2-colorable digraphs $D_1$ and $D_2$
such that $\chiv(D) \leq \chiv(D_1) + \chiv(D_2)$.
\end{corollary}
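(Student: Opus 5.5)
The plan is to start from $D$, apply Lemma~\ref{lem:heavy_non_arcs}, and then split along the heavy arcs.

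First, by Lemma~\ref{lem:heavy_non_arcs}, we replace $D$ by a 2-colorable digraph $D'$ on the same vertex set in which no non-arc is heavy. $D'$ is obtained only by adding arcs, so any acyclic set of $D'$ is acyclic in $D$. Hence $\chiv(D)\leq\chiv(D')$, and it suffices to handle $D'$. Now fix (existentially) a 2-dicoloring $c$ of $D'$.

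Next I will argue that every heavy arc is bichromatic under $c$. If $uv$ is an arc of $D'$ with a directed cycle $C\subseteq N(uv)=N^+(v)\cap N^-(u)$, then $C$ is not monochromatic, so it contains a vertex $z$ of each colour. Every $z\in C$ lies in $N^+(v)\cap N^-(u)$, so $u\to v\to z\to u$ is a directed triangle. If $c(u)=c(v)$, choose $z\in C$ with $c(z)=c(u)$; this triangle is then monochromatic, which is a contradiction. Digons are treated as heavy arcs, and their endpoints are clearly coloured differently. Consequently the undirected graph $H$ of heavy arcs is properly 2-coloured by $c$, so it is bipartite. Its bipartition $(X,Y)$ can be computed in polynomial time by BFS on $H$, since heaviness of an arc is testable by checking acyclicity of $N(e)$. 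This holds even though $c$ itself is unknown.

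Set $D_1=D'[X]$ and $D_2=D'[Y]$. Each contains no heavy arc and no digon, so each is oriented. Each is 2-colorable as an induced subdigraph of $D'$. The remaining point is lightness, and this is the step I expect to need care, because heaviness is defined relative to neighbourhoods. The key observation is that neighbourhoods only shrink when passing to an induced subdigraph: $N_{D_i}(e)\subseteq N_{D'}(e)$, and a cycle inside the smaller set is also a cycle in the larger one. So an arc heavy in $D_i$ would be heavy in $D'$, but it cannot be an edge of $H$ inside one part. Similarly, a non-edge of $D_i$ is a non-edge of $D'$, and those are light by the first step. Thus $D_1$ and $D_2$ are light. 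Finally, colouring $D_1$ and $D_2$ with disjoint palettes gives a dicolouring of $D'$, hence of $D$, with $\chiv(D_1)+\chiv(D_2)$ colours. Since every directed cycle inside a colour class lies in a single $D_i$, this yields the stated inequality.
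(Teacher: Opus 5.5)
Your proof is correct and follows the paper's argument. You apply Lemma~\ref{lem:heavy_non_arcs}, note that heavy arcs (including digons) must be bichromatic in any 2-dicoloring and so form a bipartite graph, and split the vertex set along a bipartition of that graph. You also supply two details the paper leaves implicit: the directed-triangle argument showing that heavy arcs are bichromatic, and the fact that lightness passes to induced subdigraphs because $N(e)$ only shrinks.
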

Since finding this partition is equivalent to finding a bipartition of
a bipartite graph, it can be found in polynomial time.
Next, we prove another useful observation.

\begin{observation}\label{clm:min_max_vertex-k_dig}
Let $D$ be a $k$-colorable digraph.  Then there exist vertices $u$ and
$w$ such that $D[N^+(u) \cup N^-(w)]$ is $(k-1)$-colorable.
\end{observation}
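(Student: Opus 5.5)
We reuse the ordering argument from the proof of Observation~\ref{O:lastVertex}. Fix a partition $V(D)=S_1\cup\cdots\cup S_k$ into acyclic sets. Our plan is to pick $u$ as the \emph{first} vertex of an ordering and $w$ as the \emph{last}, and then show that $N^+(u)\cup N^-(w)$ misses exactly one colour class. We may assume $k\geq 2$, since for $k=1$ the claim is vacuous or degenerate (take $u=w$ in an acyclic digraph, where a source/sink argument gives empty neighbourhoods).

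We first order each $S_i$ topologically, so that no arc of $D[S_i]$ goes backwards. We then concatenate the classes as $S_1,S_2,\ldots,S_k$, and let $w$ be the last vertex of $S_k$, which is a sink of $D[S_k]$. As in Observation~\ref{O:lastVertex}, this gives $N^+(w)\subseteq S_1\cup\cdots\cup S_{k-1}$, but here we need in-neighbourhoods. We therefore use the symmetric statement: $w$ is a source-type extreme vertex of one class, and $u$ is the opposite extreme vertex of the same class.

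Concretely, we choose a single class, say $S_k$ with its topological order. Let $u$ be the \emph{last} vertex of $S_k$, a sink in $D[S_k]$, so $N^+(u)\cap S_k=\emptyset$. Let $w$ be the \emph{first} vertex of $S_k$, a source in $D[S_k]$, so $N^-(w)\cap S_k=\emptyset$. Then
\[
N^+(u)\cup N^-(w)\subseteq S_1\cup\cdots\cup S_{k-1},
\]
and this union is $(k-1)$-colourable, with the classes $S_1,\dots,S_{k-1}$ giving the colouring. Both vertices exist because $S_k$ can be taken nonempty: if it were empty, $D$ would already be $(k-1)$-colourable and any $u,w$ would work. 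A nonempty acyclic digraph always has both a source and a sink. The choice $u=w$ is allowed when $|S_k|=1$.

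The only point needing care, and the likely obstacle, is noticing that $u$ and $w$ should come from the \emph{same} colour class, as its sink and its source. This avoids trying to use the global last and first vertices of the concatenated order, which would leave different classes uncovered.
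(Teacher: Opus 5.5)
Your proof is correct and matches the paper's: both take $u$ to be a sink and $w$ a source of one acyclic colour class, so $N^+(u)\cup N^-(w)$ avoids that class and is covered by the remaining $k-1$ classes. The detour through the concatenated ordering and the separate $k=1$ aside are unnecessary. The main argument already handles $k=1$, whereas your suggestion to take $u=w$ there would need a vertex that is simultaneously a source and a sink.
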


\begin{proof}
Since $D=(V,A)$ is $k$-colorable, there exist $k$ transitive sets
$X_1, \ldots, X_k$ such that $V= \cup_{i=1}^k X_i$. Then take $u$ to
be the vertex in $X_1$ that has only incoming arcs (or non-arcs) from
other vertices in $X_1$ (i.e., the sink vertex for $X_1$). Similarly,
take $w$ to be the vertex in $X_1$ that has only outgoing arcs (or
non-arcs) to other vertices in $X_1$ (i.e., the source vertex for
$X_1$).  The out-neighborhood of $u$ and the in-neighborhood of $w$
are both subsets of $V \setminus{X_1}$, and thus so is their union,
which is therefore $(k-1)$-colorable.
\end{proof}

Now we are ready to prove Theorem \ref{thm:2-col_dig}.  We use a
subroutine for coloring a light, 2-colorable digraph
whose steps are shown in Algorithm~\ref{alg:2AlphaColoringLight}.

\begin{algorithm}
\caption{DicoloringLightIND($D,\alpha,P$)}
\label{alg:2AlphaColoringLight}
Input: {A light 2-dicolorable digraph $D$ with independence number at most $\alpha$ and a palette $P$ of $\frac{5}{3}(4^\alpha-1)$
colors.}

Output: A coloring of $V(D)$ using palette $P$. 
\begin{enumerate}
\item If $D$ is a tournament, color $D$ using 5 colors from $P$ and return. (\cite{klingelhoefer2024coloring})
\item Partition $P$ arbitrarily into five palettes $P_0,\ldots,P_4$ such that for $0\leq i \leq 3$, $|P_i| = \frac{5}{3}(4^{\alpha-1}-1)$ and  $|P_4|= 5$.
\item Find $s,t \in V(D)$ such that   $N^+(s) \cup N^-(t)$ is acyclic. (\Cref{clm:min_max_vertex-k_dig})
\item Find a $1$-vertex chain $(v_i)_{0\leq i \leq k}$ where $v_0=s$ and $v_k=t$.
\item For $0\leq i\leq k+1$, color $\bigcup D_i$ using five colors from $P_4$. (\Cref{lem:eff_loc_to_glob_dig})  
\item For $0\leq i\leq k$:
\begin{enumerate}

\item If $N_i \neq \emptyset$: DicoloringLightIND($D[N_i],\alpha-1, P_{i \mod 4}$).

\end{enumerate}

\end{enumerate}

\end{algorithm}

\begin{proof}[Proof of Theorem \ref{thm:2-col_dig}]
We begin by using \Cref{lem:heavy_non_arcs} to transform the input
digraph $D$ into a digraph containing no heavy non-arcs.  Next, if $D$
is not a light digraph, we partition $D$ into two light $2$-colorable
digraphs $D_1,D_2$ using \Cref{cor:dig_split}.  

Then we color light digraphs $D_1$ and $D_2$ separately using
different color palettes.  To color, say, $D_1$, we use a recursive
algorithm, whose steps are shown in
Algorithm~\ref{alg:2AlphaColoringLight}.  (Notice that this
presentation includes some technical implementation details on how to
distribute the color palettes during the recursive steps.)  First, we
find a $1$-vertex-chain in $D_1$.  To do this, we apply Observation
\ref{clm:min_max_vertex-k_dig}, which takes at most $O(n^2)$ time
(where $n = |V(D_1)|$), by guessing all pairs of endpoints $s,t$, and
checking if the subgraph $D_1[N^+(s) \cup N^-(t)]$ is acyclic.  Once
we have $s$ and $t$, we find a 1-vertex chain $(v_i)_{0 \leq i \leq
  k}$ where $v_0=s$ and $v_k=t$.  Since $D_1$ is light, we have
$\chiv(e_i) \leq 1$ for each arc and each non-arc $e_i$ (since each
non-arc is light) in the corresponding edge-chain.  Moreover, the
independence number of each $N_i$ is at most $\alpha-1$ by Observation
\ref{obs:alphaMinusOne}.  

In order to prove the upper bound on the number of colors stated in
the theorem, we define $f(\alpha)$ to be a function such that
$f(\alpha) \geq \chiv(D)$ for every 2-colorable digraph $D$ with
independence number $\alpha$.  Our goal is to find an upper bound on
$f(\alpha)$.  Additionally, we define $\fl(\alpha)$ to be a function
such that $\fl(\alpha) \geq \chiv(D)$ for every 2-colorable light
digraph.  By Corollary, \ref{cor:dig_split} we have $f(\alpha) \leq 2
\cdot \fl(\alpha)$.  By Lemma \ref{lem:eff_loc_to_glob_dig}, we have
the following claim.

\begin{claim}
$\fl(\alpha) = 5 + 4 \cdot \fl(\alpha - 1)$.
  \end{claim}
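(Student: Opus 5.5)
The plan is to read the claim as a recurrence that upper-bounds $\fl(\alpha)$: every light 2-colorable digraph with independence number $\alpha$ can be colored with at most $5 + 4\fl(\alpha-1)$ colors. The base case is $\fl(1)=5$, which covers light 2-colorable tournaments via \cite{klingelhoefer2024coloring}. I will apply \Cref{lem:eff_loc_to_glob_dig} to the path decomposition built in Algorithm~\ref{alg:2AlphaColoringLight}, but keep the $D_i$-part and the $N_i$-part of that lemma's coloring separate, so that the two contributions are charged independently.

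First I check the hypotheses. Let $D$ be light and 2-colorable with independence number $\alpha$. By Observation~\ref{clm:min_max_vertex-k_dig} with $k=2$ there are $s,t$ such that $N^+(s)\cup N^-(t)$ is acyclic. So a shortest $s$--$t$ path through forward arcs and non-arcs gives a $1$-vertex chain ($b=1$). Such a path exists, since non-arcs and forward arcs together connect every ordered pair in the underlying complete structure. Lightness gives $\chiv(N(e))\le 1$ for every arc and non-arc $e$, so $c=1$. For each $i$, $N_i\subseteq N^o(v_i)$ induces a light 2-colorable digraph: both properties are inherited by induced subdigraphs, because $N(e)$ only shrinks. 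By Observation~\ref{obs:alphaMinusOne}, $N_i$ has independence number at most $\alpha-1$, so $\chiv(N_i)\le \fl(\alpha-1)$, which plays the role of $d$.

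Next I count the colors, following the proof of \Cref{lem:eff_loc_to_glob_dig}. Since $b=c=1$, the sets $D_0,\dots,D_{k+1}$ are all acyclic and are colored with five single-color palettes indexed by $i \bmod 5$. By \Cref{clm:long_arc_dig}, a forward arc between $D_i$ and $D_j$ can only occur when $|i-j|\le 4$, and those pairs get different colors. This gives $5$ colors in total, with no extra $2(b-c)$ term because $b=c$. The $N_i$ are colored with four palettes of size $\fl(\alpha-1)$ indexed by $i \bmod 4$, which is correct by the second half of \Cref{clm:long_arc_dig}. The two families use disjoint palettes, and by \Cref{clm:decomp_dig} they cover $V$. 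The result is $\chiv(D)\le 5+4\fl(\alpha-1)$, so we may take $\fl(\alpha)=5+4\fl(\alpha-1)$.

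The main obstacle is the degenerate case where the chain has at most two vertices, so that $Z$ may be nonempty. For this I will use the remark at the end of \Cref{lem:eff_loc_to_glob_dig}, and I need to show that handling $Z\subseteq\{s,t\}$ still fits within $5$ colors plus four $N$-palettes. Here $D_0\cup D_{k+1}\subseteq N^+(s)\cup N^-(t)$ is acyclic, and with $k\le 1$ only a few of the five $D$-palettes are in use. So $s$ and $t$ can be placed in unused $D$-palettes, or, since $N_0$ and $N_1$ use only two of the four $N$-palettes, in spare ones. I would verify this case by explicit bookkeeping: at most $D_0,D_1,D_2$ and $Z$ need colors among the five slots, and every cycle through $s$ or $t$ leaves the monochromatic class. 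Finally, solving the recurrence with $\fl(1)=5$ gives $\fl(\alpha)=\frac53(4^\alpha-1)$, which matches the palette sizes in Algorithm~\ref{alg:2AlphaColoringLight}.
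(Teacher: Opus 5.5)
Your proposal follows the paper's route. The claim is exactly \Cref{lem:eff_loc_to_glob_dig} applied to a $1$-vertex chain with $b=c=1$ (from lightness) and $d=\fl(\alpha-1)$ (from \Cref{obs:alphaMinusOne}, together with the fact that lightness and 2-dicolorability pass to induced subdigraphs). This gives $5c+4d=5+4\fl(\alpha-1)$. Your separate bookkeeping for the degenerate case $k\le 1$ is fine, though the remark at the end of that lemma already bounds this case by $2d+c+2b+2=5+2\fl(\alpha-1)$.

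One step, however, rests on a false statement: forward arcs and non-arcs do not connect every ordered pair. Suppose $V=A\cup B$ with $t\in A$, $s\in B$, and every pair $a\in A$, $b\in B$ is joined only by the arc $ab$. Then no forward arc or non-arc leaves $B$, so there is no $s$--$t$ path of the required kind, and the vertex chain is undefined.

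This situation genuinely arises:
\begin{itemize}
\item If $A$ and $B$ are light and 2-dicolorable, so is this digraph, and its independence number is $\max\{\alpha(A),\alpha(B)\}$.
\item Take the first color class to be the union of the first classes of $A$ and $B$. Then \Cref{clm:min_max_vertex-k_dig} returns a sink $s\in B$ and a source $t\in A$.
\end{itemize}

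The repair is the paper's standing assumption that $D$ is strongly connected. Color each strong component separately with the same palette. This is legitimate because every directed cycle lies in one strong component. Each component, including those of the $N_i$ in the recursion, is still light and 2-dicolorable with independence number at most $\alpha$. In a strongly connected digraph a directed $s$--$t$ path exists, so the shortest path defining the chain is well defined.
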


The base case is $\fl(1) = 5$, by \cite{klingelhoefer2024coloring}.
A simple calculation leads to the upper bound in the statement of
Theorem \ref{thm:2-col_dig}.

It remains to show that Algorithm~\ref{alg:2AlphaColoringLight} runs
in polynomial time in $n$ (the size of $V(D_1)$).  As noted above, we
can find a 1-vertex chain and a corresponding path decomposition in
polynomial time.  Thus, we partition $V(D_1)$ into $D_i$'s and
$N_i$'s, where each $D_1[N_i]$ has independence number at most
$\alpha-1$.  We color each $D_i$ in polynomial time and recurse on the
$N_i$s. Since no vertex is considered in more than one recursive call
for a given $\alpha$, this leads to at most $\alpha n \leq n^2$
recursive calls, resulting in a polynomial-time termination of
Algorithm \ref{alg:2AlphaColoringLight}.
\end{proof}

\subsubsection{Coloring $C_3$-Free Dense Digraphs}

In this section, $D = (V,A)$ is a $C_3$-free oriented graph with
independence number $\alpha(D)$.  Recall that $C_k$ denotes a cyclicly
oriented $k$-cycle.  Since adding arcs does not decrease the
dichromatic number or increase the independence number, we can assume
that $D$ is \emph{maximally triangle-free}, which means that for any
pair of vertices $u,v$ in $V$ with no arc between them, both arcs $uv$
and $vu$ would each create a $C_3$ in $D$.  In other words, if there
is no arc between $u$ and $v$, then there is a $C_4$ in $D$ containing
$u$ and $v$.  We will call such a digraph a {\em maximally $C_3$-free
  digraph}.  We denote the total out-neighborhood (respectively,
in-neighborhood) of a vertex subset $S$ by $\NFP(S)$ (respectively,
$\NFM(S)$).  A vertex $w \notin S$ belongs to $\NFP(S)$ if for all $s
\in S$, we have arc $sw \in A$.  The goal of this section is to prove
the following theorem.

\begin{theorem}\label{thm:c3-color}
Let $D$ be a $C_3$-free digraph with independence number $\alpha$.
Then $D$ can be colored with at most $\frac{(\alpha+8)!}{9!}$ colors
in polynomial time.
\end{theorem}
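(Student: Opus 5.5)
We prove the bound by induction on $\alpha$, using the path decomposition of Lemma~\ref{lem:eff_loc_to_glob_dig}. Write $g(\alpha)$ for the number of colors we use on $C_3$-free digraphs with independence number at most $\alpha$. The target $\frac{(\alpha+8)!}{9!}$ satisfies $g(1)=1$ and $g(\alpha)=(\alpha+8)\,g(\alpha-1)$. The base case holds because a $C_3$-free tournament is acyclic. So it suffices to show
\[ g(\alpha) \le (\alpha+8)\, g(\alpha-1). \]
Since adding arcs neither decreases $\chiv$ nor increases the independence number, we first complete $D$ greedily to a maximally $C_3$-free digraph in polynomial time. The coloring obtained for the completed digraph is also valid for $D$.

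The key local facts needed for Lemma~\ref{lem:eff_loc_to_glob_dig} are as follows.

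\textbf{Non-neighborhoods.} By Observation~\ref{obs:alphaMinusOne}, $N^o(v)$ has independence number at most $\alpha-1$. It is $C_3$-free as an induced subgraph, so it can be colored recursively with $d=g(\alpha-1)$ colors.

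\textbf{Arc neighborhoods.} For an arc $e=uv$, the set $N(e)=N^+(v)\cap N^-(u)$ is empty, since any $w$ in it would close a $C_3$ on $u,v,w$. So these sets cost nothing.

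\textbf{Non-arc neighborhoods.} For a non-arc $uv$, consider $N(e)=N^+(v)\cap N^-(u)$. Any vertex $w\in N(e)$ together with any other $w'\in N(e)$ forms a $C_4$-type configuration. I expect $N(e)$ to have independence number at most $\alpha-1$: take an independent set $I\subseteq N(e)$ and add $u$, which is non-adjacent to $v$. The cleaner version of this argument is to note that $N(e)\subseteq N^-(u)$, and to bound $N^-(u)$ and $N^+(u)$ directly.

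\textbf{In- and out-neighborhoods.} In a $C_3$-free digraph, $N^+(u)$ has no arcs back to $N^-(u)$. I would bound $\chiv(N^+(u))$ by a recursive application inside $N^+(u)$ of the same decomposition. This is the main obstacle: showing that $N^+(u)$ (and $N^-(u)$) has independence number at most $\alpha-1$, or otherwise has dichromatic number $O(g(\alpha-1))$.

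What I would try first is to choose $s=v_0$ and $t=v_k$ carefully. Let $s$ be a vertex whose out-neighborhood $N^+(s)$ is contained in a set of independence number less than $\alpha$. For example, take a maximum independent set $I$ and pick a vertex $s\in I$. Every vertex outside $I$ is adjacent to some member of $I$, and using maximality one tries to place $N^+(s)$ inside $\bigcup_{x\in I\setminus\{s\}}N^o(x)$ together with sets having small independence number. If this succeeds, it gives $b\le \alpha\, g(\alpha-1)$ or a similar bound.

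Plugging into Lemma~\ref{lem:eff_loc_to_glob_dig} with $c\le g(\alpha-1)$ (from the non-arc neighborhoods) and $d=g(\alpha-1)$ gives at most $9g(\alpha-1)+2(b-c)$ colors. To reach the exact factor $\alpha+8$, I would instead use the ``no arcs from $D_0$ to $D_{k+1}$'' variant, which gives $4c+4d+b = 8g(\alpha-1)+b$. This requires $b\le \alpha\, g(\alpha-1)$, which is the quantitative form of the obstacle above.

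The condition that there are no arcs from $D_0$ to $D_{k+1}$ should follow from $C_3$-freeness combined with the choice of $s,t$. Take $t$ adjacent to $s$ through the edge $ts$: then an arc $xy$ with $x\in N^+(s)$ and $y\in N^-(t)$ would create a cycle $s\to x\to y\to t\to s$, which is a $C_4$. To rule it out we need to exploit shortest-path minimality rather than $C_3$-freeness alone. Checking this carefully is the second delicate point.

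Each step (completion, finding the chain, computing the decomposition, and recursing on sets of smaller independence number) is polynomial, with recursion depth $\alpha$. So the algorithm runs in polynomial time for fixed $\alpha$.
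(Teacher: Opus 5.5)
Your overall architecture matches the paper's: complete $D$ to a maximally $C_3$-free digraph, build the path decomposition with $c=d=g(\alpha-1)$ and $b=\alpha\,g(\alpha-1)$, and use the ``no arcs from $D_0$ to $D_{k+1}$'' case of Lemma~\ref{lem:eff_loc_to_glob_dig} to get $4c+4d+b=(\alpha+8)g(\alpha-1)$. However, the three facts that feed this lemma are either argued incorrectly or left open.

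\emph{First}, your bound for a non-arc $uv$ fails. You cannot add $u$ to an independent set $I\subseteq N(uv)$, because $N(uv)\subseteq N^-(u)$ makes $u$ adjacent to every vertex of $I$. Retreating to a bound on $N^-(u)$ only moves the problem into the step you yourself call the main obstacle. The missing idea is maximality (Lemma~\ref{lem:non-neighbor}). Since adding the arc $vu$ would create a $C_3$, there is $w$ with $u\to w\to v$. Take any $a\in N(uv)$, so $v\to a\to u$. An arc $w\to a$ gives the triangle $w\,a\,u$, and an arc $a\to w$ gives the triangle $a\,w\,v$. Hence $N(uv)\subseteq N^o(w)$, and so $\chiv(N(uv))\le g(\alpha-1)$.

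\emph{Second}, the absence of arcs from $D_0$ to $D_{k+1}$ needs neither shortest-path minimality nor a special choice of $s,t$ (Observation~\ref{obs:savingsOnB}). A vertex of $D_{k+1}$ other than $v_0$ lies in neither $N^o(v_0)$ nor $N^+(v_0)$; otherwise the definition would have placed it in $N_0$ or $D_0$. So it lies in $N^-(v_0)$, while $D_0\subseteq N^+(v_0)$. The fact you already stated, that a $C_3$-free digraph has no arc from $N^+(v_0)$ to $N^-(v_0)$, finishes it. An arc from $D_0$ into $v_0$ itself would be a digon.

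\emph{Third}, the main gap is the one you flagged, $b\le\alpha\,g(\alpha-1)$, and choosing an arbitrary $s$ in a maximum independent set $I$ does not close it. The vertices of $N^+(s)$ non-adjacent to some $x\in I\setminus\{s\}$ are covered by $(\alpha-1)g(\alpha-1)$ colors. But nothing controls the vertices of $N^+(s)$ adjacent to all of $I$.

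The paper's fix (Lemma~\ref{lem:good-pair}) takes $K$ to be a semi-kernel: an independent set from which every vertex is reachable by a directed path of length at most $2$, computable in polynomial time.
\begin{itemize}
\item By $C_3$-freeness, no vertex outside $K$ has arcs to all of $K$. So for every $u$ in the set $U$ of vertices adjacent to all of $K$, the set $N_u=N^-(u)\cap K$ is nonempty.
\item Choose $u'\in U$ with $N_{u'}$ inclusion-minimal, and pick $s_1\in N_{u'}$.
\item Let $w\in N^-(s_1)\cap U$. Then $s_1\notin N_w$, so minimality rules out $N_w\subseteq N_{u'}$. This yields $v'\in K$ with $u'\to v'\to w$.
\item Together with $w\to s_1\to u'$, $C_3$-freeness forces $w\in N^o(u')$.
\end{itemize}
Hence $N^-(s_1)$ is covered by $N^o(u')$ and the at most $\alpha-1$ sets $N^o(x)$, $x\in K\setminus\{s_1\}$. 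This gives $\chiv(N^-(s_1))\le\alpha\,g(\alpha-1)$. Reversing all arcs gives $s_2$ with $\chiv(N^+(s_2))\le\alpha\,g(\alpha-1)$, and one takes $v_0=s_2$, $v_k=s_1$. Without this argument, or an equivalent one, your proposal never bounds $b$, so Lemma~\ref{lem:eff_loc_to_glob_dig} yields no bound at all.
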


This improves on the previous bound of $\chiv(D) \leq
35^{\alpha-1}\alpha!$ proved by \cite{harutyunyan2019coloring}.
Notice that if $D$ is a $C_3$-free digraph and $\alpha(D) = 1$, then
$D$ is acyclic (i.e., $\chiv(D) = 1$).
To prove this result, we will again use the \ndecomp presented in
Definition \ref{def:vc_gen_dig}.  When $D$ is a $C_3$-free digraph,
this decomposition has additional useful properties.

\begin{observation}\label{obs:savingsOnB}
Let $D$ be a $C_3$-free digraph.  For a vertex chain $(v_i)_{0 \leq
  i \leq k}$ and a corresponding path decomposition in $D$, there is
  no arc from $D_0$ to $D_{k+1}$.
\end{observation}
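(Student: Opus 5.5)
Suppose, for contradiction, that there is an arc $uw$ with $u \in D_0$ and $w \in D_{k+1}$. By Definition~\ref{def:edge_dig}, $u \in N^+(v_0)$ and $w \in N^-(v_k)$, so $v_0u$ and $wv_k$ are arcs. Hence $(v_0,u,w,v_k)$ is a walk of three forward arcs from $v_0$ to $v_k$. By minimality of the vertex chain (a shortest path using forward arcs and non-arcs), this forces $k \le 3$. So the argument reduces to ruling out the cases $k \in \{0,1,2,3\}$ using $C_3$-freeness together with the fact that $u$ and $w$ lie in none of the earlier sets $D_1,\dots,D_k$ and $N_0,\dots,N_k$.

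The key step is to exploit the exclusions built into the decomposition. Since $u \notin N_j$ for any $j$, $u$ is adjacent to every $v_j$. Likewise $w$ is adjacent to every $v_j$. Since $u \notin D_j = N(e_j)\setminus\cdots$ for $j \ge 1$, it is never the case that $u \in N^+(v_j)\cap N^-(v_{j-1})$. The same holds for $w$.

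Consider first the position of $u$. We have $v_0u \in A$, and $u$ is adjacent to each $v_j$. Let $j$ be the smallest index with $uv_j \in A$. If such a $j$ exists, then $j\ge 1$ (the graph is oriented, so $uv_0 \notin A$). Then $u \in N^+(v_{j-1}) \cap N^-(v_j) = N(e_j)$, which puts $u$ in some $D_{j'}$ with $j'\le j$, a contradiction. Hence $u \in N^+(v_j)$ for all $j$, and in particular $v_ku \in A$.

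Now $v_k u$, $uw$, and $wv_k$ form a directed triangle, contradicting that $D$ is $C_3$-free. This is exactly the argument of Lemma~\ref{clm:decomp_dig} applied in reverse, and it makes the case analysis on $k$ unnecessary.

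The step I expect to need care is the degenerate cases where $v_0 = v_k$ or $k \le 1$, where the definitions of $D_0$ and $D_{k+1}$ interact with $Z$. If $k=0$, then $u \in N^+(v_0)$ and $w \in N^-(v_0)$, so $v_0,u,w$ directly form a $C_3$. For $k\ge1$ the argument above applies verbatim. I must also check that $u \ne w$ and that neither is a chain vertex. This holds since $v_0,v_k \in Z$ or lie in earlier sets, by the remarks after Lemma~\ref{clm:decomp_dig}.
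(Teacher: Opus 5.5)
\textbf{There is a genuine gap in your key step.} In the paper, for $e_j$ from $v_{j-1}$ to $v_j$, $N(e_j)=N^+(v_j)\cap N^-(v_{j-1})$. These are the vertices $x$ with $v_jx\in A$ and $xv_{j-1}\in A$, as you state correctly one paragraph earlier. But if $j$ is the smallest index with $uv_j\in A$, minimality gives $v_{j-1}u\in A$ and $uv_j\in A$. So $u\in N^+(v_{j-1})\cap N^-(v_j)$, which is a forward $2$-path parallel to $e_j$, not a vertex of $N(e_j)$.

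The exclusions $u\notin N(e_j)$ only forbid the pattern ``$uv_{j-1}\in A$ and $v_ju\in A$''. Starting from $v_0u\in A$, the arcs between $u$ and $v_0,v_1,\dots,v_k$ may switch once from ``into $u$'' to ``out of $u$'' with no contradiction. So $v_ku\in A$ does not follow, and it is false in general. Take $k=1$ with $e_1$ a non-arc, and a third vertex $u$ with $v_0u,uv_1\in A$. Then $D_1=\emptyset$, $N_0=\{v_1\}$, $N_1=\{v_0\}$ and $D_0=\{u\}$, yet $uv_k\in A$. Lemma~\ref{clm:decomp_dig} works only because its vertex satisfies $w\in N^-(v_0)$ and $w\in N^+(v_k)$, which forces exactly the forbidden switch. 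Your ``reverse'' version starts from the wrong end.

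\textbf{The fix is to use the asymmetry of Definition~\ref{def:edge_dig} on $w$ rather than on $u$.} $D_{k+1}$ is carved out after $D_0$ and $N_0$, so membership in $D_{k+1}$ pins down the arc between $w$ and $v_0$. Membership in $D_0$ does not pin down the direction of the arc between $u$ and $v_k$. Concretely, $w$ lies in no $D_j$ or $N_j$ with $0\le j\le k$. Hence $w\notin D_0$ forces $w\notin N^+(v_0)$, and $w\notin N_0$ forces $w\notin N^o(v_0)$. Also $w\neq v_0$, since otherwise $v_0u$ and $uv_0$ would form a digon. Therefore $wv_0\in A$, and $v_0\to u\to w\to v_0$ is a $C_3$. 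This is the paper's proof. It needs neither your bound $k\le 3$ nor a scan along the chain, and it covers $k=0$ without a separate case.
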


\begin{proof}
Consider a vertex $v \in D_{k+1}$.  Since $v \notin N_0$ and $v
\notin D_0$, it follows that $v \in N^-(v_0)$.  Thus, for some $u \in
D_0$, if there is an arc $uv$, then there is a $C_3$ in $D$, a
contradiction.
  \end{proof}

By Lemma \ref{lem:eff_loc_to_glob_dig},
Observation \ref{obs:savingsOnB} allows us to color $D_0$ and
$D_{k+1}$ with the same color palette.  This is useful when $D_0$ and
$D_{k+1}$ require many more colors than $N(e_i)$.  Let $g(\alpha)$
denote the maximum dichromatic number of a $C_3$-free digraph with
independence number at most $\alpha$.

\begin{lemma}\label{lem:non-neighbor}
Let $D$ be a maximally $C_3$-free digraph.  If $uv$ is a
non-arc in $D$, then $\chiv(N(uv)) \leq g(\alpha-1)$.
\end{lemma}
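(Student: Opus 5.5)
The plan is to show that $D[N(uv)]$ has independence number at most $\alpha-1$. It is automatically $C_3$-free, being an induced subdigraph of $D$. The bound $\chiv(N(uv)) \leq g(\alpha-1)$ then follows directly from the definition of $g$.

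First I use maximality. Since $uv$ is a non-arc, there is no arc between $u$ and $v$. By maximal $C_3$-freeness, adding the arc $vu$ would create a directed triangle. Such a triangle must have the form $v \to u \to x \to v$, so there is a vertex $x$ with $ux \in A$ and $xv \in A$.

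Next I show that $x$ has no neighbour in $N(uv) = N^+(v)\cap N^-(u)$. Take any $w \in N(uv)$, so that $vw \in A$ and $wu \in A$.
\begin{itemize}
\item If $xw \in A$, then $x \to w \to u \to x$ is a $C_3$.
\item If $wx \in A$, then $w \to x \to v \to w$ is a $C_3$.
\end{itemize}
Both contradict $C_3$-freeness, so $x$ and $w$ are non-adjacent. Also $x \notin N(uv)$, since $ux \in A$ means $x \notin N^-(u)$ in an oriented graph.

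Consequently, for any independent set $I$ of $D[N(uv)]$, the set $I \cup \{x\}$ is independent in $D$. Hence $|I| \leq \alpha-1$, giving $\alpha(D[N(uv)]) \leq \alpha - 1$ and therefore $\chiv(N(uv)) \leq g(\alpha-1)$.

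The only real step is choosing the witness correctly. Of the two $C_4$ witnesses provided by maximality, only the one coming from the direction opposite to the non-arc (a vertex $x$ with $u \to x \to v$) is forced to be non-adjacent to $N(uv)$. Once that witness is chosen, the rest is a direct check.
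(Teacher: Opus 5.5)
Your proof is correct and is essentially the paper's argument. The paper likewise takes the vertex $w$ that completes the directed triangle created by adding $vu$ (so $uw, wv \in A$). It rules out both orientations between $w$ and any $a \in N(uv)$ by the same two triangles, which gives $N(uv) \subseteq N^o(w)$. It then concludes via the observation that a non-neighbourhood has independence number at most $\alpha-1$, which is exactly your direct argument of extending an independent set of $D[N(uv)]$ by $x$.
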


\begin{proof}
Let $a$ be a vertex in $N(uv)$.  Since $uv$ is a non-arc, there is
also no edge between $u$ and $v$.  Thus, it must be the case that
adding arc $vu$ causes a $C_3$, say with $w$.  So then
$N(uv) \subseteq N^o(w)$.  If this were not the case, then we would
have either triangle $wva$ or $wau$, which is a contradiction.
\end{proof}

\begin{lemma}\label{lem:good-pair}
  Let $D$ be a maximally $C_3$-free digraph with independence number
  at most $\alpha$.  $D$ has vertices $s_1,s_2 \in V$ such that
  $\chiv(N^-(s_1)) \leq \alpha \cdot g(\alpha-1)$ and
  $\chiv(N^+(s_2)) \leq \alpha \cdot g(\alpha-1)$.
\end{lemma}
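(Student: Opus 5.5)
The plan is to prove the statement for $s_2$ (out-neighbourhoods); the claim for $s_1$ then follows by applying the same argument to the reverse of $D$. Reversing all arcs preserves being maximally $C_3$-free and preserves the independence number.

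\textbf{Reduction to one leftover set.} Fix a maximal independent set $I=\{x_1,\dots,x_m\}$ of the underlying graph, so $m\le\alpha$, and try $s_2\in I$. Every vertex of $N^+(s_2)$ that is non-adjacent to some $x_j\neq s_2$ lies in $N^o(x_j)$. By Observation~\ref{obs:alphaMinusOne}, $D[N^o(x_j)]$ has independence number at most $\alpha-1$, and it is still $C_3$-free. So each such part is $g(\alpha-1)$-colourable, and together these parts use at most $(m-1)\,g(\alpha-1)$ colours. What remains is
\[
R(s_2)=\{w\in N^+(s_2): w \text{ is adjacent to every } x_j\in I\setminus\{s_2\}\},
\]
and it suffices to show that, for a suitable choice of $s_2$, $R(s_2)$ lies in the non-neighbourhood of a single vertex. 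Then $R(s_2)$ costs one more block of $g(\alpha-1)$ colours, for a total of at most $\alpha\, g(\alpha-1)$.

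\textbf{Using maximality.} For two distinct $x_i,x_j\in I$ there is no arc between them. By maximal $C_3$-freeness, adding $x_j x_i$ creates a $C_3$, so there is a vertex $a$ with $x_i\to a\to x_j$. Now take $w\in R(x_i)$, so $x_i\to w$, and recall that $w$ is adjacent to $x_j$. $C_3$-freeness forbids the configurations $x_j\to w\to a\to x_j$ and $w\to x_j \to \cdots \to x_i\to w$ through suitable intermediate vertices. The idea is to turn these forbidden triangles into an orientation rule: either $w\to x_j$ for all $w\in R(x_i)$, or $R(x_i)$ is non-adjacent to a witness vertex such as $a$. This gives an auxiliary ``domination'' relation on $I$: $x_i\Rightarrow x_j$ when $R(x_i)$ sends arcs into $x_j$.

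\textbf{Extremal choice of $s_2$ (main obstacle).} I would choose $s_2$ extremal for the relation $\Rightarrow$, i.e.\ a sink of it. This is possible if the relation is acyclic, which I expect to follow from $C_3$-freeness, possibly after replacing the relation by its transitive closure. If acyclicity fails, the fallback is to build $I$ greedily along a directed walk $x_1\to\cdots$, with each new $x_{i+1}$ chosen in the current common out-region. $C_3$-freeness prevents later vertices from pointing back, so the walk terminates after at most $\alpha$ steps at a vertex whose out-neighbourhood is covered by at most $\alpha$ non-neighbourhoods.

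The crux is the following: for the sink $s_2$, every $w\in R(s_2)$ is non-adjacent to one fixed vertex, either an $x_j$ or a maximality witness. This would place $R(s_2)$ inside a single $N^o(\cdot)$ and finish the bound $\chiv(N^+(s_2))\le \alpha\, g(\alpha-1)$. This is the step where I expect the real difficulty, and where the precise case analysis using the $C_4$ witnesses from maximality has to be carried out.
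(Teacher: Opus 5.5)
There is a genuine gap. Your reduction to the leftover set $R(s_2)$ mirrors the paper's first step. The crux, however, is left unproved: that for a suitable $s_2\in I$ the set $R(s_2)$ lies in a single non-neighbourhood. For an arbitrary maximal independent set $I$ this is false.

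Take two directed $4$-cycles $x_1\to p\to x_2\to q\to x_1$ and $a\to b\to c\to d\to a$, and add all arcs from the first to the second. This digraph is $C_3$-free, even maximally so: the only non-edges are the diagonals of the two $4$-cycles, and each closes a triangle in either direction. It has independence number $2$, and $I=\{x_1,x_2\}$ is a maximal independent set. Yet $R(x_1)=\{a,b,c,d,p\}$ and $R(x_2)=\{a,b,c,d,q\}$ both contain a directed cycle. Every $N^o(z)$, by contrast, induces a $C_3$-free tournament and is therefore acyclic, so no $s_2\in I$ works. In this example $x_1\Rightarrow x_2\Rightarrow x_1$ (via $p$ and $q$), so your relation has no sink. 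Already for $\alpha=1$, taking $I=\{x\}$ with $x$ a non-sink vertex of a transitive tournament also fails.

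Three further points. The single vertex can never be an $x_j$, because vertices of $R(s_2)$ are adjacent to every $x_j$ by definition. The maximality witnesses are a red herring; the paper's proof of this lemma never uses maximality. And your fallback walk, as described, has adjacent consecutive vertices, so it does not produce an independent set.

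Two ideas are missing.

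\emph{The right independent set.} The independent set must be a semi-kernel $K$; for $s_2$, take a semi-kernel of the reverse of $D$, i.e.\ an independent set that every other vertex reaches by a directed path of length at most $2$. Then no vertex $y$ receives arcs from all of $K$, since otherwise $y\to z\to x\to y$ would be a $C_3$ for some $x\in K$.

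\emph{The right extremal choice.} The extremal choice is made not on $K$ but on the set $U$ of vertices adjacent to every vertex of $K$. For $u\in U$ put $M_u=N^+(u)\cap K$, which is nonempty by the first point. Choose $u'$ with $M_{u'}$ inclusion-minimal, and let $s_2$ be any element of $M_{u'}$. Now let $w\in R(s_2)\subseteq U$; note $w\neq u'$, as $D$ has no digons.
\begin{itemize}
\item An arc $wu'$ would close the triangle $u's_2w$.
\item Since $s_2\in M_{u'}\setminus M_w$, minimality forces some $v'\in M_w\setminus M_{u'}$, i.e.\ $w\to v'\to u'$. An arc $u'w$ would then close the triangle $u'wv'$.
\end{itemize}
Hence $R(s_2)\subseteq N^o(u')$. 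This costs one block of $g(\alpha-1)$ colours and yields $\chiv(N^+(s_2))\le \alpha\, g(\alpha-1)$. The vertex absorbing the leftover set is thus the extremal vertex $u'\in U$, not anything attached to a sink of a relation on $I$.
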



\begin{proof}
Find a semi-kernel $K$ in $D$.  A semi-kernel is an independant and
distance-2 dominating set in a digraph and can be found in polynomial
time~\cite{bondy2003short}.  First, we claim that $\NFM(K)$ is
empty. If not, then there is a vertex $y\in \NFM(K)$, and $y$ is not
dominated by any vertex in $K$ (since our digraph does not contain
digons). Thus there exists a vertex $v\in K$ such that $v$ 2-dominates
$y$, while $yv$ is an arc by definition of $\NFM(K)$ which gives a
$C_3$ in $D$, a contradiction.
	
Now denote by $U$ the set of vertices not in any non-neighborhood
$N^o(v)$ of a vertex $v \in K$.  In other words, a vertex $u$ is in
$U$ if $u$ has an arc to or from each vertex in $K$.  For each vertex
$u\in U$, define $N_u=N^-(u)\cap K$, which gives a subset family of
size $|U|$ on $K$. Among them pick a minimal one, say $N_{u'}$, and
then arbitrarily choose a vertex $s_1\in N_{u'}$. We claim that
$\chiv(N^-(s_1)) \leq \alpha \cdot g(\alpha-1)$.

For any in-neighbor $w$ of $s_1$, either $w$ lies in non-neighborhood
of some vertex in $K-s_1$, or $w\in U$. The former case can be covered
by $(\alpha -1)g(\alpha-1)$ colors since $|K|\le
\alpha$. Next we prove that in the latter case, $w\in N^o(u')$.  Note that
$ws_1$ and $s_1u'$ are arcs by definition.  As in the proof of Lemma
\ref{lem:non-neighbor}, it suffices to find $v'$ such that $v'w$ and
$u'v'$ are arcs.  We claim that there must exist such a $v' \in K$.
Suppose not, then $N_w\subsetneq N_{u'}$, contradicting the fact that
$N_{u'}$ is minimal since $w,u'\in U$ and $s_1 \in N_{u'}$ but $s_1
\notin N_w$.  Thus $\chiv(N^-(s_1))\le (\alpha
-1)g(\alpha-1)+g(\alpha-1) = \alpha \cdot g(\alpha-1)$.
	
By reversing the direction of all arcs, the same argument results in a
vertex $s_2$ such that $\chiv(N^+(s_2)) \leq \alpha \cdot
g(\alpha-1)$. (Note that $s_1$ and $s_2$ need not be distinct.)
\end{proof}

The following corollary of \Cref{lem:good-pair} that shall be useful
in simplifying the presentation of Algorithm~\ref{alg:T-freeColoring}.

\begin{corollary}\label{C:good-pair}
Let $D$ be a maximally $C_3$-free digraph with independence number at
most $\alpha$. Then, in polynomial time, we can find vertices $s,t$
such that each $N^+(s)$ and $N^-(t)$ can be partitioned into at most
$\alpha$ components $S_1,\ldots,S_\alpha$ and $T_1,\ldots,T_\alpha$,
respectively, such that for $1\leq j\leq \alpha$, $D[S_j]$ and
$D[T_j]$ each has independence number at most $\alpha-1$. 
\end{corollary}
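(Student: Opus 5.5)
The plan is to make the proof of \Cref{lem:good-pair} explicit, tracking the partitions it produces rather than only the resulting color bound. First I compute, in polynomial time, a semi-kernel $K$ of $D$ using~\cite{bondy2003short}. Since $K$ is independent and $D$ has independence number at most $\alpha$, we get $|K|\le\alpha$. Next I compute the set $U$ of vertices adjacent (in either direction) to every vertex of $K$. For each $u\in U$ I form $N_u=N^-(u)\cap K$. A minimal member $N_{u'}$ under inclusion can be found by comparing all pairs, which takes polynomial time. I then pick any $s_1\in N_{u'}$. Note that $N_{u'}$ is nonempty: otherwise $u'\in \NFM(K)$, and the proof of \Cref{lem:good-pair} shows $\NFM(K)=\emptyset$. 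If $U$ is empty, I take $s_1$ to be any vertex of $K$; the argument below then only uses the non-neighborhood parts.

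Next I write down the partition of $N^-(s_1)$. Enumerate $K\setminus\{s_1\}=\{k_1,\dots,k_{m}\}$ with $m\le\alpha-1$. Every $w\in N^-(s_1)$ is either in $N^o(k_j)$ for some $j$, or it lies in $U$; in the latter case the proof of \Cref{lem:good-pair} shows $w\in N^o(u')$. So I set
\[
S_j=\big(N^-(s_1)\cap N^o(k_j)\big)\setminus\bigcup_{i<j}S_i \quad (1\le j\le m),
\]
and $S_{m+1}=N^-(s_1)\setminus\bigcup_{j\le m}S_j\subseteq N^o(u')$. Each part is contained in the non-neighborhood of a single vertex. Since $D[S_j]$ is an induced subdigraph of $D[N^o(x)]$, Observation~\ref{obs:alphaMinusOne} gives each part independence number at most $\alpha-1$. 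There are at most $m+1\le\alpha$ parts, and I pad with empty sets if needed.

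Finally, I apply the same procedure to the reversed digraph. Reversal preserves being maximally $C_3$-free and preserves the independence number. This yields $s_2$ whose out-neighborhood in $D$ splits into at most $\alpha$ parts $T_j$, each of independence number at most $\alpha-1$. Matching the corollary's notation, the out-neighborhood vertex comes from the reversed run and the in-neighborhood vertex from the direct run; I relabel accordingly, so that $s$ is this $s_2$ and $t$ is $s_1$.

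The only delicate point is the membership $w\in N^o(u')$ for $w\in U\cap N^-(s_1)$, together with the nonemptiness of $N_{u'}$. Both are already established in \Cref{lem:good-pair} (and \Cref{lem:non-neighbor}), so nothing new needs to be proved. The only real work is checking that each step is polynomial-time and that the sets are pairwise disjoint, which the "minus earlier parts" construction guarantees.
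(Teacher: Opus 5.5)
Your proposal is correct and follows the route the paper intends. The paper gives no separate proof of this corollary; it reads off the partition from the proof of \Cref{lem:good-pair}, exactly as you do: $N^-(s_1)$ splits into the parts lying in $N^o(k)$ for $k\in K\setminus\{s_1\}$ plus one part inside $N^o(u')$, each of independence number at most $\alpha-1$ by Observation~\ref{obs:alphaMinusOne}, and the reversed digraph supplies the other vertex. Your explicit treatment of the details the paper leaves implicit is sound. These are the nonemptiness of $N_{u'}$ via $\NFM(K)=\emptyset$, the case $U=\emptyset$, excluding $K$ from $U$ (which avoids the degenerate $|K|=1$ case), and the relabeling $s=s_2$, $t=s_1$.
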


We are now ready to prove Theorem \ref{thm:c3-color} via the algorithm
whose steps are detailed in Algorithm~\ref{alg:T-freeColoring}.

\begin{proof}[Proof of \Cref{thm:c3-color}]
We give an overview of Algorithm~\ref{alg:T-freeColoring}.  Consider a
maximally $C_3$-free digraph $D$ with independence number $\alpha$.
By Lemma \ref{lem:good-pair}, we can find a pair of vertices $s,t$
such that $\chiv(N^+(s)) \leq \alpha \cdot g(\alpha-1)$ and
$\chiv(N^-(t)) \leq \alpha \cdot g(\alpha-1)$.  Next, we find a
$b$-vertex chain $(v_i)_{0 \leq i \leq k}$, where $v_0 = s, v_k = t$
and $b = \alpha \cdot g(\alpha-1)$.  For every non-arc $e_i$ in the
corresponding edge-chain, we have $\chiv(N(e_i)) \leq g(\alpha-1)$ due
to Lemma \ref{lem:non-neighbor}.  For every arc $e_i$, observe that
$N(e_i)$ is empty, since $D$ is $C_3$-free.  Thus, for each $D_i$ for
$1\leq i \leq k$ in the path decomposition, we have $\chiv(D_i) \leq
g(\alpha-1)$.  Additionally, each $D[N_i]$ has independence number at
most $\alpha-1$ by Observation \ref{obs:alphaMinusOne} and thus has
$\chiv(N_i) \leq g(\alpha-1)$.  Applying
Lemma \ref{lem:eff_loc_to_glob_dig} and
Observation \ref{obs:savingsOnB}, implies the following claim.

\begin{claim}\label{clm:c3-color}
$g(\alpha) \leq (\alpha+8) g(\alpha-1)$.
\end{claim}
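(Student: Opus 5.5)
We obtain the claim by plugging the parameters established in the preceding overview into the ``furthermore'' clause of Lemma~\ref{lem:eff_loc_to_glob_dig}. Take a maximally $C_3$-free digraph $D$ with independence number at most $\alpha$. Maximality costs nothing: adding arcs neither decreases $\chiv$ nor increases the independence number.

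By Lemma~\ref{lem:good-pair} (with $s=s_2$, $t=s_1$) we get vertices $s,t$ such that $\chiv(N^+(s))\le \alpha\, g(\alpha-1)$ and $\chiv(N^-(t))\le \alpha\, g(\alpha-1)$. A shortest $s$--$t$ path using forward arcs and non-arcs is then a $b$-vertex chain with $b=\alpha\, g(\alpha-1)$. Such a path exists because forward arcs and non-arcs go in every direction except against an arc. The parameters are:
\begin{itemize}
\item \textbf{$c$:} for an arc $e=uv$, the set $N(e)=N^+(v)\cap N^-(u)$ is empty, since any vertex in it closes a $C_3$; for a non-arc, Lemma~\ref{lem:non-neighbor} gives $\chiv(N(e))\le g(\alpha-1)$. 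So $c=g(\alpha-1)$.
\item \textbf{$d$:} by Observation~\ref{obs:alphaMinusOne}, each $N^o(v)$ induces a $C_3$-free digraph with independence number at most $\alpha-1$, so $d=g(\alpha-1)$.
\item \textbf{The extra hypothesis:} Observation~\ref{obs:savingsOnB} says there are no arcs from $D_0$ to $D_{k+1}$.
\end{itemize}
For $\alpha\ge 1$ we have $\max\{b,c,d\}=b$, so the furthermore clause applies and yields
\[ \chiv(D)\le 4c+4d+b = 4g(\alpha-1)+4g(\alpha-1)+\alpha\, g(\alpha-1) = (\alpha+8)\,g(\alpha-1). \]

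The step needing care is the degenerate case in the final remark of Lemma~\ref{lem:eff_loc_to_glob_dig}, where the chain has at most two vertices ($k\le 1$). There the lemma gives the weaker form $2d+c+2+b$, with up to two leftover vertices in $Z$. I would handle it in one of two ways:
\begin{itemize}
\item Absorb the two vertices of $Z$ into existing palettes, since a single vertex can never be the whole of a monochromatic cycle with vertices of another part unless it closes one.
\item Bound it directly as $3g(\alpha-1)+2+\alpha g(\alpha-1)\le(\alpha+8)g(\alpha-1)$. This holds because $g(\alpha-1)\ge 1$ gives $2\le 5g(\alpha-1)$.
\end{itemize}
I expect the direct bound to suffice, so this case is the only potential obstacle.

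With the claim in hand, $g(1)=1$ iterates to $g(\alpha)\le\prod_{j=2}^{\alpha}(j+8)=(\alpha+8)!/9!$. Every step is constructive: semi-kernel, shortest path, decomposition, and recursion on parts of smaller independence number. This gives the polynomial-time algorithm of Theorem~\ref{thm:c3-color}.
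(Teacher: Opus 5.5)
Your route is the paper's. The claim comes from feeding $b=\alpha\,g(\alpha-1)$ (Lemma~\ref{lem:good-pair}), $c=d=g(\alpha-1)$ (Lemma~\ref{lem:non-neighbor}, $N(e)=\emptyset$ for arcs, Observation~\ref{obs:alphaMinusOne}) and Observation~\ref{obs:savingsOnB} into the ``furthermore'' clause of Lemma~\ref{lem:eff_loc_to_glob_dig}. Your treatment of chains with $k\le 1$ via $2d+c+2+b\le(\alpha+8)g(\alpha-1)$ is correct for $\alpha\ge 2$, which is the only range where the claim is meant since $g(0)=0$. It is more careful than the paper, which does not address this case in the claim. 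The ``absorb $Z$'' alternative is not an argument, but you do not need it.

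One justification is wrong: the existence of an admissible $s$--$t$ path does not follow from ``forward arcs and non-arcs go in every direction except against an arc.'' Here is a counterexample.
\begin{itemize}
\item Take a directed $C_4$ $a_1a_2a_3a_4$ plus a vertex $b$ and all arcs $a_ib$. This digraph is maximally $C_3$-free with independence number $2$.
\item In the reversed digraph the only semi-kernel is $\{b\}$, since nothing reaches $b$. So Lemma~\ref{lem:good-pair} returns $s=s_2=b$, and indeed $N^+(b)=\emptyset$. It also returns some $t=s_1\in\{a_1,\dots,a_4\}$.
\item But $b$ has neither out-neighbors nor non-neighbors, so no path of forward arcs and non-arcs leaves $b$, and the decomposition cannot be built.
\end{itemize}

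The paper avoids this through its standing assumption that the digraph being colored is strongly connected, and you should invoke it explicitly. Color each strong component separately with the same palette; this is valid because every directed cycle lies inside one component. Also note that a strong component of a maximally $C_3$-free digraph is again maximally $C_3$-free with independence number at most $\alpha$. The reason is that the two witnesses $w,w'$ of a non-edge $uv$ lie on the directed $C_4$ $u\,w\,v\,w'$, hence in the same component as $u,v$. In a strongly connected $D$ a directed $s$--$t$ path exists, so a shortest admissible path exists and BFS finds it. With this repair your proof is complete.
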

The base case is $g(1) = 1$, since a $C_3$-free tournament is acyclic.
A simple calculation leads to the upper bound stated in the theorem.
It remains to show that Algorithm~\ref{alg:T-freeColoring} runs in
polynomial time.  Since an execution of the algorithm on $D$ makes at
most $\alpha + 8$ recursive calls on subdigraphs with independence
number $\alpha-1$, and these respective subdigraphs are vertex
disjoint, each vertex is included in at most $\alpha n \leq n^2$
recursive calls, resulting in a polynomial-time termination.
\end{proof}

\begin{algorithm}
\caption{Dicoloring$C_3$-Free($D,\alpha,P$)}
\label{alg:T-freeColoring}
Input: {A maximally $C_3$-free digraph $D$ with independence number at most $\alpha$ and a palette $P$ of $\frac{(\alpha+8)!}{9!}$
colors.}

Output: A coloring of $V(D)$ using palette $P$. 
\begin{enumerate}
\item If $D$ is a tournament, then $D$ is acyclic, so color using $1$ color from $P$ and return. 
\item Partition $P$ arbitrarily into $\alpha+8$ palettes $P_0,\ldots,P_{\alpha+7}$ such that for $0\leq i \leq \alpha+7$, $|P_i| = g(\alpha-1) = \frac{(\alpha+7)!}{9!}$.
\item Find two vertices $s,t$ such that $\chiv(N^+(s)) \leq \alpha \cdot g(\alpha-1)$ and $\chiv(N^-(t))\leq \alpha \cdot g(\alpha-1)$. (\Cref{lem:good-pair})

\item Find a $\alpha \cdot g(\alpha-1)$-vertex chain between from $s$ to $t$. Let the chain be $(v_i)_{0\leq i \leq k}$ where $v_0=s$ and $v_k=t$.

\item For $0\leq i\leq k+1$:
\begin{enumerate}

\item If $N_i \neq \emptyset$: Dicoloring$C_3$-Free($D[N_i],\alpha-1, P_{i \mod 4}$).
\item If $1\leq i \leq k$ and $D_i \neq \emptyset$: Dicoloring$C_3$-Free($D[D_i],\alpha-1, P_{4+(i \mod 5)}$).

\end{enumerate}
\item Partition $D_0 \subseteq N^+(s)$ into $\alpha$ components
$S_1,\ldots,S_\alpha$ and $D_{k+1} \subseteq N^-(t)$ into
$T_1,\ldots,T_\alpha$ such that the independence number of each
$D[S_j]$ and $D[T_j]$ is at most $\alpha-1$. (Corollary \ref{C:good-pair}.)
\item If $S_1 \neq \emptyset$:  Dicoloring$C_3$-Free($D[S_1],\alpha-1, P_{4}$).
\item If $T_1\neq \emptyset$:  Dicoloring$C_3$-Free($D[T_1],\alpha-1, P_{4+(k+1 \mod 5)}$).
\item For $2\leq j\leq \alpha$:
\begin{enumerate}
    \item If $S_j \neq \emptyset$:  Dicoloring$C_3$-Free($D[S_j],\alpha-1, P_{j+7}$).
    \item If $T_j \neq \emptyset$:  Dicoloring$C_3$-Free($D[T_j],\alpha-1, P_{j+7}$).
\end{enumerate}
\end{enumerate}

\end{algorithm}

We conclude this section with the following conjecture.

\begin{conjecture}\label{AlphaConj}
In an oriented 
$C_3$-free digraph $D$ with independence number
$\alpha$, there is a vertex $v$ such that $D[N^+(v)]$ has
independence number $\alpha-1$.  \end{conjecture}

A proof of this conjecture would imply a singly exponential bound
on the dichromatic number for $C_3$-free digraphs (i.e., of the form $c^{\alpha}$).  Specifically, in Claim \ref{clm:c3-color}, we would have $g(\alpha) \leq 9 \cdot g(\alpha-1)$.  Conjecture \ref{AlphaConj} is trivially true for $\alpha=1$, since a $C_3$-free tournament contains a sink vertex, which has out-degree zero.  It is also true when $\alpha=2$ due to a proof by Pierre Charbit and Samuel Coulomb.

\begin{theorem}\label{AlphaTwo}
In an oriented $C_3$-free digraph $D$ with independence number $2$, there is a vertex $v$ such that $D[N^+(v)]$ is a tournament.
\end{theorem}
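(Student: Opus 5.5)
The plan is an extremal choice plus a descent argument. Suppose, for contradiction, that every vertex $v$ has two non-adjacent out-neighbours $x_v,y_v$. Among all vertices, fix $v$ with $|N^+(v)|$ minimum, and pick non-adjacent $x,y\in N^+(v)$. The aim is to produce a vertex $w$ with $N^+(w)\subsetneq N^+(v)$, or at least $|N^+(w)|<|N^+(v)|$, contradicting minimality.

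Two structural facts will drive the argument. First, since $\{x,y\}$ is independent and $\alpha(D)=2$, every vertex other than $x,y$ is adjacent to $x$ or to $y$. More generally, for every vertex $u$, the set $N^o(u)$ has independence number at most $1$ (\Cref{obs:alphaMinusOne}), so it is a tournament. Being $C_3$-free, $N^o(u)$ is therefore a transitive tournament. Second, $C_3$-freeness gives $N^+(x)\cap N^-(v)=\emptyset$: if $x\to z\to v$ with $v\to x$, we get a directed triangle. Hence
\[ N^+(x)\subseteq N^+(v)\cup N^o(v), \]
and similarly for $y$. The vertex $x$ itself is in $N^+(v)$ but not in $N^+(x)$, so $x$ gives a strict saving. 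Therefore the only obstruction to $|N^+(x)|<|N^+(v)|$ comes from out-neighbours of $x$ lying in the transitive tournament $N^o(v)$.

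The main step, and the one I expect to be the real obstacle, is controlling $N^+(x)\cap N^o(v)$ and $N^+(y)\cap N^o(v)$. I would try the following three moves in order.
\begin{enumerate}
\item Use $\alpha=2$ on triples: a vertex $z\in N^o(v)$ is non-adjacent to $v$, so $z$ must be adjacent to at least one of $x,y$ (otherwise $\{v,z\}$ together with the independent pair would violate $\alpha=2$ in the appropriate way).
\item Show, using $C_3$-freeness, that a single $z$ cannot be an out-neighbour of both $x$ and $y$ while also receiving arcs in a way that forces a triangle. The hoped-for conclusion is that $N^+(x)\cap N^o(v)$ and $N^+(y)\cap N^o(v)$ are "small", or are dominated by a single vertex.
\item If the direct count fails, replace $v$ by the sink $t$ of the transitive tournament $N^o(v)$ and compare $N^+(t)$ with $N^+(v)$. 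Here $N^+(t)\cap N^o(v)=\emptyset$, and $C_3$-freeness pushes $N^+(t)$ into $N^+(v)$ or into $N^o(v)$.
\end{enumerate}

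If none of the single comparisons works, I would run a descent instead of a one-step contradiction. Define $v_0=v$ and $v_{i+1}\in\{x_{v_i},y_{v_i}\}$, chosen each time according to the case analysis above, so that some potential strictly decreases at each step. The potential would be $|N^+(v_i)|$, or the pair ($|N^+(v_i)|$, position of the relevant vertex in the transitive order of $N^o(v_i)$) compared lexicographically. Because $D$ is finite, the sequence must terminate, and the vertex where it stops has a tournament as out-neighbourhood.

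For the base of this descent, note that a sink (a vertex of out-degree $0$) trivially works.
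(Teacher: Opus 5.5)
\textbf{There is a genuine gap.} You correctly isolate the only obstruction, namely out-neighbours of $x$ lying in the transitive tournament $N^o(v)$, but nothing in the proposal removes it.
\begin{itemize}
\item Move~2 is only a ``hoped-for conclusion''. Nothing forces $N^+(x)\cap N^o(v)$ or $N^+(y)\cap N^o(v)$ to be small.
\item Move~3 contains a false claim. Since $t\in N^o(v)$ is not adjacent to $v$, no triangle through $v$ constrains $N^+(t)$, so nothing pushes $N^+(t)$ into $N^+(v)$. The correct conclusion is the opposite one: $N^+(t)\subseteq N^-(v)$.
\item The descent is not an argument. No potential is shown to decrease. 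Under your contradiction hypothesis the sequence never ``stops'' at a vertex with tournament out-neighbourhood, so that sentence just restates the goal.
\end{itemize}

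\textbf{The missing idea.} Apply your own ``strict saving'' observation to an out-neighbour of the sink $t$, rather than to $x$, in order to show that $t$ has no out-neighbour in $N^+(v)$.
\begin{itemize}
\item Suppose $t\to z$ with $z\in N^+(v)$. Then $z$ has no out-neighbour $w\in N^o(v)$. Indeed $w\neq t$ because $D$ is oriented, so $w\to t$ since $t$ is the sink, and $t\to z\to w\to t$ would be a $C_3$.
\item Also $z$ has no out-neighbour in $N^-(v)$. Hence $N^+(z)\subseteq N^+(v)\setminus\{z\}$, contradicting the minimality of $|N^+(v)|$.
\item It follows that $N^+(t)\subseteq N^-(v)$.
\end{itemize}

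\textbf{Finishing with your Move~1.} Since $\{x,y,t\}$ is not independent, $t$ is adjacent to $x$ or $y$. By the previous step the arc must point into $t$; say $x\to t$.
\begin{itemize}
\item For $w\in N^+(t)$, an arc $x\to w$ gives the triangle $v\to x\to w\to v$.
\item An arc $w\to x$ gives the triangle $x\to t\to w\to x$.
\item Hence $N^+(t)\subseteq N^o(x)$, which is a tournament, contradicting your hypothesis.
\end{itemize}
If $N^o(v)=\emptyset$, then your saving for $x$ already contradicts minimality.

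This is essentially the paper's proof, which argues directly rather than by contradiction. With $v$ of minimum out-degree and $t$ the sink of $N^o(v)$:
\begin{itemize}
\item If $t$ has an in-neighbour $z\in N^+(v)$, then $N^+(t)\subseteq N^o(z)$ is a tournament.
\item Otherwise $t$ has no neighbour in $N^+(v)$, so $N^+(v)\subseteq N^o(t)$ is a tournament.
\end{itemize}
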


\begin{proof}
Let $x$ be a vertex in $D$ with minimum out-degree and let $y$ be the
sink of $N^o(x)$ (the non-neighbors of $x$, which by
Observation \ref{obs:alphaMinusOne} form a tournament).  We will show
that either $N^+(x)$ or $N^+(y)$ is a tournament.

Suppose $y$ has an out-neighbor $z \in N^+(x)$. Notice that $z$
cannot have an out-neighbor $w \in N^o(x)$, because $yzw $ would form a
$C_3$.  Moreover, $z$ cannot have an out-neighbor $u \in N^-(x)$, since $xzu$
would also form a $C_3$.  Thus, all out-neighbors of $z$ are in
$N^+(x)$, which implies that the out-degree of $z$ is strictly less
than the out-degree of $x$, a contradiction. Hence, $y$ has no out-neighbor in $N^+(x)$.

Now, suppose $y$ has an in-neighbor $z \in N^+(x)$.  Recall that $y$ has no out-neighbor in $N^+(x)$ nor in $N^o(x)$, so $N^+(y) \subseteq N^-(x)$. But for any vertex $w \in N^+(y)$, there is no arc $wz$ and no arc $zw$, as otherwise $xzw$ or $ywz$ form a $C_3$. Thus, $N^+(y)$ belongs to $N^o(z)$ and must therefore be a tournament.
Lastly, if $y$ has no neighbor in $N^+(x)$, then $N^+(x) \subseteq N^o(y)$ is a tournament.
\end{proof}

In the special case of a
$C_3$-free digraph $D$ where $\alpha(D) = 2$,
\cite{harutyunyan2019coloring} claimed a bound of around 25 colors,
although the best lower bound is 2 (for $C_4$ and $C_5$).  
As a corollary of Theorem \ref{thm:c3-color}, we can improve this bound to 10 colors.  Using Theorem \ref{AlphaTwo}, we can further decrease this bound by one.

\begin{corollary}
Let $D$ be a $C_3$-free digraph with $\alpha(D) = 2$.  Then $\chiv(D)
\leq 9$.
\end{corollary}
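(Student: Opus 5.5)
The plan is to rerun the proof of \Cref{thm:c3-color} for $\alpha=2$, replacing \Cref{lem:good-pair} by Theorem~\ref{AlphaTwo}, so that the endpoints of the vertex chain get palettes of size $1$ instead of $\alpha\cdot g(1)=2$. Recall that $g(1)=1$, since a $C_3$-free tournament is acyclic.

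\textbf{Reduction to the maximal case.} First make $D$ maximally $C_3$-free by adding arcs. This does not decrease $\chiv$, keeps $D$ oriented and $C_3$-free, and does not increase the independence number. If the independence number drops to $1$, then $D$ is a $C_3$-free tournament, hence acyclic, and we are done. So assume $D$ is maximally $C_3$-free with $\alpha(D)=2$.

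\textbf{Choosing $s$ and $t$.} By Theorem~\ref{AlphaTwo} there is a vertex $s$ such that $D[N^+(s)]$ is a tournament. It is $C_3$-free, hence acyclic, so $\chiv(N^+(s))\le 1$. Reversing all arcs preserves being oriented, being $C_3$-free, and the independence number. Applying Theorem~\ref{AlphaTwo} to the reversed digraph therefore gives a vertex $t$ with $D[N^-(t)]$ acyclic. Thus any shortest path from $s$ to $t$ (arcs and non-arcs) is a $1$-vertex chain.

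\textbf{Bounding the parts.} In the associated \ndecomp, the parts are controlled as follows.
\begin{itemize}
\item For each arc $e_i$ of the edge chain, $N(e_i)=\emptyset$, because $D$ is $C_3$-free.
\item For each non-arc $e_i$, \Cref{lem:non-neighbor} gives $\chiv(N(e_i))\le g(1)=1$. Hence $c=1$, and each $D_i$ with $1\le i\le k$, being contained in $N(e_i)$, is acyclic.
\item Each $N^o(v)$ has independence number at most $1$ by Observation~\ref{obs:alphaMinusOne}. It is therefore a $C_3$-free tournament and acyclic, so $d=1$.
\end{itemize}

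\textbf{Applying the coloring lemma.} By Observation~\ref{obs:savingsOnB} there are no arcs from $D_0$ to $D_{k+1}$. Since $\max\{b,c,d\}=b=1$, the second part of Lemma~\ref{lem:eff_loc_to_glob_dig} yields
\[
\chiv(D)\le 4c+4d+b=9.
\]
The degenerate case where the chain has at most two vertices is covered by the final remark in that lemma's proof, which gives at most $2d+c+2+b=6$ colors.

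\textbf{Main obstacle.} The only step needing care is verifying the hypotheses of Theorem~\ref{AlphaTwo} after the preprocessing. We need $D$ oriented with independence number exactly $2$, which is handled by the case split above. We also need the symmetric statement for $t$, which follows because arc reversal preserves all relevant properties. Once $b=1$ is secured, the rest is a direct substitution into Lemma~\ref{lem:eff_loc_to_glob_dig}.
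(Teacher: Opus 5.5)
Your proposal is correct and follows the paper's own argument: specialize the proof of Theorem~\ref{thm:c3-color} to $\alpha=2$, and use Theorem~\ref{AlphaTwo} (and its arc-reversed version) to get endpoints with acyclic $N^+(s)$ and $N^-(t)$, so that $b=1$ instead of $\alpha\cdot g(1)=2$ and Lemma~\ref{lem:eff_loc_to_glob_dig} gives $4c+4d+b=9$ instead of $10$. (Since $b=c=d=1$, the first case $5c+4d=9$ of that lemma already suffices, so Observation~\ref{obs:savingsOnB} is not needed; and, like the paper, you implicitly use the standing assumption that $D$ is strongly connected, which guarantees that an $s$--$t$ chain exists.)
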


\section{Conclusions and Open Questions}

We established that we can efficiently color a $2$-colorable digraph
$D$ on $n$ vertices using at most $2\sqrt{n}$ colors.  In the case of
undirected graphs, we can color $3$-colorable graphs using
$O(n^{0.19747})$ colors~\cite{kawarabayashi2024better}.  An
interesting open question is if we can find an efficient algorithm to
color a $2$-colorable digraph with $O(n^{\frac{1}{2}-\eps})$ colors
for some $\eps>0$.  As mentioned in Section~\ref{sec:algorithms}, a
positive answer to \Cref{KMSQuestion} will lead to a better coloring
for $2$-colorable digraphs.  Another interesting question in this
direction can be to see if semidefinite programming based techniques
be used here in combination with the combinatorial arguments to obtain
better colorings.

Regarding inapproximability results on 2-colorable digraphs, which are
rather limited, the best result we know is that it is NP-hard to color
a $2$-colorable digraph (actually, a tournament) using at most $3$
colors~\cite{klingelhoefer2024coloring}. One natural question is the
following.
\begin{question}
    For a constant $k>3$, what is the computational complexity of coloring a $2$-colorable digraph using at most $k$ colors?
\end{question}
We suspect that the above question has an answer similar to the same question on $3$-uniform hypergraphs: Given a $3$-uniform hypergraph that is $2$-colorable, it is \NP-hard to color it using $k$ colors, for any constant $k$~\cite{DRS05}.


\section{Acknowledgments} We thank Pierre Charbit and Samuel Coulomb for communicating their nice proof of Theorem \ref{AlphaTwo} and allowing us to include it in our paper.  We also acknowledge the Dagstuhl Workshop ``Graph Algorithms: Cuts, Flows, and Network Design (Seminar 23422)'' where some work was initiated.

\bibliographystyle{alpha}
\bibliography{ref}
\end{document}